\def\diag{{\rm diag}}
\def\Mat{{\rm Mat}}
\def\G{{\rm G}}
\def\T{{\rm T}}
\def\SU{{\rm SU}}
\def\su{\mathfrak{su}}
\def\C{{\mathbb C}}
\def\R{{\mathbb R}}
\def\SO{{\rm SO}}
\def\OO{{\rm O}}
\def\Spin{{\rm Spin}}
\def\tr{{\rm tr}}
\def\rank{{\rm rank}}
\def\sign{{\rm sign}}
\newtheorem{theorem}{Theorem}
\newtheorem{lemma}{Lemma}
\def\Zero{{\rm O}}
\begin{document}

%\title[On constant solutions of $\SU(2)$ Yang--Mills equations in $\R^{p,q}$]{$\!\!\!$ On constant solutions of $\SU(2)$ Yang--Mills equations\\
%$\!\!\!$ with arbitrary current in pseudo-Euclidean space $\R^{p,q}$}
\title[Classification of all constant solutions of $\SU(2)$ Yang--Mills equations in $\R^{p,q}$]{Classification of all constant solutions of $\SU(2)$ Yang--Mills equations with arbitrary current in pseudo-Euclidean space $\R^{p,q}$}

\author{Dmitry Shirokov$^1$ $^2$}

\address{$^1$ HSE University, Myasnitskaya str. 20, 101000 Moscow, Russia}
\vspace{10pt}
\address{$^2$ Institute for Information Transmission Problems of Russian Academy of Sciences, Bolshoy Karetny per. 19, 127051 Moscow, Russia}
\ead{dm.shirokov@gmail.com, dshirokov@hse.ru, shirokov@iitp.ru}
\vspace{10pt}
\begin{indented}
\item[]December 2022
\end{indented}

\begin{abstract}
We present a classification and an explicit form of all constant solutions of the Yang--Mills equations with $\SU(2)$ gauge symmetry for an arbitrary constant non-Abelian current in pseudo-Euclidean space $\R^{p,q}$ of arbitrary finite dimension $n=p+q$. Using hyperbolic singular value decomposition and two-sheeted covering of orthogonal group by spin group, we solve the nontrivial system for constant solutions of the Yang--Mills equations of $3n$ cubic equations with $3n$ unknowns and $3n$ parameters in the general case. We present a new symmetry of this system of equations. All solutions in terms of the potential, strength, and invariant of the Yang--Mills field are presented. Nonconstant solutions of the Yang--Mills equations can be considered in the form of series of perturbation theory using all constant solutions as a zeroth approximation. %The results of this paper generalize our previous results for the case of arbitrary Euclidean case $\R^n$ and can be used in particle physics.
\end{abstract}
\ams{70S15}

%
% Uncomment for keywords
\vspace{2pc}
\noindent{\it Keywords}: Yang--Mills equations, hyperbolic singular value decomposition, cubic equations, constant solutions,  pseudo-Euclidean space, $\SU(2)$
% Uncomment for Submitted to journal title message
%\submitto{}
%
% Uncomment if a separate title page is required
%\maketitle

\tableofcontents
\title[Classification of all constant solutions of $\SU(2)$ Yang--Mills equations in $\R^{p,q}$]{}
\vspace{-40pt}
%
% For two-column output uncomment the next line and choose [10pt] rather than [12pt] in the \documentclass declaration
%\ioptwocol
%

\section{Introduction.}
\label{sect1}

The law of elementary particles physics is given by quantum gauge theories \cite{Fad}. We need exact solutions of classical Yang--Mills equations to describe the vacuum structure of the theory and to fully understand a quantum gauge theory. The well-known classes of solutions of the Yang--Mills equations are described in detail in classical papers \cite{WYa, tH, Pol, Bel, Wit, ADHM, deA} and various reviews \cite{Actor, Zhdanov}, etc. In this paper, we present all constant solutions of the Yang--Mills equations with $\SU(2)$ gauge symmetry for an arbitrary constant non-Abelian current in pseudo-Euclidean space $\R^{p,q}$ of arbitrary finite dimension $n=p+q$. We present a new symmetry of this system of equations (see Lemma \ref{thSym0}), which can have important physical consequences (in the continuous case too). The case of arbitrary Euclidean space $\R^n$, which is much simpler, was considered in the previous paper \cite{Shirokov1}. The Yang--Mills equations with $\SU(2)$ gauge symmetry describe electroweak interactions. The results of this paper are new and can be used to solve some problems in particle physics, in particular, to describe physical vacuum \cite{Gr, Ja}. Note the papers on constant solutions of the Yang-- Mills equations \cite{C1, C2, C3, C4, Sch, Sch2}. In these papers, the instability of these solutions is discussed, as well as their importance for describing physical vacuum\footnote{See \cite{C3}: ``... Such fluctuations will be unstable, and could give rise to a disordered vacuum. Hence, if we look at the physical ground state on a large space and time scale, the ground state may appear as an ocean of turbulence...'' and \cite{C2}: ``... Motivated by the perturbative calculation of the effective Lagrangian, speculations have been made that the vacuum is characterized by a nonzero expectation value of the operator $F_{\mu\nu}F^{\mu\nu}$. If the vacuum is assumed to be characterized by a constant magnetic field or by some modification of this configuration, interesting physical results can be
obtained, including a bound on the MIT bag constant...''}.

In this paper, we use the method of hyperbolic singular value decomposition (HSVD) to solve a nontrivial system of a special form (for constant solutions of the Yang--Mills equations) of $3n$ cubic equations with $3n$ unknowns and $3n$ parameters in the general case. The method of ordinary singular value decomposition (SVD, \cite{SVD1, SVD2}) is rather standard and is widely used in different applications (see applications in the Yang--Mills context in \cite{Simonov}). Whereas the method of hyperbolic singular value decomposition (HSVD) is  not standard. It was proposed in 1991 \cite{Bojan, Bojan2} and is used in signal and image processing, engineering, and computer science. The new version of the HSVD without the use of hyperexchange matrices is presented in our previous paper \cite{Shirokovhsvd}. This new version of HSVD naturally incudes the ordinary SVD. We use the HSVD in this paper for a physical problem.

Note that the cases where either $p$ or $q$ is equal to 1 are most important since they corresponds to the $n$-dimensional Minkowski space. However, we consider the case of arbitrary $p$ and $q$ for the sake of completeness. All possible constant solutions of the $\SU(2)$ Yang--Mills equations are presented in Tables 2, 3, and 4 depending on $p$ and $q$ and other parameters (hyperbolic singular values) of the matrix of current. Note that we need another technique to solve the same problem in the case of the Lie group $\SU(3)$, which is important for describing strong interactions. Our method, which works for the case of $\SU(2)$, essentially uses the two-sheeted covering of the orthogonal group $\SO(3)$ by the spin group $\Spin(3)\simeq \SU(2)$. We synchronize gauge transformation and pseudo-orthogonal transformation of coordinates such that the matrix of potential of the Yang--Mills field has the canonical form with respect to the HSVD. We prove that the matrix of non-Abelian current should have some modified canonical form. Choosing a specific system of coordinates and a specific gauge fixing for each constant current, we obtain all constant solutions of the $\SU(2)$ Yang--Mills equations. The methods of this paper can be modified and used to solve the same problems on curved manifolds.

The results of the this paper are consistent with the results \cite{Sch, Sch2} for the zero current and an arbitrary compact Lie algebra. In particular, in \cite{Sch2} it is proved that in the case of the zero current $J=0$, the strength of the Yang--Mills field is zero $F=0$ for all constant potentials $A$ satisfying the Yang--Mills equations in Euclidean and Lorentzian cases. We verified this fact for all constant solutions of the $\SU(2)$ Yang--Mills equations presented in this paper. Also we give an example with the explicit form of solutions with nonzero strength $F\neq 0$ and the zero current $J=0$ in all other cases $p\geq 2$ and $q\geq 2$ (see Table 3: the solutions (\ref{Ad000a}) and (\ref{Fd000a}) for $p\geq 2$ and $q\geq 2$; the solutions (\ref{Ad000b}) and (\ref{Fd000b}) for $p\geq 3$ and $q\geq 3$). The advantage of the current work is that an arbitrary (nonzero) current is considered.

%The case of arbitrary pseudo-Euclidean space $\R^{p,q}$ is more complicated.

\section{The main ideas.}
\label{sect2}

Let us consider pseudo-Euclidean space $\R^{p,q}$ of arbitrary finite dimension $n=p+q$, $p, q\geq 1$, $n\geq 2$. We denote Cartesian coordinates by $x^\mu$, $\mu=1, \ldots, n$ and partial derivatives by $\partial_\mu=\partial/{\partial x^\mu}$. The metric tensor of $\R^{p,q}$ is given by the diagonal matrix
\begin{eqnarray}
\eta=\diag(\underbrace{1,\ldots,1}_p, \underbrace{-1, \ldots, -1}_q)=(\eta_{\mu\nu})=(\eta^{\mu\nu}),\qquad p+q=n.\label{eta}
\end{eqnarray}
We can raise or lower indices of components of tensor fields with the aid of the metric tensor. For example, $A_\nu=\eta_{\nu\mu}A^\mu$.

Let us consider the Lie group $\SU(2)=\{S\in\Mat(2,\C):\,  S^\dagger S =I_2,\, \det S=1\}$ and the corresponding Lie algebra
$\su(2)=\{S\in\Mat(2,\C):\, S^\dagger=-S,\, \tr S=0\}$.

Consider the $\SU(2)$ Yang--Mills equations
\begin{eqnarray}
&&\partial_\mu A_\nu-\partial_\nu A_\mu-[A_\mu, A_\nu]=:F_{\mu\nu},\label{YM1}\\
&&\partial_\mu F^{\mu\nu} -[A_\mu, F^{\mu\nu}]=J^\nu,\label{YM2}
\end{eqnarray}
where $A^\mu: \R^{p,q} \to \su(2)$ is the potential of the Yang--Mills field, $J^\nu: \R^{p, q}\to \su(2)$ is the non-Abelian current. The tensor field $F_{\mu\nu}=-F_{\nu\mu}: \R^{p,q} \to \su(2)$ with the definition (\ref{YM1}) is the strength of the Yang--Mills field.

Note that the Yang--Mills equations (\ref{YM2}) can be obtained in the conventional way on the basis of the variational principle. Consider the action $S=\int L dx$ for the Lagrangian 
\begin{eqnarray}
L=-\frac{1}{4}\tr(F^2),\qquad F^2=F_{\mu\nu}F^{\mu\nu},\label{lagr}
\end{eqnarray}
where $F_{\mu\nu}$ are the components of the curvature $2$-form with respect to the connection $A_\mu$, i.e., they are coupled by definition by (\ref{YM1}). By varying the action, we obtain (\ref{YM2}) with a zero current $J^\nu=0$. The current $J^\nu\neq 0$ in (\ref{YM2}) appears when terms related to other (scalar or spinor) fields are added to Lagrangian (\ref{lagr}).

We can substitute (\ref{YM1}) into (\ref{YM2}) and obtain
\begin{eqnarray}
\partial_\mu(\partial^\mu A^\nu-\partial^\nu A^\mu- [A^\mu,A^\nu])- [A_\mu,\partial^\mu A^\nu-\partial^\nu A^\mu-[A^\mu,A^\nu]]=J^\nu.\label{YM4}
\end{eqnarray}
The current $J^\nu$ satisfies the following non-Abelian conservation law
\begin{eqnarray}
\partial _\nu J^\nu-[A_\nu, J^\nu]=0.
\end{eqnarray}
The system (\ref{YM1}) - (\ref{YM2}) is gauge invariant with respect to the following transformation
\begin{eqnarray}
&&A^\mu \to  S^{-1}A^\mu S-S^{-1}\partial_\mu S,\qquad F_{\mu\nu}\to S^{-1}F_{\mu\nu}S,\label{gauge}\\
&&J^\mu \to S^{-1} J^\mu S,\qquad S:\R^{p,q}\to\SU(2).\nonumber
\end{eqnarray}
We get the following system of algebraic equations for the constant solutions of the $\SU(2)$ Yang--Mills equation from (\ref{YM4}):
\begin{eqnarray}
[A_\mu,[A^\mu, A^\nu]]=J^\nu,\qquad \nu=1, \ldots, n,\label{YMconst}
\end{eqnarray}
where $A^\nu$ and $J^\nu$ are components of tensor fields with values in the Lie algebra $\su(2)$. One suggests that $A^\mu$ are unknown and $J^\nu$ is known.

We take the basis $\tau^a=\frac{1}{2i}\sigma^a$, $a=1, 2,3$ of $\su(2)$, where $\sigma^a$ are the Pauli matrices with $[\tau^a, \tau^b]=\epsilon^{ab}_{\,\,\,\,\,c}\tau^c$, where $\epsilon^{ab}_{\,\,\,\,\,c}$ is the antisymmetric Levi-Civita symbol, and represent the potential and the current in the form
\begin{eqnarray}
A^\mu=A^\mu_{\,\,a} \tau^a,\qquad J^\mu=J^\mu_{\,\,a} \tau^a,\qquad A^\mu_{\,\, a}, J^\mu_{\,\, a}\in\R.\nonumber
\end{eqnarray}
We obtain the following system of $3n$ equations ($k=1, 2, 3$, $\nu=1, 2, \ldots, n$) for $3n$ unknowns $A^\nu_{\,\,k}$ and $3n$ parameters $J^\nu_{\,\, k}$
\begin{eqnarray}
&&A_{\mu c} A^\mu_{\,\,a} A^\nu_{\,\,b}\epsilon^{ab}_{\,\,\,\,\,d}\epsilon^{cd}_{\,\,\,\,\,k}=J^\nu_{\,\,k},\qquad \nu=1, \ldots, n,\qquad k=1, 2, 3.\label{eq}
\end{eqnarray}
In this paper, we present a general solution of the nontrivial system of cubic equations (\ref{eq}) in the general case. We can consider (\ref{eq}) as the system of equations for the components of two real matrices $A_{n \times 3}=(A^\nu_{\,\, k})$ and $J_{n \times 3}=(J^\nu_{\,\, k})$. The case of arbitrary Euclidean space has been considered in \cite{Shirokov1}. Now we are interested in the case of arbitrary pseudo-Euclidean space $\R^{p,q}$, $p\geq 1$, $q\geq 1$.

Covariantly constant solutions of the Yang--Mills equations are discussed in \cite{YMM, PFE, YMSh}, constant solutions of the Yang--Mills--Proca equations are discussed in \cite{ProcaYM} and in \cite{YMP, hforms} using Clifford algebra formalism.

Let us formulate the HSVD from \cite{Shirokovhsvd} for the case of real matrices $A\in\Mat_{n\times N}(\R)$. Denote any zero block of the matrix by $\Zero$.

\begin{theorem} \cite{Shirokovhsvd}\label{thHSVD}
Assume (\ref{eta}), $p+q=n$. For an arbitrary matrix $A\in\Mat_{n\times N}(\R)$, there exist
matrices $R\in\OO(N)$ and $L\in\OO(p,q)$
such that
\begin{eqnarray}
L^\T A R=\Sigma^A,\qquad
\Sigma^A=\left(
      \begin{array}{cccc}
        X_{x} & \Zero & \Zero & \Zero \\
        \Zero & \Zero & I_d & \Zero \\
        \Zero & \Zero & \Zero & \Zero \\
        \hline
        \Zero & Y_y & \Zero & \Zero \\
        \Zero & \Zero & I_d & \Zero \\
        \Zero & \Zero & \Zero & \Zero\\
      \end{array}
    \right)\!\!\!\!\!\!\!\!\!\begin{array}{c}
      \left.
      \begin{array}{c}
          \\
         \\
          \\
      \end{array}
    \right\}p \\
      \left.
      \begin{array}{c}
          \\
         \\
          \\
      \end{array}
    \right\}q
    \end{array}\in\Mat_{n\times N}(\R),\label{DD2}
\end{eqnarray}
where the first block of the matrix $\Sigma^A$ has $p$ rows and the second block has $q$ rows, $X_{x}$ and $Y_y$ are diagonal matrices of corresponding dimensions $x$ and $y$ with all positive uniquely determined diagonal elements (up to a permutation), $I_d$ is the identity matrix of dimension $d$.

Moreover, choosing $R$, one can swap columns of the matrix $\Sigma^A$. Choosing $L$, one can swap rows in individual blocks but not across blocks. Thus we can always arrange diagonal elements of the matrices $X_{x}$ and $Y_y$ in decreasing order.

Here we have
$$d=\rank (A)-\rank (A^\T \eta A),\qquad x+y=\rank (A^\T \eta A),$$
$x$ is the number of positive eigenvalues of the matrix $A^\T \eta A$, $y$ is the number of negative eigenvalues of the matrix $A^\T \eta A$.
\end{theorem}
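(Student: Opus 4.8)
The plan is to reduce the hyperbolic SVD to the ordinary, diagonalizing-a-symmetric-matrix picture in the positive-definite directions, and to handle the null directions separately by a direct normal-form argument. First I would form the symmetric $N\times N$ matrix $S:=A^{\T}\eta A$. Since $\eta$ has signature $(p,q)$, the quadratic form $v\mapsto v^{\T}Sv=(Av)^{\T}\eta(Av)$ can be positive, negative, or zero, and its numbers of positive and negative eigenvalues are exactly the invariants called $x$ and $y$ in the statement; set $r:=\rank(A)$ and $d:=r-\rank S$, which is well-defined and nonnegative because $\ker A\subseteq\ker S$, so $\rank S\le r$, while on the other hand $\rank S\ge r-\dim(\ker S/\ker A)$ and one checks $\dim(\ker S/\ker A)\le d$ by a null-space count. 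Diagonalize $S$ by an orthogonal $R\in\OO(N)$ so that $R^{\T}SR=\diag(D_{+},-D_{-},0,\dots,0)$ with $D_{\pm}>0$ of sizes $x$ and $y$; this already pins down the positive diagonal entries of $X_x$ and $Y_y$ as the square roots of the nonzero $|$eigenvalues$|$ of $S$, hence their uniqueness up to permutation, and the decreasing-order normalization is achieved by permuting within the blocks of $R$.

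Next I would build $L$. Replacing $A$ by $AR$, I may assume $S=A^{\T}\eta A$ is already diagonal as above. The columns of $A$ then split into: $x$ columns that are $\eta$-orthogonal with positive $\eta$-square, $y$ columns with negative $\eta$-square, and $N-x-y$ columns $w$ with $w^{\T}\eta w=0$; among the latter, $d$ are linearly independent modulo the span of the first $x+y$ (these will feed the two $I_d$ blocks) and the rest lie in the span and can be cleared by further column operations inside $R$. Normalizing the $\eta$-nonisotropic columns to $\eta$-unit length scales the diagonal entries to $1$, which is why $X_x,Y_y$ appear with explicit entries $\sqrt{|\lambda|}$ rather than $1$ — wait, more carefully: I rescale by the diagonal matrix of $\sqrt{|\lambda_i|}$ absorbed into $R$, leaving $\eta$-orthonormal vectors; then the spacelike unit vectors, the timelike unit vectors, and a suitably chosen basis of the isotropic part together form a frame that I complete to a full $\eta$-orthonormal basis of $\R^{p,q}$. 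Writing the change-of-basis matrix from the standard basis to this frame gives $L\in\OO(p,q)$, and by construction $L^{\T}(AR)$ has the block shape $\Sigma^A$ displayed in the theorem, with the two copies of $I_d$ coming from pairing each isotropic generating vector with one spacelike and one timelike null combination in the standard split — this is the classical trick that a hyperbolic plane has an orthonormal basis in which a null vector has coordinates $(1,1)$.

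The main obstacle, and the step I would spend the most care on, is the isotropic/degenerate block: showing that the part of $\mathrm{col}(A)$ on which the form $w^{\T}\eta w$ degenerates has dimension exactly $d=\rank A-\rank(A^{\T}\eta A)$ and can be normalized to the two $I_d$ strips simultaneously in the $p$- and $q$-row blocks without disturbing the already-normalized $X_x,Y_y$ blocks. Concretely one must produce, inside $\R^{p,q}$, a Witt-type decomposition $\R^{p,q}=H_1\oplus\cdots\oplus H_d\oplus \R^{p-d,q-d}$ into hyperbolic planes plus a nondegenerate complement adapted to the image of $A$, and then verify the bookkeeping $p\ge d$, $q\ge d$, $x\le p-d$, $y\le q-d$, which is exactly what makes the block sizes in $\Sigma^A$ consistent. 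The freedom statements (swapping columns via $R$, swapping rows within but not across the $p$- and $q$-blocks via $L$, hence the decreasing-order arrangement) then follow because permutation matrices within a block lie in $\OO(p)\times\OO(q)\subset\OO(p,q)$, while permutation matrices of columns lie in $\OO(N)$; and the invariance of $x$, $y$, $d$ under all these choices is immediate from their description as eigenvalue counts and ranks of $A$ and $A^{\T}\eta A$, which do not depend on $R$ or $L$. This is essentially the content of the cited reference \cite{Shirokovhsvd}, and the write-up would either reproduce that argument or simply invoke it.
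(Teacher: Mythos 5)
The paper does not prove Theorem \ref{thHSVD} at all: it is imported verbatim, with a citation, from \cite{Shirokovhsvd}, so there is no in-paper argument to compare yours against line by line. Your sketch follows the same standard route as that reference: diagonalize $A^\T\eta A$ by an orthogonal $R\in\OO(N)$, sort the columns of $AR$ by the sign of their $\eta$-square, identify the degenerate part of dimension $d=\rank(A)-\rank(A^\T\eta A)$, and complete the isotropic columns to mutually orthogonal hyperbolic planes (Witt-type decomposition) so that each null column acquires coordinates $(1,1)$ spread across the $p$- and $q$-blocks, producing the two paired $I_d$ strips; the bookkeeping $x+d\le p$, $y+d\le q$ that you flag is exactly the bound on the dimension of a totally isotropic subspace of the orthogonal complement, which has signature $(p-x,q-y)$. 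Two local slips, neither fatal: the factors $\sqrt{|\lambda_i|}$ cannot be ``absorbed into $R$'' (a nontrivial diagonal scaling takes you out of $\OO(N)$); rather, the columns of the frame matrix in $\OO(p,q)$ are the $\eta$-normalized vectors, and the factors $\sqrt{|\lambda_i|}$ then reappear automatically as the diagonal entries of $X_x$ and $Y_y$. Likewise, the superfluous isotropic columns do not ``lie in the span of the first $x+y$'' columns (an isotropic vector $\eta$-orthogonal to a nondegenerate span cannot lie in that span); they lie in the $d$-dimensional space $A(\ker(A^\T\eta A))$, and they are made zero by choosing the orthonormal basis of $\ker(A^\T\eta A)$ adapted to $\ker A\subseteq\ker(A^\T\eta A)$, which is the admissible column operation inside $R$ that you allude to.
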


Let us call $\Sigma^A$ (\ref{DD2}), where all diagonal elements of the matrices $X_{x}$ and $Y_y$ are positive and in decreasing order, the \emph{canonical form} of the matrix $A\in\Mat_{n\times N}(\R)$. The canonical form is uniquely determined for any matrix $A\in\Mat_{n\times N}(\R)$, the corresponding matrices $L$ and $R$ are not uniquely determined. One can find the algorithm for computing the HSVD in \cite{Shirokovhsvd}. For arbitrary matrix $A\in\Mat_{n\times N}(\R)$, we can always find the matrices $L$, $R$, and $\Sigma^A$ from the formulation of Theorem \ref{thHSVD}.

\begin{lemma}\label{lemma1} The system of equations (\ref{eq}) is invariant under the transformation
$$A \to \acute{A}=AP,\qquad J \to \acute{J}=J P,\qquad P\in\SO(3)$$
and under the transformation
$$A \to \hat{A}=QA,\qquad J \to \acute{J}=Q J,\qquad Q\in\OO(p,q).$$
\end{lemma}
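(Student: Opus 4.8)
The plan is to verify each invariance separately by direct substitution into the index form (\ref{eq}), exploiting the fact that the two transformations act on different "slots" of the tensors. For the first transformation $A\to AP$ with $P\in\SO(3)$, the matrix $P$ mixes only the internal (Lie-algebra) index $k$, so I would rewrite (\ref{eq}) invariantly as $[A_\mu,[A^\mu,A^\nu]]=J^\nu$ in $\su(2)$ and observe that right multiplication by $P\in\SO(3)$ corresponds, under the identification of $\su(2)$ with $\R^3$ via the basis $\tau^a$, to applying the adjoint action of the corresponding element of $\SU(2)$ (using the two-sheeted covering $\SU(2)\to\SO(3)$, which the paper invokes). Since the bracket $[\cdot,\cdot]$ is equivariant under any Lie-algebra automorphism, and in particular under $\mathrm{Ad}_g$ for $g\in\SU(2)$, both sides transform the same way, giving $[\acute A_\mu,[\acute A^\mu,\acute A^\nu]]=\acute J^\nu$. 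Alternatively, and perhaps cleaner for the paper's index-based style, I would use the classical identity that for $P=\|P^a_{\ b}\|\in\SO(3)$ one has $P^a_{\ a'}P^b_{\ b'}P^c_{\ c'}\epsilon^{a'b'c'}=(\det P)\,\epsilon^{abc}=\epsilon^{abc}$, together with orthogonality $P^c_{\ c'}P_{c d'}=\delta_{c'd'}$ lowered appropriately, to show the two $\epsilon$-contractions in (\ref{eq}) absorb the three factors of $P$ coming from $A_{\mu c}$, $A^\mu_{\ a}$, $A^\nu_{\ b}$ into a single factor of $P$ acting on the free index, which matches $\acute J^\nu_{\ k}=J^\nu_{\ d}P^d_{\ k}$.

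For the second transformation $A\to QA$ with $Q\in\OO(p,q)$, the matrix $Q$ acts only on the spacetime index $\mu$ (and $\nu$), leaving the internal index untouched. Here the key point is that the contraction $A_{\mu c}A^\mu_{\ a}$ over the spacetime index is performed with the metric $\eta$, i.e.\ $A_{\mu c}=\eta_{\mu\rho}A^\rho_{\ c}$, and by definition $Q\in\OO(p,q)$ satisfies $Q^\T\eta Q=\eta$. Writing $\hat A^\nu_{\ k}=Q^\nu_{\ \rho}A^\rho_{\ k}$ and substituting into the left-hand side of (\ref{eq}), the two factors carrying a lower/upper $\mu$ produce $Q^\rho_{\ \mu}\eta_{\rho\sigma}Q^\sigma_{\ \lambda}=\eta_{\mu\lambda}$, so the summation over the contracted spacetime index collapses exactly as before, while the remaining free factor $A^\nu_{\ b}\to Q^\nu_{\ \lambda}A^\lambda_{\ b}$ reproduces $\hat J^\nu_{\ k}=Q^\nu_{\ \lambda}J^\lambda_{\ k}$. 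Thus the system is covariant, and in particular its solution set is mapped to itself.

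I expect no genuine obstacle here — the lemma is essentially a bookkeeping statement that (\ref{eq}) is built only from $\SU(2)$-invariant and $\OO(p,q)$-invariant operations (the Lie bracket, the Killing-type form on $\su(2)$ encoded by $\epsilon^{ab}_{\ \ c}\epsilon^{cd}_{\ \ k}$, and the metric contraction $\eta_{\mu\nu}$). The only point requiring a little care is sign/orientation: the first invariance is stated for $P\in\SO(3)$ rather than $\OO(3)$ precisely because the argument uses $\det P=+1$ when contracting the three $\epsilon$-symbols (an orientation-reversing $P$ would flip the sign of the Levi-Civita tensor and break the identity), whereas the second invariance holds for the full group $\OO(p,q)$ because only the quadratic form $\eta$, not an orientation, enters the spacetime contraction. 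I would flag this asymmetry explicitly so that its use in the subsequent classification — where one wants to reduce $A$ to the canonical HSVD form of Theorem \ref{thHSVD} using $L\in\OO(p,q)$ on the left and $R\in\OO(N)$ on the right, but must then reconcile $R\in\OO(3)$ with the $\SO(3)$ restriction here — is transparent.
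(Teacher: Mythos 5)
Your proof is correct and follows essentially the same route as the paper: the $\SO(3)$ part via the adjoint action of $\SU(2)$ and the two-sheeted covering (which is exactly the paper's gauge-invariance argument), and the $\OO(p,q)$ part via $Q^\T\eta Q=\eta$, which is just the paper's appeal to orthogonal coordinate invariance written out in index form. One small caveat: your closing remark that an orientation-reversing $P$ would ``flip the sign of the Levi-Civita tensor and break the identity'' is not accurate, since the Levi-Civita symbols enter (\ref{eq}) in a pair, so the factor $(\det P)^2=1$ cancels and the system is in fact invariant under all of $\OO(3)$ (for instance $P=-I_3$ works because the left-hand side is cubic in $A$); the restriction to $\SO(3)$ in the lemma reflects that only rotations are realized by $\SU(2)$ gauge conjugation, and the paper reconciles this with the $\OO(3)$ factor from the HSVD by changing the sign of both $P$ and $Q$ when $\det P=-1$.
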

\begin{proof} The proof is similar to the proof of Lemma 1 in \cite{Shirokov1}. The system (\ref{YMconst}) is invariant under the transformation
\begin{eqnarray}
\acute A_\mu = S^{-1}A_\mu S,\qquad \acute J^{\nu} = S^{-1}J^{\nu}S,\qquad S\in\G=\SU(2),\nonumber
\end{eqnarray}
because of the gauge invariance (\ref{gauge}) and the fact that an element $S\in\SU(2)$ does not depend on $x\in\R^{p,q}$ now.

Let us use the theorem on two-sheeted covering of the orthogonal group $\SO(3)$ by the spin group $\Spin(3)\simeq\SU(2)$. For an arbitrary matrix $P=(p^a_b)\in\SO(3)$, there exist two matrices $\pm S\in\SU(2)$ such that
$$S^{-1}\tau^a S=p^a_b \tau^b.$$
We conclude that the system (\ref{eq}) is invariant under the transformation
\begin{eqnarray}
&&\acute{A^\mu}=S^{-1}A^\mu_{\,\,a} \tau^a S=A^\mu_{\,\,a} S^{-1}\tau^a S=A^\mu_{\,\,a} p^a_b \tau^b=\acute{A^\mu_{\,\,b}} \tau^b,\qquad \acute{A^\mu_{\,\,b}}=A^\mu_{\,\,a} p^a_b,\nonumber\\
&&\acute{J^\mu}=S^{-1}J^\mu_{\,\,a} \tau^a S=J^\mu_{\,\,a} S^{-1}\tau^a S=J^\mu_{\,\,a} p^a_b \tau^b=\acute{J^\mu_{\,\,b}} \tau^b,\qquad \acute{J^\mu_{\,\,b}}=J^\mu_{\,\,a} p^a_b.\nonumber
\end{eqnarray}

The Yang--Mills equations are invariant under the pseudo-orthogonal transformations of coordinates. Namely, let us consider the transformation $x^\mu \to \hat{x^\mu}=q^\mu_\nu x^\nu$, where $Q=(q^\mu_\nu)\in\OO(p,q).$ The system (\ref{eq}) is invariant under the transformation
\begin{eqnarray}
&&\hat{A^\nu}=q^\nu_\mu A^\mu=q^\nu_\mu A^\mu_{\,\,a} \tau^a=\hat{A^\nu_{\,\,a}} \tau^a,\qquad \hat{A^\nu_{\,\,a}}=q^\nu_\mu A^\mu_{\,\,a},\nonumber\\
&&\hat{J^\nu}=q^\nu_\mu J^\mu=q^\nu_\mu J^\mu_{\,\,a} \tau^a=\hat{J^\nu_{\,\,a}} \tau^a,\qquad \hat{J^\nu_{\,\,a}}=q^\nu_\mu J^\mu_{\,\,a}.\nonumber
\end{eqnarray}
The lemma is proved. $\blacksquare$
\end{proof}

Combining gauge and pseudo-orthogonal transformations, we conclude that the system (\ref{eq}) is invariant under the transformation
\begin{eqnarray}
&&A^\nu_b \to  \acute{\hat{A^\nu_{\,\,b}}}=q^\nu_\mu A^\mu_{\,\,a} p^a_b,\quad A_{n\times 3}\to \acute{\hat{A}}_{n \times 3}=Q_{n\times n}A_{n\times 3}P_{3\times3},\label{tr2}\\
&&J^\nu_b \to  \acute{\hat{J^\nu_{\,\,b}}}=q^\nu_\mu J^\mu_{\,\,a} p^a_b,\quad J_{n\times 3}\to \acute{\hat{J}}_{n \times 3}=Q_{n\times n}J_{n\times 3}P_{3\times3}\nonumber
\end{eqnarray}
for any $P\in\SO(3)$ and $Q\in\OO(p, q)$.

\begin{theorem}\label{th1} Assume that $A=(A^\nu_{\,\,k})$, $J=(J^\nu_{\,\, k})$ satisfy the system of $3n$ cubic equations (\ref{eq}). Then:

1) There exist matrices $P\in\SO(3)$ and $Q\in\OO(p,q)$ such that the matrix $QAP$ is in the canonical form (with parameters $x_A$, $y_A$, $d_A$)\footnote{Note that this matrix and subsequent matrices have 3 columns. This means that some blocks of matrices have size zero.
}
\begin{eqnarray*}
\Sigma^A=\acute{\hat{A}}_{n \times 3}=\left(
      \begin{array}{cccc}
        X_{x_A} & \Zero & \Zero & \Zero \\
        \Zero & \Zero & I_{d_A} & \Zero \\
        \Zero & \Zero & \Zero & \Zero \\
        \hline
        \Zero & Y_{y_A} & \Zero & \Zero \\
        \Zero & \Zero & I_{d_A} & \Zero \\
        \Zero & \Zero & \Zero & \Zero\\
      \end{array}
    \right)\!\!\!\!\!\!\!\!\!\begin{array}{c}
      \left.
      \begin{array}{c}
          \\
         \\
          \\
      \end{array}
    \right\}p \\
      \left.
      \begin{array}{c}
          \\
         \\
          \\
      \end{array}
    \right\}q
    \end{array}.
\end{eqnarray*}
For all such matrices $P$ and $Q$, the matrix $QJP$ has the following form
\begin{eqnarray}
\Phi^J:=\acute{\hat{J}}_{n \times 3}=\left(
      \begin{array}{cccc}
        Z_{x_A} & \Zero & \Zero & \Zero \\
        \Zero & \Zero & \alpha I_{d_A} & \Zero \\
        \Zero & \Zero & \Zero & \Zero \\
        \hline
        \Zero & W_{y_A} & \Zero & \Zero \\
        \Zero & \Zero & \alpha I_{d_A} & \Zero \\
        \Zero & \Zero & \Zero & \Zero\\
      \end{array}
    \right)\!\!\!\!\!\!\!\!\!\begin{array}{c}
      \left.
      \begin{array}{c}
          \\
         \\
          \\
      \end{array}
    \right\}p \\
      \left.
      \begin{array}{c}
          \\
         \\
          \\
      \end{array}
    \right\}q
    \end{array},\label{xx4}
\end{eqnarray}
where elements of the diagonal matrices $Z$ and $W$ are real numbers (can be zero), $\alpha\in\R$ (can be zero).

2) For the parameters of the matrices $A$ and $J$, we have:
\begin{eqnarray}
&&x_J \leq x_A,\nonumber\\
&&y_J \leq y_A,\nonumber\\
&&d_J=d_A > 0 \qquad\mbox{or}\qquad d_J=0,\, d_A\geq 0.\nonumber
\end{eqnarray}

3) There exist matrices $P\in\SO(3)$ and $Q\in\OO(p,q)$ such that the matrix $QJP$ is in the canonical form (with parameters $x_J$, $y_J$, $d_J$)
\begin{eqnarray*}
\Sigma^J=\left(
      \begin{array}{cccc}
        X_{x_J} & \Zero & \Zero & \Zero \\
        \Zero & \Zero & I_{d_J} &\Zero \\
        \Zero & \Zero & \Zero & \Zero \\
        \hline
        \Zero & Y_{y_J} & \Zero & \Zero \\
        \Zero & \Zero & I_{d_J} & \Zero \\
        \Zero & \Zero &\Zero & \Zero\\
      \end{array}
    \right)\!\!\!\!\!\!\!\!\!\begin{array}{c}
      \left.
      \begin{array}{c}
          \\
         \\
          \\
      \end{array}
    \right\}p \\
      \left.
      \begin{array}{c}
          \\
         \\
          \\
      \end{array}
    \right\}q
    \end{array},
\end{eqnarray*}
and $QAP$ has the following form
\begin{eqnarray}\Psi^A:=\left(
      \begin{array}{ccccccc}
        K_{x_J} & \Zero & \Zero & \Zero & \Zero & \Zero & \Zero \\
        \Zero & \Zero & \beta I_{d_J} & \Zero & \Zero & \Zero & \Zero \\
        \Zero & \Zero & \Zero & L_{x_A-x_J} & \Zero & \Zero & \Zero \\
        \Zero & \Zero & \Zero & \Zero & \Zero & \Zero & I_{d_A-d_J} \\
        \Zero & \Zero & \Zero & \Zero & \Zero & \Zero & \Zero \\
        \hline
        \Zero & M_{y_J} & \Zero & \Zero & \Zero & \Zero & \Zero \\
        \Zero & \Zero & \beta I_{d_J} & \Zero & \Zero & \Zero & \Zero \\
        \Zero & \Zero & \Zero & \Zero & N_{y_A-y_J} & \Zero & \Zero \\
        \Zero & \Zero & \Zero & \Zero & \Zero & \Zero & I_{d_A-d_J}  \\
        \Zero & \Zero & \Zero & \Zero & \Zero & \Zero & \Zero \\
      \end{array}
    \right)\!\!\!\!\!\!\!\!\!\begin{array}{c}
      \left.
      \begin{array}{c}
          \\
         \\
          \\
          \\
          \\
      \end{array}
    \right\}p \\
      \left.
      \begin{array}{c}
          \\
         \\
          \\
          \\
          \\
      \end{array}
    \right\}q
    \end{array},\label{xx5}
\end{eqnarray}
where $\beta\in \R\setminus\{0\}$; elements of the diagonal matrices $K$, $L$, $M$, $N$ are arbitrary nonzero real numbers.
%$$\Psi^A:=\left(
%      \begin{array}{cccccc}
%        K_{x_J} & \Zero & \Zero & \Zero & \Zero & \Zero \\
%        \Zero & \Zero & \beta I_{d_A} & \Zero & \Zero & \Zero\\
%        \Zero & \Zero & \Zero & L_{x_A-x_J} & \Zero & \Zero \\
%        \Zero & \Zero & \Zero & \Zero & \Zero & \Zero\\
%        \hline
%        \Zero & M_{y_J} & \Zero & \Zero & \Zero & \Zero\\
%        \Zero & \Zero & \beta I_{d_A} & \Zero & \Zero & \Zero\\
%        \Zero & \Zero & \Zero & \Zero & N_{y_A-y_J} & \Zero\\
%        \Zero & \Zero & \Zero & \Zero & \Zero & \Zero\\
%      \end{array}
%    \right),
%$$
%where $\beta\in \R\setminus\{0\}$ in the case $d_J=d_A>0$ and $\beta=1$ in the case $d_J=0$, $d_A\geq0$; elements of the matrices $K$, $L$, $M$, $N$ are arbitrary nonzero numbers.
\end{theorem}

\begin{proof} 1) Assume that the system (\ref{eq}) has some solution $A^\mu_{\,\,a}$, $J^\mu_{\,\,a}$. Let us synchronize gauge transformation and pseudo-orthogonal transformation such that $A=(A^\mu_{\,\, a})$ is in the canonical form (\ref{DD2}). Namely, we take $P\in\SO(3)$ and $Q\in\OO(p,q)$ such that $QAP=\Sigma^A$. Note that we can always find the matrix $P\in\SO(3)$ from the special orthogonal group in the HSVD. If it has the determinant $-1$, then we can change the sign of all elements of the matrices $P$ and $Q$.

Note that the matrix $\Sigma^A$ has at most one nonzero entry in each row. We must take $a=c$ in (\ref{eq}) and $b=k\neq a=c$ for Levi-Civita symbol to obtain nonzero summands. The product of two Levi-Civita symbols in (\ref{eq}) equals $-1$. We obtain the following expression
$$-\acute{\hat{A}}^\nu_{\,\, k}\sum_{a=1; \neq k}^3 \, \sum_{\mu=1}^n \eta^{\mu\mu} (\acute{\hat{A}}^\mu_{\,\, a})^2.$$
Note that for each fixed $k$, we have the sum of $\pm$ squares of all elements of the matrix $\acute{\hat{A}}$ but not in the $k$-th column. Summands from the identity blocks $I_{d}$ are reduced because of different signs. Thus we have no more than two summands in this sum for each pair $\nu$ and $k$.

We see that if $\acute{\hat{A}}^\nu_{\,\, k}=0$, then $\acute{\hat{J}}^\nu_{\,\, k}=0$. If we have two elements $\acute{\hat{A}}^{\nu_i}_{\,\,k}=1$ for some $i=1, 2$, then the corresponding elements $\acute{\hat{J}}^{\nu_i}_{\,\, k}$, $i=1, 2$ must be the same (not necessarily equal to $1$). That is why the identical diagonal blocks $\alpha I_d$, $\alpha\in\R$ are allowed for the matrix $QJP$ instead of the blocks $I_d$. Finally, we obtain the special form (\ref{xx4}) of the matrix $QJP$ and the system of no more than three different equations
\begin{eqnarray}
-\acute{\hat{A}}^\nu_{\,\, k}\sum_{a=1; \neq k}^3 \, \sum_{\mu=1}^n \eta^{\mu\mu} (\acute{\hat{A}}^\mu_{\,\, a})^2=\acute{\hat{J}}^\nu_{\,\, k}.\nonumber
\end{eqnarray}

2) - 3) We can change the matrices $Q\in\OO(p,q)$ and $P\in\SO(3)$ such that the new matrix $QJP$ will be in the canonical form because:
\begin{enumerate}
  \item we can change the order of columns of the matrix $QJP$ by multiplying $P$ by the  $\pm$ permutation matrix (see Theorem \ref{thHSVD} and \cite{Shirokovhsvd} for details), which belongs to the group $\SO(3)$;
  \item we can change the order of rows in each of the two blocks of the matrix $QJP$ by multiplying $Q$ by the matrix of the form
$$
\left(
      \begin{array}{c|c}
        S_1 & \Zero \\ \hline
        \Zero & S_2  \\
      \end{array}
    \right)\in\OO(p,q),
$$
where $S_1$ and $S_2$ are permutation matrices of the dimensions $p$ and $q$ respectively (see Theorem \ref{thHSVD} and \cite{Shirokovhsvd});
  \item we can change signs of some elements of the matrices $Z$ and $W$ by multiplying the matrix $Q\in\OO(p,q)$ by the diagonal matrix with $\pm 1$ on the diagonal, which is also belongs to the group $\OO(p,q)$;
  \item the blocks $\alpha I_d$ (in the cases $\alpha \neq 0$ and $\alpha\neq 1$) can be reduced to the blocks $I_d$ by multiplying the matrix $Q\in\OO(p,q)$ by the matrix
$$\left(
      \begin{array}{ccc|ccc}
        I_x & \Zero & \Zero & \Zero & \Zero &\Zero \\
        \Zero & \frac{\alpha+\frac{1}{\alpha}}{2} I_d & \Zero & \Zero & \frac{\frac{1}{\alpha}-\alpha}{2} I_d &\Zero \\
        \Zero & \Zero & I_{p-x-d} & \Zero & \Zero &\Zero \\
        \hline
        \Zero & \Zero & \Zero & I_y & \Zero &\Zero \\
        \Zero & \frac{\frac{1}{\alpha}-\alpha}{2} I_d & \Zero & \Zero & \frac{\alpha+\frac{1}{\alpha}}{2} I_d &\Zero \\
        \Zero & \Zero & \Zero & \Zero & \Zero &I_{q-y-d}\\
      \end{array}
    \right
 )\in\OO(p,q).$$
\end{enumerate}
Using the same transformations (i) - (iv) for the matrix $A$, we obtain the specific form (\ref{xx5}) of the matrix $QAP$. Namely, the order of columns will change respectively, the order of rows in each of two blocks of the matrix $QAP$ will change respectively, some elements of the matrices $X$ and $Y$ will become negative. In the cases $\alpha\neq 0$ and $\alpha\neq 1$, the blocks $I_{d_A}$ will be multiplied by $\beta:=\frac{1}{\alpha}$. $\blacksquare$
\end{proof}

\textbf{Remark 1.} Suppose we have the known matrix $J_{n\times 3}=(J^\nu_{\,\,k})$ and want to obtain all solutions $A_{n\times 3}=(A^\nu_{\,\,k})$ of the system (\ref{eq}). Let us denote the canonical form of the known matrix $J$ for some $Q\in\OO(p,q)$, $P\in\SO(3)$ by
$$QJP=\Sigma^J.$$
We use the statements 2) and 3) of Theorem \ref{th1} and Lemma \ref{lemma1}. The system (\ref{eq}) takes the following form under the transformation (\ref{tr2}) of no more than 3 independent equations for (each column) $k=1, 2, 3$ and the unique corresponding $\nu=\nu(k)$:
\begin{eqnarray}
-A^\nu_{\,\, k}\sum_{a=1; \neq k}^3 \, \sum_{\mu=1}^n \eta^{\mu\mu} (A^\mu_{\,\, a})^2=J^\nu_{\,\, k}.\label{eqaa}
\end{eqnarray}
Below we present a general solution of the system (\ref{eqaa}) in different cases.

Finally, we obtain all solutions $\Psi^A$ of the system (\ref{eq}) but in the other system of coordinates depending on $Q\in\OO(p,q)$ and with gauge fixing depending on $P\in\SO(3)$. The matrix
$$A=Q^{-1} \Psi^A P^{-1}$$
will be solution of the system (\ref{eq}) in the original system of coordinates and with the original gauge fixing.

\textbf{Remark 2.} Note that $Q^{-1} Q_1^{-1} \Psi^A P_1^{-1} P^{-1}$ for all $Q_1\in\OO(p,q)$ and $P_1\in\SO(3)$ such that $Q_1 \Sigma^J P_1=\Sigma^J$ will be also solutions of the system (\ref{eq}) in the original system of coordinates and with the original gauge fixing because of Lemma \ref{lemma1}.

Let us give one example. If $J=0$, then all hyperbolic singular values of this matrix equal zero and we can take identity matrices $Q=I_n$, $P=I_3$ for its HSVD. We solve the system (\ref{eqaa}) for $J=0$ and obtain all solutions $\Psi^A$ of this system. We have $Q_1 \Sigma^J P_1=\Sigma^J$ for $\Sigma^J=0$ and any $Q_1\in\OO(p,q)$, $P_1\in\SO(3)$. Therefore, the matrices $Q_1 \Psi^A P_1$ for all $Q_1\in\OO(p, q)$ and $P_1\in\SO(3)$ are solutions of the system (\ref{eq}) because of Lemma \ref{lemma1} in this case.

\textbf{Remark 3.} In the case of constant potential of the Yang--Mills field, we have the following expression for the strength
\begin{eqnarray}
F^{\mu\nu}=- [A^\mu, A^\nu]=-  [A^\mu_{\,\,a} \tau^a, A^\nu_{\,\,b} \tau^b]=-  A^\mu_{\,\,a} A^\nu_{\,\,b} \epsilon^{ab}_{\,\,\,\,\, c}\tau^c=F^{\mu\nu}_{\,\,\,\,\, c}\tau^c.\label{str}
\end{eqnarray}
For each solution $A$, we present all nonzero components of the strength (we write expressions for only one of the two components $F^{\mu\nu}=-F^{\nu\mu}$). Also we calculate the invariant $F^2=F_{\mu\nu}F^{\mu\nu}$, which is important from a physical point of view and is present in the Lagrangian (\ref{lagr}) of the Yang--Mills field. We have
$$
F^2=-\frac{1}{2}\sum_{\mu<\nu} \eta^{\mu\mu}\eta^{\nu\nu} (F^{\mu\nu}_a)^2 I_2.
$$
From the equations (\ref{eqaa}), it follows that we have nonzero strength $F\neq 0$ only for the solutions $A$ with $d_A+x_A+y_A\geq 2$, and we have nonzero invariant $F^2\neq 0$ only for the solutions $A$ with $x_A+y_A\geq 2$.

Let us consider the cases $n=2$ and $n\geq 3$ separately below.

\section{The results for the case \texorpdfstring{$n=2$}{n=2}: \texorpdfstring{$\R^{1,1}$}{R11}.}
\label{sect3}

Let us consider the case of pseudo-Euclidean space of dimension $n=2$ ($p=q=1$). Note that two dimensional Yang--Mills theory is discussed in many papers (\cite{GN} and others). This case is much simpler than the case $n\geq 3$. We discuss the case $n=2$ in details for the sake of completeness.

The parameters $x_J, y_J, d_J$ of the matrix $J$ can take the values $0, 1$ such that $x_J+y_J+2d_J \leq 2$.

\subsection{The case \texorpdfstring{$d_J=0$}{dJ0}.}

Let us consider the nondegenerate case $d_J=0$. We will see below that the problem actually is reduced to solving the following system
  \begin{eqnarray}
b_1(b_2)^2=j_1,\qquad b_2(b_1)^2=j_2,\label{casen22}
\end{eqnarray}
where $b_1=\pm a_1$, $b_2= \pm a_2$ for some potentials $a_1$, $a_2$. The general solution of the system (\ref{casen22}) is discussed in \cite{Shirokov1}.

\begin{lemma} \label{thn2}\cite{Shirokov1} The system of equations (\ref{casen22}) has the following general solution:
\begin{enumerate}
  \item in the case $j_1=j_2=0$, has the solutions
$(b_1, 0)$, $(0, b_2)$ for all $b_1, b_2\in \R$;
  \item in the cases $j_1=0$, $j_2\neq 0$; $j_1\neq 0$, $j_2=0$, has no solutions;
  \item in the case $j_1\neq 0$, $j_2\neq 0$,
   has the unique solution
  $$b_1=\sqrt[3]{\frac{j_2^2}{j_1}},\qquad b_2=\sqrt[3]{\frac{j_1^2}{j_2}}.$$
\end{enumerate}
\end{lemma}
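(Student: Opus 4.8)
The plan is to treat the system (\ref{casen22}) as a small explicit algebraic system and split into the three cases according to whether $j_1$ and $j_2$ vanish.

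First I would dispose of the degenerate cases. If $j_1=j_2=0$, then $b_1 b_2^2=0$ and $b_2 b_1^2=0$, so $b_1 b_2=0$; hence $b_1=0$ or $b_2=0$, and conversely any pair of the form $(b_1,0)$ or $(0,b_2)$ clearly satisfies both equations. If exactly one of $j_1,j_2$ is nonzero, say $j_1=0$, $j_2\neq 0$, then $b_1 b_2^2=0$ forces $b_1=0$ or $b_2=0$; in either subcase $b_2 b_1^2=0\neq j_2$, a contradiction, so there are no solutions. The symmetric argument handles $j_1\neq 0$, $j_2=0$.

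Next, for the main case $j_1\neq 0$, $j_2\neq 0$, I would argue that both $b_1$ and $b_2$ must be nonzero (if either were zero, the corresponding right-hand side would vanish). Then I would multiply the two equations to get $b_1^3 b_2^3=j_1 j_2$, and divide them to get $b_1/b_2 = j_1/j_2$ after suitable manipulation — more directly, from $b_1 b_2^2=j_1$ and $b_2 b_1^2=j_2$ one obtains $b_1^3 = (b_1 b_2^2)(b_1^2/b_2^2)\cdot\text{(rearranged)}$; cleanly, $\dfrac{(b_1 b_2^2)^2}{b_2 b_1^2} = \dfrac{b_2^3}{1}\cdot\dfrac{1}{?}$ — so I would instead just compute $\dfrac{j_1^2}{j_2}=\dfrac{b_1^2 b_2^4}{b_1^2 b_2}=b_2^3$ and $\dfrac{j_2^2}{j_1}=\dfrac{b_1^4 b_2^2}{b_1 b_2^2}=b_1^3$, which gives the claimed unique real cube roots. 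Conversely, one checks by substitution that $b_1=\sqrt[3]{j_2^2/j_1}$, $b_2=\sqrt[3]{j_1^2/j_2}$ solves the system, using that real cube root is single-valued. Uniqueness follows because the map $(b_1,b_2)\mapsto(b_1^3,b_2^3)$ is a bijection of $\R^2$ and the system determines $b_1^3,b_2^3$ uniquely.

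There is no real obstacle here; the only mild subtlety is bookkeeping the sign conventions $b_i=\pm a_i$ and making sure the conversion between the $a_i$ and $b_i$ variables does not spuriously enlarge or shrink the solution set — but since the sign flips are absorbed into the $\OO(p,q)$ and $\SO(3)$ transformations of Lemma \ref{lemma1}, this is handled at the level where (\ref{casen22}) is derived rather than in the proof of the lemma itself. Since the lemma is quoted from \cite{Shirokov1}, I would simply present the short verification above.
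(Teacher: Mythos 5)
Your proof is correct. The paper itself states this lemma as a quotation from \cite{Shirokov1} and gives no proof of it here, so there is nothing to diverge from: your case analysis for the degenerate cases and the elementary computation $b_1^3=j_2^2/j_1$, $b_2^3=j_1^2/j_2$ (obtained from $j_2^2/j_1$ and $j_1^2/j_2$, using that the real cube root is single-valued) together with the substitution check is exactly the standard verification one would give; the only thing to fix is the garbled intermediate sentence (the fragment ending in ``$\frac{1}{?}$''), which should simply be deleted since the clean computation that follows it is complete.
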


\subsubsection{The case \texorpdfstring{$d_J=0$}{dJ=0}, \texorpdfstring{$\rank(J)=2$}{rankJ=2}.}

If $d_J=0$, $x_J=y_J=1$, then we have
$$\Sigma^J=\left(
                 \begin{array}{ccc}
                   j_1 & 0 & 0 \\ \hline
                   0 & j_2 & 0 \\
                 \end{array}
               \right),\qquad j_1, j_2 \neq 0.
$$
Then $x_A=y_A=1$, $d_A=0$ and we are looking for a solution in the form
$$
\Psi^A=\left(
                 \begin{array}{ccc}
                   a_1 & 0 & 0 \\ \hline
                   0 & a_2 & 0 \\
                 \end{array}
               \right),\qquad a_1, a_2 \neq 0.
$$
We obtain the system
$$a_1(a_2)^2=j_1,\qquad - a_2(a_1)^2=j_2.$$
Using Lemma \ref{thn2}, we obtain the solution
\begin{eqnarray}
\Psi^A=\left(
                 \begin{array}{ccc}
                   a_1 & 0 & 0 \\ \hline
                   0 & a_2 & 0 \\
                 \end{array}
               \right),\qquad a_1=\sqrt[3]{\frac{j_2^2}{j_1}},\qquad a_2=-\sqrt[3]{\frac{j_1^2}{j_2}}.\label{An2}
\end{eqnarray}
We get the following nonzero components of the strength
\begin{eqnarray}
F^{12}=-F^{21}=\sqrt[3]{j_1 j_2}\tau^3,\label{Fn2}
\end{eqnarray}
using specific coordinates and gauge fixing, where $j_1$ and $j_2$ are hyperbolic singular values of the matrix $J=(J^{\nu}_{\,\,k})$. We obtain the following expression for the invariant
\begin{eqnarray}
F^2=F_{\mu\nu}F^{\mu\nu}=\frac{1}{2}\sqrt[3]{(j_1 j_2)^2} I_2\neq 0.\label{F2n2}
\end{eqnarray}

\subsubsection{The case \texorpdfstring{$d_J=0$}{dJ=0}, \texorpdfstring{$\rank(J)=1$}{rankJ=1}.}

If $d_J=0$, $x_J=1$, $y_J=0$, then we have
$$\Sigma^J=\left(
                 \begin{array}{ccc}
                   j_1 & 0 & 0 \\ \hline
                   0 & 0 & 0 \\
                 \end{array}
               \right),\qquad j_1\neq 0.
$$
We have $d_A=0$, $x_A=1$, $y_A=0$ or $d_A=0$, $x_A=1$, $y_A=1$, i.e.
$$
\Psi^A=\left(
                 \begin{array}{ccc}
                   a_1 & 0 & 0 \\ \hline
                   0 & 0 & 0 \\
                 \end{array}
               \right)\qquad \mbox{or}\qquad
  \left(
                 \begin{array}{ccc}
                   a_1 & 0 & 0 \\ \hline
                   0 & a_2 & 0 \\
                 \end{array}
               \right),\qquad a_1, a_2 \neq 0.
$$
We obtain $-a_1 0 =j_1$ or $-a_1 (a_2)^2=j_1$, $a_2 (a_1)^2=0$, and a contradiction. There is no solution in this case.

If $d_J=0$, $x_J=0$, $y_J=1$, then we have
$$\Sigma^J=\left(
                 \begin{array}{ccc}
                   0 & 0 & 0 \\ \hline
                   j_1 & 0 & 0 \\
                 \end{array}
               \right),\qquad j_1\neq 0.
$$
We have $d_A=0$, $x_A=0$, $y_A=1$ or $d_A=0$, $x_A=1$, $y_A=1$, i.e.
$$
\Psi^A=\left(
                 \begin{array}{ccc}
                   0 & 0 & 0 \\ \hline
                   a_1 & 0 & 0 \\
                 \end{array}
               \right)\qquad\mbox{or}\qquad
  \left(
                 \begin{array}{ccc}
                   0 & a_2 & 0 \\ \hline
                   a_1 & 0 & 0 \\
                 \end{array}
               \right),\qquad a_1, a_2 \neq 0.
$$
We obtain $a_1 0 =j_1$ or $a_1 (a_2)^2=j_1$, $-a_2 (a_1)^2=0$, and a contradiction. There is no solution in this case too.

\subsubsection{The case \texorpdfstring{$d_J=0$}{dJ=0}, \texorpdfstring{$\rank(J)=0$}{rankJ=0}.}

If $d_J=0$, $x_J=y_J=0$, then
$$
\Sigma^J=\left(
                 \begin{array}{ccc}
                   0 & 0 & 0 \\ \hline
                   0 & 0 & 0 \\
                 \end{array}
               \right).
$$
We have 1) $d_A=x_A=y_A=0$, or 2) $d_A=0$, $x_A=1$, $y_A=0$, or 3) $d_A=0$, $x_A=0$, $y_A=1$, or 4) $d_A=0$, $x_A=y_A=1$, or 5) $d_A=1$, $x_A=y_A=0$. In the first case, we have zero solution $A=0$. In the second case, we have the solutions
\begin{eqnarray}
 \Psi^A=\left(
                 \begin{array}{ccc}
                   a_1 & 0 & 0 \\ \hline
                   0 & 0 & 0 \\
                 \end{array}
               \right), \qquad a_1\in\R\setminus \{0\}.\label{An201}
\end{eqnarray}
In the third case, we have the solutions
\begin{eqnarray}
 \Psi^A=\left(
                 \begin{array}{ccc}
                   0 & 0 & 0 \\ \hline
                   a_1 & 0 & 0 \\
                 \end{array}
               \right), \qquad a_1\in\R\setminus \{0\}.\label{An202}
\end{eqnarray}
In the fourth case, we are looking for a solution in the form
$$
 \Psi^A=\left(
                 \begin{array}{ccc}
                   a_1 & 0 & 0 \\ \hline
                   0 & a_2 & 0 \\
                 \end{array}
               \right),\qquad a_1, a_2\neq 0,
$$
get the system $a_1(a_2)^2=0$, $-a_2(a_1)^2=0$, and a contradiction. There is no solution in this case. In the fifth case, we obtain the solution
\begin{eqnarray}
\Psi^A=\left(
                 \begin{array}{ccc}
                   1 & 0 & 0 \\ \hline
                   1 & 0 & 0 \\
                 \end{array}
               \right).\label{An203}
\end{eqnarray}
For all these potentials $\Psi^A$, we have zero strength $F=0$. This also follows from the results of \cite{Sch}: in Euclidean and Lorentzian cases, $F=0$ for $J=0$. This is not so in the case of other signatures, see the example below (see Table 3: the solutions (\ref{Ad000a}) and (\ref{Fd000a}) for $p\geq 2$ and $q\geq 2$; the solutions (\ref{Ad000b}) and (\ref{Fd000b}) for $p\geq 3$ and $q\geq 3$).

\subsection{The case \texorpdfstring{$d_J=1$}{dJ=1}.}

If $d_J=1$, $x_J=y_J=0$, then we have
$$\Sigma^J=\left(
                 \begin{array}{ccc}
                   1 & 0 & 0 \\ \hline
                   1 & 0 & 0 \\
                 \end{array}
               \right).
$$
Then we have $d_A=1$, $x_A=y_A=0$, and we are looking for a solution in the form
\begin{eqnarray}
\Psi^A=\left(
                 \begin{array}{ccc}
                   \beta & 0 & 0 \\ \hline
                   \beta & 0 & 0 \\
                 \end{array}
               \right),\qquad \beta\neq 0,\nonumber
\end{eqnarray}
and obtain equation $\beta \, 0 = 1$, i.e. there is no solution.

\subsection{Summary for the case \texorpdfstring{$\R^{1,1}$}{R11}.}

The results are summarized in Table 1. In the case $n=2$, $p=q=1$, the $\SU(2)$ Yang--Mills equations have nonzero strength $F\neq 0$ only in the case
$$d_J=0, \qquad x_J=y_J=1,$$
i.e. $\rank (J)=\rank (J^\T \eta J)=2$, and the number of positive eigenvalues and the number of negative eigenvalues of the matrix $J^\T \eta J$ is the same and equals 1.

\begin{table}[ht]\caption{Constant solutions of the $\SU(2)$ Yang--Mills equations in the case $\R^{1,1}$.}
\begin{center}
\begin{tabular}{|c|c|c|c|c|c|c|c|c|}
\hline
% after \\: \hline or \cline{col1-col2} \cline{col3-col4} ...
  $d_J$ & $x_J$ & $y_J$ & $d_A$ & $x_A$ & $y_A$ & $A$ & $F$ & $F^2$  \\ \hline 
  0 & 1 & 1 & 0 & 1 & 1 & 1:\, see (\ref{An2}) & 1:\, see (\ref{Fn2}) & 1:\, see (\ref{F2n2})  \\ \hline
  0 & 1 & 0 &  &  &  & $\varnothing$ & $\varnothing$ & $\varnothing$ \\ \hline
  0 & 0 & 1 &  &  &  & $\varnothing$ & $\varnothing$ & $\varnothing$  \\ \hline
  0 & 0 & 0 & 0 & 0 & 0 & $A=0$ & $F=0$ & $F^2=0$ \\
   &  &  & 0 & 1 & 0 & $\infty$:\, see (\ref{An201}) & $F=0$ & $F^2=0$  \\
   &  &  & 0 & 0 & 1 & $\infty$:\, see (\ref{An202}) & $F=0$ & $F^2=0$ \\
   &  &  & 1 & 0 & 0 & 1:\, see (\ref{An203}) & $F=0$ & $F^2=0$ \\ \hline
 1 & 0 & 0 &  &  &  & $\varnothing$ & $\varnothing$ & $\varnothing$ \\  \hline
\end{tabular}
\end{center}
\end{table}

\section{The results for the case \texorpdfstring{$\R^{p,q}$}{Rpq}, \texorpdfstring{$p+q=n\geq 3$}{p+q=n<3}, \texorpdfstring{$p\geq 1$}{p<1}, \texorpdfstring{$q\geq 1$}{q<1}.}
\label{sect4}

Let us consider the case $n=p+q\geq 3$, $p\geq 1$, $q\geq 1$ for the known $J_{n\times 3}=(J^\nu_{\,\,k})$ and the unknown $A_{n\times 3}=(A^\nu_{\,\,k})$. Denote the elements of the diagonal blocks $X_{x_J}$ and $Y_{y_J}$ of the matrix $\Sigma^J$ by $j_{1}, j_{2}, j_{3}$ and the elements of the diagonal blocks $K$, $L$, $M$, $N$ of the matrix $\Psi^A$ (see Theorem \ref{th1}) by $a_{1}, a_{2}, a_{3}$.

\subsection{The case \texorpdfstring{$d_J=0$}{dJ=0}.}

In the case $d_J=0$, the problem  actually is reduced to solving the following two systems of three cubic equations:
\begin{eqnarray}
b_1((b_2)^2+(b_3)^2)&=&j_1,\nonumber\\
b_2((b_1)^2+(b_3)^2)&=&j_2,\label{sp4}\\
b_3((b_1)^2+(b_2)^2)&=&j_3\nonumber
\end{eqnarray}
and
\begin{eqnarray}
b_1((b_2)^2-(b_3)^2)&=&j_1,\nonumber\\
b_2((b_1)^2-(b_3)^2)&=&j_2,\label{sp44}\\
b_3((b_1)^2+(b_2)^2)&=&j_3.\nonumber
\end{eqnarray}
The systems (\ref{sp4}), (\ref{sp44}) have the following symmetry. Suppose that $(b_1, b_2, b_3)$ is a solution of (\ref{sp4}) or (\ref{sp44}) for the known $j_1, j_2, j_3$. If we change the sign of some $j_k$, $k=1, 2, 3$, then we must change the sign of the corresponding $b_k$, $k=1, 2, 3$. Without loss of generality, we can assume that all expressions $j_k$, $k=1, 2, 3$, in (\ref{sp4}) and (\ref{sp44}) are nonnegative. In (\ref{sp4}), the expressions $b_k$, $k=1, 2, 3$ will be nonnegative too. In (\ref{sp44}), the expression $b_3$ will be nonnegative, the expressions $b_1$ and $b_2$ will be arbitrary real numbers.

The general solution of the system (\ref{sp4}) and its symmetries are discussed in \cite{Shirokov1}, where we obtain the same system in the case of arbitrary Euclidean space $\R^n$. We remind these statements (Lemmas \ref{thSym} and \ref{thGenSol}) here without proof for the convenience of the reader. We present the general solution of the system (\ref{sp44}) and its symmetries in Lemmas \ref{thSym0} and \ref{thSh2}.

\begin{lemma}\label{thSym}\cite{Shirokov1}
If the system (\ref{sp4}) has the solution $(b_1, b_2, b_3)$ with $b_1\neq 0$, $b_2\neq 0$, $b_3\neq 0$, then this system has also the solution $(\frac{K}{b_1},\frac{K}{b_2},\frac{K}{b_3})$, where $K=(b_1 b_2 b_3)^{\frac{2}{3}}.$
\end{lemma}

Thus the expression
$$K=b_1 b_1'=b_2 b_2'=b_3 b_3'=(b_1 b_2 b_3)^{\frac{2}{3}}=(b_1' b_2' b_3')^{\frac{2}{3}}$$
is a conserved quantity for a pair of solutions $(b_1, b_2, b_3)$ and $(b_1', b_2', b_3')$ of the system (\ref{sp4}), where all numbers $b_1$, $b_2$, $b_3$, $b_1'$, $b_2'$, $b_3'$ are nonzero.

\begin{lemma}\label{thGenSol}\cite{Shirokov1} The system of equations (\ref{sp4}) with nonnegative parameters $j_1\geq 0$, $j_2\geq 0$, $j_3\geq 0$ has the following general solution:
\begin{enumerate}
  \item in the case $j_1=j_2=j_3=0$, has the solutions
$(b_1, 0, 0)$, $(0, b_2, 0)$, and $(0, 0, b_3)$ for all $b_1, b_2, b_3\in \R$;
  \item in the cases $j_1=j_2=0$, $j_ 3\neq 0$ (or similar cases with circular permutation), has no solutions;
  \item in the case $j_ 1\neq0$, $j_ 2\neq 0$, $j_ 3=0$ (or similar cases with circular permutation),
   has the following unique solution
  $$b_1=\sqrt[3]{\frac{j_2^2}{j_1}},\qquad b_ 2=\sqrt[3]{\frac{j_1^2}{j_ 2}},\qquad b_ 3=0;$$

 \item in the case $j_1=j_2=j_3\neq 0$, has the following unique solution
 $$b_ 1=b_ 2=b_ 3=\sqrt[3]{\frac{j_1}{2}};$$

 \item in the case of not all the same $j_1, j_2, j_3>0$ (and we take positive for simplicity), has the following two solutions
     $$(b_ {1+}, b_ {2+}, b_ {3+}),\qquad (b_ {1-}, b_ {2-}, b_ {3-})$$
     with the following expression for $K$ from Lemma \ref{thSym}
$$K:=b_{1+}b_{1-}=b_{2+}b_{2-}=b_{3+}b_{3-}=(b_{1+}b_{2+}b_{3+})^{\frac{2}{3}}=(b_{1-}b_{2-}b_{3-})^{\frac{2}{3}}:$$
       \begin{enumerate}
        \item in the case $j_1=j_2>j_3>0$ (or similar cases with circular permutation):
\begin{eqnarray}
&&b_{1\pm}=b_{2\pm}=\sqrt[3]{\frac{j_3}{2z_{\pm}}},\quad b_{3\pm}=z_{\pm} b_{1\pm},\quad z_{\pm}=\frac{j_1\pm\sqrt{j_1^2-j_3^2}}{j_3}.\nonumber\\
&&\mbox{Moreover,}\quad z_+ z_-=1,\quad K=(\frac{j_3}{2})^\frac{2}{3}.\nonumber
\end{eqnarray}
        \item in the case $j_3>j_1=j_2>0$ (or similar cases with circular permutation):
\begin{eqnarray}
&&b_{1\pm}=\frac{1}{w_\pm}b_{3},\quad b_{2\pm}=w_{\pm}b_{3},\quad b_{3\pm}=b_{3}=\sqrt[3]{\frac{j_1}{s}},\nonumber\\
&&w_{\pm}=\frac{s\pm\sqrt{s^2-4}}{2},\quad s=\frac{j_3+\sqrt{j_3^2+8j_1^2}}{2j_1}.\nonumber\\
&&\mbox{Moreover,}\quad w_+ w_-=1,\quad b_{1\pm}=b_{2\mp},\quad K=(\frac{j_1}{s})^\frac{2}{3}.\nonumber
\end{eqnarray}
        \item in the case of all different $j_1, j_2, j_3>0$:
\begin{eqnarray}
&&b_{1\pm}=\sqrt[3]{\frac{j_3}{t_0 y_\pm z_\pm}},\quad b_{2\pm}=y_{\pm} b_{1\pm},\quad b_{3\pm}=z_{\pm} b_{1\pm},\nonumber\\
&&z_{\pm}=\sqrt{\frac{y_\pm(j_1-j_2 y_\pm)}{j_ 2-j_1 y_\pm}},\quad y_{\pm}=\frac{t_0\pm\sqrt{t_0^2-4}}{2},\nonumber
\end{eqnarray}
where $t_0>2$ is the solution (it always exists, moreover, it is bigger than $\frac{j_2}{j_1}+\frac{j_ 1}{j_2}$) of the cubic equation $$j_1 j_2 t^3-(j_1^2+j_2^2+j_3^2)t^2+4j_3^2=0.$$
\begin{eqnarray}
\mbox{Moreover,}\quad y_+ y_-=1,\quad z_+ z_-=1,\quad K=(\frac{j_3}{t_0})^\frac{2}{3}.\nonumber
\end{eqnarray}
We can use the explicit Viete or Cardano formulas for $t_0$:
$$t_0=\Omega+2 \Omega\cos(\frac{1}{3}\arccos(1-\frac{2\beta}{\Omega^3})),$$
$$\Omega:=\frac{\alpha+\beta}{3},\qquad \alpha:=A+\frac{1}{A}>2,\qquad \beta:=\frac{B^2}{A},\qquad A:=\frac{j_2}{j_1},\qquad B:=\frac{j_3}{j_1},$$
$$t_0=\Omega+L+ \frac{\Omega^2}{L}\qquad L:=\sqrt[3]{\Omega^3-2\beta+2\sqrt{\beta(\beta-\Omega^3)}}.$$
\end{enumerate}
\end{enumerate}
\end{lemma}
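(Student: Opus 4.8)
\medbreak\noindent\textbf{Proof sketch.} The plan is to collapse the three cubic equations (\ref{sp4}) to a single cubic in one variable by quotienting out the overall scale, and then to carry out a careful real-root count. First I would use the sign symmetry noted above: sending $j_k\mapsto -j_k$ forces $b_k\mapsto -b_k$, so one may assume $j_1,j_2,j_3\geq 0$; moreover, since the left-hand side of the $k$-th equation in (\ref{sp4}) is $b_k$ times a sum of squares, $b_k\geq 0$ once $j_k\geq 0$ and $b_k>0$ once $j_k>0$ (unless the two other $b$'s vanish, which forces $j_k=0$). The degenerate cases are then elementary: if all $j_k=0$, a nonzero $b_i$ makes the $i$-th equation force the two transverse $b$'s to vanish, so at most one $b_k$ is nonzero (case 1); if exactly one $j_k$ is nonzero, the other two equations force the two $b$'s transverse to index $k$ to vanish, contradicting the $k$-th equation (case 2); if exactly one $j_k$ vanishes, that equation forces $b_k=0$ and the remaining two equations become the two-variable system of Lemma~\ref{thn2} [the system (\ref{casen22})], whose case~(iii) gives case 3.

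For $j_1,j_2,j_3>0$ all $b_k>0$, and I would pass to the scale-invariant variables $y=b_2/b_1$, $z=b_3/b_1$. Dividing the first equation of (\ref{sp4}) by the second and by the third gives
\begin{eqnarray}
j_2(y^2+z^2)=j_1 y(1+z^2),\qquad j_3(y^2+z^2)=j_1 z(1+y^2).\nonumber
\end{eqnarray}
The first of these is linear in $z^2$, giving $z^2=\frac{y(j_1-j_2 y)}{j_2-j_1 y}$; substituting into the second, clearing denominators, dividing by $y$, and introducing $t=y+\frac1y$ (so that $1+y^2=ty$, $(1+y^2)^2=t^2y^2$, $(1-y^2)^2=(t^2-4)y^2$) reduces the whole system to the single cubic
\begin{eqnarray}
j_1 j_2\, t^3-(j_1^2+j_2^2+j_3^2)t^2+4j_3^2=0\nonumber
\end{eqnarray}
appearing in the statement. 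Since $y>0$ forces $t\geq 2$, with $t=2$ iff $y=1$ iff $b_1=b_2$ iff $j_1=j_2$, it remains to count the roots $t\geq 2$ of this cubic for which the corresponding $z^2$ is positive, and then to unwind the substitutions.

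The substantive step is this root count. For $j_1\neq j_2$ one evaluates the cubic at $t=0$ (value $4j_3^2>0$), at $t=2$ (value $-4(j_1-j_2)^2<0$) and as $t\to+\infty$ (value $+\infty$), and uses that its only positive critical point is $t=\frac{2(j_1^2+j_2^2+j_3^2)}{3j_1j_2}$, to conclude that there is exactly one root $t_0>2$; evaluating at $t=\frac{j_1^2+j_2^2}{j_1j_2}\geq 2$ (value $-j_3^2(j_1^2-j_2^2)^2/(j_1j_2)^2<0$) shows $t_0>\frac{j_1^2+j_2^2}{j_1j_2}$, which is exactly the inequality guaranteeing $z^2>0$ for both roots $y_\pm=\frac{t_0\pm\sqrt{t_0^2-4}}{2}$ of $y+\frac1y=t_0$. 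Recovering $b_1$ from the third equation of (\ref{sp4}) as $b_1^3=\frac{j_3}{t_0\, y\, z}$ (using $1+y^2=t_0 y$) and setting $b_2=yb_1$, $b_3=zb_1$ produces the two solutions of case~5(c); the relations $y_+y_-=1$ and $z_+z_-=1$ (the latter holds because both the numerator and the denominator of $\frac{(j_1-j_2y_+)(j_1-j_2y_-)}{(j_2-j_1y_+)(j_2-j_1y_-)}$ equal $j_1^2-j_1j_2t_0+j_2^2$) identify these two as a pair related by the involution of Lemma~\ref{thSym}, with $K=(j_3/t_0)^{2/3}$. The subcases with coincidences among the $j_k$ then follow by specialization: for $j_1=j_2$ the cubic factors as $(t-2)(j_1^2t^2-j_3^2t-2j_3^2)$, so either $t_0=2$, i.e. $b_1=b_2$, reducing (\ref{sp4}) to the quadratic $j_3z^2-2j_1z+j_3=0$, which is solvable exactly when $j_1\geq j_3$ (case~5(a)), or $t_0$ is the positive root of the quadratic factor, which exceeds $2$ exactly when $j_3>j_1$ (case~5(b), more naturally parametrised by ratios to $b_3$); and for $j_1=j_2=j_3$ one gets $t_0=2$, $z=1$, the unique symmetric solution of case~4, whose uniqueness also follows directly since two distinct $b$'s would force $b_i^2=b_jb_k$ for every permutation $(i,j,k)$, hence $b_1=b_2=b_3$. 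The main obstacle is the bookkeeping of this last part: checking, via those sign evaluations of the cubic, that in every regime there is \emph{exactly one} admissible root $t_0$ with $z^2>0$, so that across the subcases no solution is lost and none counted twice.
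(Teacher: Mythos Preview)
Your proposal is correct and follows essentially the same route as the paper. The present paper does not prove Lemma~\ref{thGenSol} (it is quoted from \cite{Shirokov1}), but it proves the companion Lemma~\ref{thSh2} in Appendix~A by exactly your method: pass to the scale-invariant ratios $y=b_2/b_1$, $z=b_3/b_1$, solve one quotient equation for $z^2$, square the other, and reduce via $t=y+\tfrac{1}{y}$ to a cubic whose relevant real roots are located by sign evaluations of the cubic at $t=2$, $t=A+\tfrac{1}{A}$, and at the extrema; the subcases $j_1=j_2$ and $j_1=j_2=j_3$ are then handled by the factorisation $(t-2)(j_1^2t^2-j_3^2t-2j_3^2)$, just as you do.
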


\begin{lemma}\label{thSym0}
If the system (\ref{sp44}) has the solution $(b_1, b_2, b_3)$ with $b_1\neq 0$, $b_2\neq 0$, $b_3\neq 0$, then this system has also the solution $(\frac{-K}{b_1},\frac{-K}{b_2},\frac{K}{b_3})$, where $K=(b_1 b_2 b_3)^{\frac{2}{3}}>0.$
\end{lemma}

\begin{proof}
Let us substitute $(\frac{-K}{b_1},\frac{-K}{b_2},\frac{K}{b_3})$ into the first equation. We have
$$j_1=\frac{-K}{b_1}(\frac{K^2}{b_2^2}-\frac{K^2}{b_3^2})=\frac{-K^3(b_3^2-b_2^2)}{b_1b_2^2b_3^2}.$$
Using $j_1=b_1(b_2^2-b_3^2)$, we get $K=(b_1 b_2 b_3)^{\frac{2}{3}}$.
We can do the same with the second equation. For the third equation, we have
$$j_3=\frac{K}{b_3}(\frac{K^2}{b_1^2}+\frac{K^2}{b_2^2})=\frac{K^3(b_1^2+b_2^2)}{b_3 b_1^2 b_2^2}.$$
Using $j_3=b_3(b_1^2+b_2^2)$, we obtain $K=(b_1 b_2 b_3)^{\frac{2}{3}}$ again. $\blacksquare$
\end{proof}

Thus the expression
$$K=-b_1 b_1'=-b_2 b_2'=b_3 b_3'=(b_1 b_2 b_3)^{\frac{2}{3}}=(b_1' b_2' b_3')^{\frac{2}{3}}$$
is a conserved quantity for the pair of solutions $(b_1, b_2, b_3)$ and $(b_1', b_2', b_3')$ of the system (\ref{sp44}), where all numbers $b_1$, $b_2$, $b_3$, $b_1'$, $b_2'$, $b_3'$ are nonzero.

\begin{lemma}\label{thSh2} The system of equations (\ref{sp44}) with nonnegative parameters $j_1\geq 0$, $j_2\geq 0$, $j_3\geq 0$ has the following general solution:
\begin{enumerate}
  \item in the case $j_1=j_2=j_3=0$, has the solutions
$(b_1, 0, 0)$, $(0, b_2, 0)$, and $(0, 0, b_3)$ for all $b_1, b_2, b_3\in \R$;
  \item in the case $j_1=j_2=0$, $j_3\neq 0$, has the following 4 solutions:
$$
\mbox{the pair of solutions}\,\,(\sqrt[3]{\frac{j_3}{2}}, \sqrt[3]{\frac{j_3}{2}},  \sqrt[3]{\frac{j_3}{2}}),\,\, (-\sqrt[3]{\frac{j_3}{2}}, -\sqrt[3]{\frac{j_3}{2}}, \sqrt[3]{\frac{j_3}{2}})\,\mbox{with $K=(\frac{j_3}{2})^{\frac{2}{3}}$;}
$$
$$
\mbox{the pair of solutions}\,\,(-\sqrt[3]{\frac{j_3}{2}}, \sqrt[3]{\frac{j_3}{2}}, \sqrt[3]{\frac{j_3}{2}}),\,\, (\sqrt[3]{\frac{j_3}{2}}, -\sqrt[3]{\frac{j_3}{2}}, \sqrt[3]{\frac{j_3}{2}})\,\mbox{with $K=(\frac{j_3}{2})^{\frac{2}{3}}$;}
$$

\item in the cases $j_1=j_3=0$, $j_2\neq 0$ and $j_2=j_3=0$, $j_1\neq 0$, has no solutions;

\item in the case $j_1\neq 0$, $j_2\neq 0$, $j_3 =0$, has the following unique solution
  $$b_1=\sqrt[3]{\frac{j_2^2}{j_1}}, \qquad b_2=\sqrt[3]{\frac{j_1^2}{j_2}}, \qquad b_3=0.$$

\item in the case $j_1=0$, $j_2\neq 0$, $j_3 \neq0$ (and analogously in the case $j_1\neq0$, $j_2=0$, $j_3\neq 0$),
   has the following 1 or 5 solutions: one solution
  $$b_1=0,\qquad b_2=-\sqrt[3]{\frac{j_3^2}{j_2}},\qquad b_3=\sqrt[3]{\frac{j_2^2}{j_3}},$$
  and 4 additional solutions in the subcase $j_3>j_2$:
$$b_1=\pm\sqrt{\frac{j_3+j_2}{2b_3}}=\pm\frac{\sqrt{j_3+j_2}}{\sqrt[6]{4(j_3-j_2)}},\qquad b_3=b_2=\sqrt[3]{\frac{j_3-j_2}{2}};$$
$$b_1=\pm\sqrt{\frac{j_3-j_2}{2b_3}}=\pm\frac{\sqrt{j_3-j_2}}{\sqrt[6]{4(j_3+j_2)}},\qquad b_3=-b_2=\sqrt[3]{\frac{j_3+j_2}{2}}.$$
  Each of two pair of solutions (in each pair, we have one $b_1>0$ and one $b_1'=\frac{-K}{b_1}<0$) has the same $$K=\sqrt[3]{\frac{j_3^2-j_2^2}{4}}.$$

 \item in the case of all nonzero $ j_1>0$, $ j_2>0$, $ j_3>0$, has the following 2, 4, or 6 solutions:
     \begin{enumerate}
        \item in the case $j_1= j_2<\frac{j_3}{2\sqrt{2}}$, has 6 solutions ($b_{1\pm}, b_{2\pm}, b_{3\pm}$), ($c_{1\pm}^+, c_{2\pm}^+, c_{3\pm}^+$), ($c_{1\pm}^-, c_{2\pm}^-, c_{3\pm}^-$):
\begin{eqnarray}
&&\!\!\!\!\!\!\!\!\!\!\!\!\!\!\!\!\!\!\!\!b_{1\pm}=b_{2\pm}=\sqrt[3]{\frac{ j_3}{2z_{\pm}}},\qquad b_{3\pm}=z_{\pm} b_{1\pm},\qquad z_{\pm}=\frac{- j_1\pm\sqrt{ j_1^2+ j_3^2}}{ j_3},\label{solB}\\
&&\mbox{Moreover, $z_+ z_-=-1$ and $K=(\frac{ j_3}{2})^\frac{2}{3}$.}\nonumber
\end{eqnarray}

\begin{eqnarray}
&&\!\!\!\!\!\!\!\!\!\!\!\!\!\!\!\!\!\!\!\!c_{1\pm}^\pm=\frac{1}{w_{\pm}^\pm}c_{3}^\pm,\qquad c_{2\pm}^\pm=-w_{\pm}^\pm c_{3}^\pm,\qquad c_{3}^\pm=c_{3\pm}^\pm=\sqrt[3]{\frac{ j_1}{s^\pm}},\label{solC}\\
&&w_{\pm}^\pm=\frac{s^{\pm}\pm\sqrt{(s^{\pm})^2+4}}{2},\qquad s^{\pm}=\frac{ j_3\pm\sqrt{ j_3^2-8 j_1^2}}{2 j_1}.\nonumber\\
&&\mbox{Moreover, $w_+^\pm w_-^\pm=-1$, $s^+ s^-=2$, $c_{1\pm}^\pm=c_{2\mp}^\pm$, and $K=(\frac{ j_1}{s^\pm})^\frac{2}{3}$.}\nonumber
\end{eqnarray}

\item in the case $ j_1= j_2=\frac{j_3}{2\sqrt{2}}$, has 4 solutions ($b_{1\pm}, b_{2\pm}, b_{3\pm}$) (\ref{solB}) and ($c_{1\pm}, c_{2\pm}, c_{3\pm}$) (as a particular class of solutions (\ref{solC})):
\begin{eqnarray}
&&\!\!\!\!\!\!\!\!\!\!\!\!\!\!\!\!\!\!\!\!b_{1\pm}=b_{2\pm}=\sqrt[3]{\frac{ j_3}{2z_{\pm}}},\qquad b_{3\pm}=z_{\pm} b_{1\pm},\qquad z_{\pm}=\frac{1}{\sqrt{2}},\, -\sqrt{2},\label{solB2}\\
&&\mbox{Moreover, $z_+ z_-=-1$ and $K=(\frac{ j_3}{2})^\frac{2}{3}$;}\nonumber\\
&&\!\!\!\!\!\!\!\!\!\!\!\!\!\!\!\!\!\!\!\!c_{1\pm}=\frac{1}{w_{\pm}}c_{3},\quad c_{2\pm}=-w_{\pm} c_{3},\quad c_{3\pm}=c_{3}=\sqrt[3]{\frac{ j_1}{\sqrt{2}}},\quad w_{\pm}=\frac{\sqrt{2}\pm\sqrt{6}}{2}.\nonumber\\
&&\mbox{Moreover, $w_+w_-=-1$, $c_{1\pm}=c_{2\mp}$, and $K=(\frac{j_1}{\sqrt{2}})^\frac{2}{3}$.}\nonumber \\
&&\mbox{Note that  $\sqrt[3]{j_3}=b_{3-}\neq b_{3+}=c_{3\pm}=\sqrt[3]{\frac{j_3}{4}}$.}\nonumber
\end{eqnarray}

\item in the case $ j_1= j_2>\frac{j_3}{2\sqrt{2}}$, has 2 solutions ($b_{1\pm}, b_{2\pm}, b_{3\pm}$) (\ref{solB}).

\item in the case $ j_1\neq j_2$, $j_3^{\frac{2}{3}}>j_1^{\frac{2}{3}}+j_2^{\frac{2}{3}}$, has 6 solutions $(d_{1\pm k}, d_{2\pm k}, d_{3\pm k})$, $k=1, 2, 3$:
\begin{eqnarray}
&&\!\!\!\!\!\!\!\!\!\!\!\!\!\!\!\!\!\!\!\!d_{1\pm k}=\sqrt[3]{\frac{j_3}{t_k y_{\pm k} z_{\pm k}}},\qquad d_{2\pm k}=y_{\pm k} d_{1\pm k},\qquad d_{3\pm k}=z_{\pm k} d_{1\pm k},\label{solD}\\
&&z_{\pm k}=\lambda_{\pm k}\sqrt{\frac{y_{\pm k}( j_2 y_{\pm k}- j_1)}{ j_2-y_{\pm k} j_1}},\qquad y_{\pm k}=\frac{t_k\pm\sqrt{t_k^2-4}}{2},\nonumber
\end{eqnarray}
where $\lambda_{\pm k}=\sign(\frac{ j_3 y_{\pm k} (1-y_{\pm k})}{ j_2- j_1y_{\pm k}})$, and $t_k$, $k=1, 2, 3$, are solutions of the cubic equation
$$ j_1  j_2 t^3+( j_3^2- j_2^2- j_1^2)t^2-4 j_3^2=0.$$

Note that in the case $j_1\neq j_2$, this cubic equation always has one solution $2<t_1<\frac{ j_2}{ j_1}+\frac{ j_1}{ j_2}$ and also, in the case $j_3^{\frac{2}{3}}>j_1^{\frac{2}{3}}+j_2^{\frac{2}{3}}$, two solutions $t_2, t_3<-2$.

\begin{eqnarray}
\mbox{Moreover, $y_{+k} y_{-k}=1$,\quad $\lambda_{+k} \lambda_{-k}=-1$,\quad $z_{+k} z_{-k}=-1$,}\nonumber\\
\mbox{and $K_k=(\frac{ j_3}{t_k})^\frac{2}{3}$ for each $k=1, 2, 3$.}\nonumber
\end{eqnarray}

We can use the explicit Viete or Cardano formulas for $t_k$:
$$t_k=\Omega+2 \Omega\cos(\frac{1}{3}\arccos(1+\frac{2\beta}{\Omega^3})+\varphi_k),\qquad \varphi_k=\frac{2\pi}{3}, -\frac{2\pi}{3}, 0,$$
$$\Omega:=\frac{\alpha-\beta}{3},\qquad \alpha:=A+\frac{1}{A}>2,\qquad \beta:=\frac{B^2}{A},\qquad A:=\frac{j_2}{j_1},\qquad B:=\frac{j_3}{j_1},$$
$$t_k=\Omega+L+ \frac{\Omega^2}{L}\qquad L:=\sqrt[3]{\Omega^3+2\beta+2\sqrt{\beta(\beta+\Omega^3)}}.$$
Note that $j_3^{\frac{2}{3}}>j_1^{\frac{2}{3}}+j_2^{\frac{2}{3}}$ is equivalent to $2\beta^\frac{1}{3}<\beta-\alpha$. So, we have $\Omega<0$.
\item in the case $ j_1\neq j_2$, $j_3^{\frac{2}{3}}=j_1^{\frac{2}{3}}+j_2^{\frac{2}{3}}$, has 4 solutions: $(d_{1\pm k}, d_{2\pm k}, d_{3\pm k})$, $k=1, 2$, (\ref{solD}) with
    \begin{eqnarray}
    t_1=\sqrt[3]{\frac{j_2}{j_1}}+\sqrt[3]{\frac{j_1}{j_2}}>2,\qquad t_2=-2(\sqrt[3]{\frac{j_2}{j_1}}+\sqrt[3]{\frac{j_1}{j_2}})<-4.\nonumber
    \end{eqnarray}

\item in the case $ j_1\neq j_2$, $j_3^{\frac{2}{3}}<j_1^{\frac{2}{3}}+j_2^{\frac{2}{3}}$, has 2 solutions: $(d_{1\pm 1}, d_{2\pm 1}, d_{3\pm 1})$ (\ref{solD}) with $t_1>2$.
\end{enumerate}
\end{enumerate}
\end{lemma}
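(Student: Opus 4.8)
The plan is to treat the system (\ref{sp44}) case by case according to how many of the parameters $j_1, j_2, j_3$ vanish, exploiting throughout the sign-reduction already noted (we may take $j_k\geq 0$ and then read off the general sign pattern), together with the nontrivial involution of Lemma \ref{thSym0}, which pairs up fully-nonzero solutions with conserved product $K=(b_1b_2b_3)^{2/3}$. The easy cases come first: if all $j_k=0$ then subtracting pairs of equations forces two of the $b_k$ to vanish, giving the three one-parameter axis families; if exactly one $j_k$ is nonzero one checks directly which axis it can lie on, and the asymmetry of (\ref{sp44}) (the $b_3$ equation has $+$, the others $-$) is what makes $j_1=j_2=0,\ j_3\neq 0$ solvable (four solutions, obtained by setting $b_1=\pm b_2$, $|b_1|=|b_2|=|b_3|$) while $j_1=j_3=0$ or $j_2=j_3=0$ are not. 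If exactly one $j_k$ vanishes, the vanishing-variable equation forces either that variable to be $0$ (reducing to the two-variable system of Lemma \ref{thn2}, which is (\ref{casen22}) up to signs) or forces the other two squares to cancel; the case $j_3=0$ gives just the Lemma \ref{thn2} solution with $b_3=0$, while $j_1=0$ (or $j_2=0$) additionally admits the ``$b_1\neq 0$'' branch with $b_2^2=b_3^2$, which exists precisely when $j_3>j_2$, yielding the extra four solutions.

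The substantive case is $j_1,j_2,j_3>0$. Here I would follow the strategy of Lemma \ref{thGenSol}: look for solutions with all $b_k\neq 0$, write $y=b_2/b_1$, $z=b_3/b_1$, and eliminate $b_1$ to get a relation between $y,z$ and the ratios $j_2/j_1$, $j_3/j_1$. From the first two equations one gets $z^2 = \dfrac{y(j_2 y - j_1)}{j_2 - j_1 y}$ (note the sign change relative to (\ref{sp4})), and substituting into the ratio of the first and third equations produces, after clearing denominators, a cubic in the symmetric variable $t=y+1/y$:
\begin{eqnarray}
j_1 j_2 t^3 + (j_3^2 - j_2^2 - j_1^2)t^2 - 4 j_3^2 = 0,\nonumber
\end{eqnarray}
the sign pattern being exactly the ``hyperbolic'' analogue of the cubic in Lemma \ref{thGenSol}(5c). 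Each admissible real root $t$ with $|t|\geq 2$ (so that $y=\tfrac{t\pm\sqrt{t^2-4}}{2}$ is real) and with $z^2\geq 0$ gives a $\pm$-pair of solutions related by Lemma \ref{thSym0}; the sign $\lambda_{\pm k}=\sign\!\big(j_3 y(1-y)/(j_2-j_1 y)\big)$ records which square root of $z^2$ is forced by the third equation. One then has to do a root-count of the cubic: its value at $t=2$ is $(j_1-j_2)^2(\text{positive}) - \text{stuff}$ — more precisely one checks $g(2)$ and $g(-2)$ have signs showing there is always one root $t_1\in(2,\, j_1/j_2+j_2/j_1)$ when $j_1\neq j_2$, and that two further roots $t_2,t_3<-2$ appear exactly when the discriminant-type condition $j_3^{2/3}>j_1^{2/3}+j_2^{2/3}$ holds (with the boundary case $j_3^{2/3}=j_1^{2/3}+j_2^{2/3}$ giving a double root at $t=-2(j_1/j_2+j_2/j_1)^{\text{?}}$, i.e.\ $y=-1$, $b_1=-b_2$). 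The degenerate subfamily $j_1=j_2$ must be handled separately because then $y=1$ is a spurious root of the cubic (the denominator $j_2-j_1y$ vanishes), so one instead sets $b_1=b_2$ directly, reducing to a quadratic in $z$, $j_3 z^2 + 2j_1 z - j_3 = 0$ wait — more carefully to the two branches (\ref{solB}) and (\ref{solC}); the threshold $j_3 = 2\sqrt2\, j_1$ is where the discriminant $j_3^2-8j_1^2$ of the $c$-branch changes sign, giving the $6/4/2$ count, and one records the coincidences $s^+s^-=2$, $w_+^\pm w_-^\pm=-1$, $z_+z_-=-1$ by direct substitution. Finally the Viète/Cardano formulas for $t_k$ are obtained by the standard depressed-cubic substitution applied to the displayed cubic, with the trigonometric form valid precisely in the three-real-root regime $\Omega<0$, which one checks is equivalent to $2\beta^{1/3}<\beta-\alpha$.

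The main obstacle is the root-counting and reality analysis of the cubic in $t$ in the generic case $j_1\neq j_2$, $j_1,j_2,j_3>0$: one must pin down exactly when there are one versus three real roots $\leq -2$, translate this into the clean criterion $j_3^{2/3}\lessgtr j_1^{2/3}+j_2^{2/3}$, verify for each root that the induced $z^2$ is nonnegative and that the sign $\lambda_{\pm k}$ is well-defined, and make sure no spurious solutions are introduced when clearing the denominators $j_2-j_1 y$ and $j_2-j_1 y^{-1}$. The bookkeeping that each valid root contributes exactly a $\pm$-pair under Lemma \ref{thSym0} — and that these pairs are genuinely distinct across different $t_k$ — is what turns the algebra into the stated solution counts $2$, $4$, or $6$. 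Everything else (the all-zero, one-nonzero, and one-zero cases, and the $j_1=j_2$ subcases) is a direct, if somewhat lengthy, verification using Lemma \ref{thn2}, Lemma \ref{thSym0}, and elementary quadratic-formula manipulations.
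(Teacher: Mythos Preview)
Your plan is essentially the same approach as the paper's proof (Appendix A): the change of variables $y=b_2/b_1$, $z=b_3/b_1$, elimination to $z^2=y(Ay-1)/(A-y)$, reduction to the cubic in $t=y+1/y$, and separate treatment of $j_1=j_2$ versus $j_1\neq j_2$ all match exactly, as does the root-location analysis via evaluating the cubic at $t=\pm 2$ and $t=A+1/A$.

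One spot where your sketch is thinner than the paper and worth flagging: in the $j_1=j_2$ case you write ``sets $b_1=b_2$ directly \ldots\ wait --- more carefully to the two branches''. The paper does \emph{not} simply set $b_1=b_2$; rather, with $A=1$ the relation $(A-y)z^2=y(Ay-1)$ factors as $(1-y)(y+z^2)=0$, giving two branches $y=1$ (the $b$-solutions, a quadratic in $z$) and $y=-z^2$ (the $c$-solutions, a quintic in $z$ that reduces via $s=z-1/z$ to $s^2-Bs+2=0$, whose discriminant $B^2-8$ produces the $2\sqrt{2}$ threshold). Your self-correction shows you sensed this, but the actual mechanism for the $c$-branch is the $y=-z^2$ factor, not an additional ansatz. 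The other point the paper spells out that you leave as ``translate into the clean criterion'' is the equivalence $(B^2-A^2-1)^3>27A^2B^2 \Leftrightarrow B^{2/3}>A^{2/3}+1$, which they prove by noting $B^2=(A^{2/3}+1)^3$ is a root of the difference and checking the residual quadratic in $B^2$ has negative discriminant $-27A^{4/3}(A^{2/3}-1)^2$.
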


\begin{proof}  The detailed proof is given in Appendix A. $\blacksquare$ \end{proof}

\subsubsection{The case \texorpdfstring{$d_J=0$}{dJ=0}, \texorpdfstring{$\rank (J)=3$}{rank J=3}.}

We have the following four subcases 1) $x_J=3$, $y_J=0$; 2) $x_J=0$, $y_J=3$; 3) $x_J=1$, $y_J=2$; 4) $x_J=2$, $y_J=1$. It is important which of the three diagonal elements of the matrix $J$ are from the first $p$ rows and which of them are not, and this depends on $x_J$ and $y_J$. In different cases, we obtain different systems of equations, because of the signs before the summands $(a_k)^2$, $k=1, 2, 3$.

We remind that we consider only positive elements $j_1$, $j_2$, $j_3$ (they are called hyperbolic singular values of the matrix $J$) for simplicity.

In the case $d_J=0$, $x_J=3$, $y_J=0$, we have
$$\Sigma^J=\left(
                 \begin{array}{ccc}
                  j_1 & 0 & 0 \\
                  0 & j_2 & 0 \\
                  0 & 0 & j_3 \\
                  \Zero & \Zero & \Zero \\ \hline
                  \Zero & \Zero & \Zero \\
                 \end{array}
               \right),\qquad j_1, j_2, j_3\neq 0.
$$
In this case, we have $d_A=0$, $x_A=3$, $y_A=0$, and we are looking for a solution in the form
$$
\Psi^A=\left(
                 \begin{array}{ccc}
                  a_1 & 0 & 0 \\
                  0 & a_2 & 0 \\
                  0 & 0 & a_3 \\
                  \Zero & \Zero & \Zero \\ \hline
                  \Zero & \Zero & \Zero \\
                 \end{array}
               \right),\qquad a_1, a_2, a_3\neq 0.
$$
We obtain the following system of three cubic equations
\begin{eqnarray}
-a_1((a_2)^2+(a_3)^2)&=&j_1,\nonumber\\
-a_2((a_1)^2+(a_3)^2)&=&j_2,\nonumber\\
-a_3((a_1)^2+(a_2)^2)&=&j_3.\nonumber
\end{eqnarray}
This system is reduced to the system (\ref{sp4}) using the change of variables $b_k=-a_k$, $k=1, 2, 3$.
We have one or two solutions. This depends on the numbers $j_1$, $j_2$ and $j_3$.
If $j:=j_1=j_2=j_3\neq 0$, then we have the unique solution
\begin{eqnarray}
\Psi^A=\left(
                 \begin{array}{ccc}
                  a & 0 & 0 \\
                  0 & a & 0 \\
                  0 & 0 & a \\
                  \Zero & \Zero & \Zero \\ \hline
                  \Zero & \Zero & \Zero \\
                 \end{array}
               \right),\qquad a=-\sqrt[3]{\frac{j}{2}},\label{Ad030}
\end{eqnarray}
with
\begin{eqnarray}
F^{12}=-\sqrt[3]{\frac{j^2}{4}}\tau^3,\quad F^{23}=-\sqrt[3]{\frac{j^2}{4}}\tau^1,\quad F^{31}=-\sqrt[3]{\frac{j^2}{4}}\tau^2,\label{Fd030}\\
F^2=\frac{-3}{2}\sqrt[3]{\frac{j^4}{16}}I_2\neq 0\label{Fqd030}.
\end{eqnarray}
If $j_1$, $j_2$, and $j_3$ are not all the same, then we have the following two different solutions
\begin{eqnarray}
\Psi^A=\left(
                 \begin{array}{ccc}
                  -b_{1\pm} & 0 & 0 \\
                  0 & -b_{2\pm} & 0 \\
                  0 & 0 & -b_{3\pm} \\
                  \Zero & \Zero & \Zero \\ \hline
                  \Zero & \Zero & \Zero \\
                 \end{array}
               \right),\label{Ad030n}
\end{eqnarray}
where $b_{k\pm}$, $k=1, 2, 3$ are from Case (v) of Lemma \ref{thGenSol}, with
\begin{eqnarray}
F^{12}_{\pm}=-b_{1\pm}b_{2\pm}\tau^3,\quad F^{23}_{\pm}=-b_{2\pm}b_{3\pm}\tau^1,\quad F^{31}_{\pm}=-b_{3\pm}b_{1\pm}\tau^2,\label{Fd030n}\\
F^2_{\pm}=\frac{-1}{2}((b_{1\pm}b_{2\pm})^2+(b_{2\pm}b_{3\pm})^2+(b_{3\pm}b_{1\pm})^2) I_2\neq 0.\label{Fqd030n}
\end{eqnarray}
In the next lemma, we give the explicit form of (\ref{Fqd030n}). You can find the proof in \cite{Shirokov1}.

\begin{lemma}\label{lemF2}\cite{Shirokov1} In the case of not all the same $j_1$, $j_2$, $j_3$, (\ref{Fqd030n}) takes the form:
\begin{enumerate}
  \item in the case $j_1=j_2>j_3>0$ (or similar cases with circular permutation):
\begin{eqnarray}
F^2_{\pm}=\frac{-K^2 (1+2z_{\pm}^2)}{2 z_{\pm}^{\frac{4}{3}}}I_2,\qquad F^2_+\neq F^2_-,\label{F21}
\end{eqnarray}
where
$$z_{\pm}=\frac{j_1\pm\sqrt{j_1^2-j_3^2}}{j_3},\qquad K=(\frac{j_3}{2})^{\frac{2}{3}}.$$
  \item in the case $j_3>j_1=j_2>0$ (or similar cases with circular permutation):
\begin{eqnarray}
F^2_{\pm}=\frac{-K^2(s^2-1)}{2}I_2,\qquad  F^2_+=F^2_-,\label{F22}
\end{eqnarray}
where
$$s=\frac{j_3+\sqrt{j_3^2+8j_1^2}}{2j_1}>2,\qquad K=(\frac{j_1}{s})^{\frac{2}{3}}.$$
 \item in the case of all different $j_1$, $j_2$, $j_3 >0$:
\begin{eqnarray}
F^2_{\pm}=\frac{-K^2 (y_{\pm}^2+z_{\pm}^2+y_{\pm}^2 z_{\pm}^2)}{2(y_{\pm} z_{\pm})^{\frac{4}{3}}}I_2,\qquad  F^2_+\neq F^2_-,\label{F23}
\end{eqnarray}
where $$K=(\frac{j_3}{t_0})^{\frac{2}{3}},$$
and $y_{\pm}$, $z_\pm$, $t_0$ are from Case (v) - (c) of Lemma \ref{thGenSol}.
\end{enumerate}
In all cases of Lemma, the expression $K$ is the invariant for each pair of solutions (see Lemmas \ref{thSym} and \ref{thGenSol}).
\end{lemma}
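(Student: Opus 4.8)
The plan is to substitute the explicit solutions from Case (v) of Lemma \ref{thGenSol} into the already-derived formula (\ref{Fqd030n}) for $F^2_\pm$ and to simplify the fractional powers by means of the subcase-specific value of the invariant $K$ of Lemma \ref{thSym}. In subcase (i), where $b_{1\pm}=b_{2\pm}$ and $b_{3\pm}=z_\pm b_{1\pm}$, the bracket in (\ref{Fqd030n}) equals $b_{1\pm}^4(1+2z_\pm^2)$, and since $b_{1\pm}^2=(j_3/(2z_\pm))^{2/3}=K z_\pm^{-2/3}$ with $K=(j_3/2)^{2/3}$ this gives (\ref{F21}). In subcase (ii), where $b_{1\pm}=b_3/w_\pm$, $b_{2\pm}=w_\pm b_3$ and $b_{3\pm}=b_3=\sqrt[3]{j_1/s}$, the bracket equals $b_3^4(1+w_\pm^2+w_\pm^{-2})$, and using $w_++w_-=s$ together with $w_+w_-=1$ one gets $1+w_\pm^2+w_\pm^{-2}=(w_\pm+w_\pm^{-1})^2-1=s^2-1$, which combined with $b_3^2=(j_1/s)^{2/3}=K$ gives (\ref{F22}). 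In subcase (iii), where $b_{2\pm}=y_\pm b_{1\pm}$ and $b_{3\pm}=z_\pm b_{1\pm}$, the bracket equals $b_{1\pm}^4(y_\pm^2+z_\pm^2+y_\pm^2z_\pm^2)$, and since $b_{1\pm}^2=(j_3/(t_0 y_\pm z_\pm))^{2/3}=K(y_\pm z_\pm)^{-2/3}$ with $K=(j_3/t_0)^{2/3}$ this gives (\ref{F23}).

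It then remains to decide when $F^2_+$ equals $F^2_-$. In subcase (ii) the answer $-\frac{1}{2}K^2(s^2-1)I_2$ is visibly independent of the sign label, since both $s$ and $K$ are functions of $j_1$ and $j_3$ only; hence $F^2_+=F^2_-$. In subcase (i) I would use $z_+z_-=1$ from Lemma \ref{thGenSol} to write $F^2_-$ through $z_+$ and compare it with $F^2_+$: clearing denominators and setting $v=z_+^{1/3}$, the equality $F^2_+=F^2_-$ reduces to $(v^2-1)^3(v^2+1)=0$, hence to $z_+=1$; but $z_+=(j_1+\sqrt{j_1^2-j_3^2})/j_3>1$ because $j_1>j_3$, a contradiction, so $F^2_+\neq F^2_-$. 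In subcase (iii) the same strategy, now using $y_+y_-=z_+z_-=1$ and the defining cubic of $t_0$, leads to a polynomial relation incompatible with $y_+>1$ (which holds since $t_0>2$), so again $F^2_+\neq F^2_-$.

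All these computations are elementary once the solution formulas of Lemma \ref{thGenSol} are in hand; the only genuinely delicate point is the strict inequality $F^2_+\neq F^2_-$ in the two non-symmetric subcases, where one must exploit the actual structure of the roots rather than merely substitute. Since the system (\ref{sp4}) and its complete solution set coincide with those analysed in \cite{Shirokov1}, the detailed verification is carried out there.
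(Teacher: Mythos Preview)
Your proposal is correct and follows the same substitution-and-simplification strategy that the paper (and its predecessor \cite{Shirokov1}, where the proof is actually given) uses: plug the explicit $b_{k\pm}$ from Lemma~\ref{thGenSol}(v) into (\ref{Fqd030n}), factor out $b_{1\pm}^4$ (or $b_3^4$), and rewrite the prefactor via the invariant $K$. Your treatment of subcases (i)--(iii) is accurate, including the factorisation $(v^2-1)^3(v^2+1)=0$ in subcase~(i) and the use of $w_\pm+w_\pm^{-1}=s$ in subcase~(ii); for subcase~(iii) the paper's own Appendix~B (for the closely analogous Lemma~\ref{lemmaF222}) carries out exactly the polynomial reduction you sketch, using $y_+y_-=1$ and the cubic for $t_0$ to derive a contradiction, so your deferral to \cite{Shirokov1} for the full computation is appropriate.
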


In the case $d_J=0$, $x_J=0$, $y_J=3$, we have
$$\Sigma^J=\left(
                 \begin{array}{ccc}
                  \Zero & \Zero & \Zero \\ \hline
                 j_1 & 0 & 0 \\
                  0 & j_2 & 0 \\
                  0 & 0 & j_3 \\
                \Zero & \Zero & \Zero \\
                 \end{array}
               \right),\qquad j_1, j_2, j_3\neq 0.
$$
In this case, we have $d_A=0$, $x_A=0$, $y_A=3$, and we are looking for a solution in the form
$$
\Psi^A=\left(
                 \begin{array}{ccc}
                  \Zero & \Zero & \Zero \\ \hline
                  a_1 & 0 & 0 \\
                  0 & a_2 & 0 \\
                  0 & 0 & a_3 \\
                  \Zero & \Zero & \Zero \\
                 \end{array}
               \right),\qquad a_1, a_2, a_3\neq 0.
$$
We obtain the following system of three cubic equations
\begin{eqnarray}
a_1((a_2)^2+(a_3)^2)&=&j_1,\nonumber\\
a_2((a_1)^2+(a_3)^2)&=&j_2,\nonumber\\
a_3((a_1)^2+(a_2)^2)&=&j_3.\nonumber
\end{eqnarray}
Using the change of variables $b_k=a_k$, $k=1, 2, 3$, we obtain the system (\ref{sp4}). We have one or two solutions. This depends on the numbers $j_1$, $j_2$ and $j_3$.
If $j:=j_1=j_2=j_3\neq 0$, then we have the unique solution
\begin{eqnarray}
\Psi^A=\left(
                 \begin{array}{ccc}
                  \Zero & \Zero & \Zero \\ \hline
                  a & 0 & 0 \\
                  0 & a & 0 \\
                  0 & 0 & a \\
                  \Zero & \Zero & \Zero \\
                 \end{array}
               \right),\qquad a=\sqrt[3]{\frac{j}{2}},\label{Ad003}
\end{eqnarray}
with
\begin{eqnarray}
F^{p+1\,p+2}=-\sqrt[3]{\frac{j^2}{4}}\tau^3,\, F^{p+2\, p+3}=-\sqrt[3]{\frac{j^2}{4}}\tau^1,\, F^{p+3\, p+1}=-\sqrt[3]{\frac{j^2}{4}}\tau^2,\label{Fd003}\\
F^2=\frac{-3}{2}\sqrt[3]{\frac{j^4}{16}}I_2\neq 0\label{Fqd003}.
\end{eqnarray}
If $j_1$, $j_2$, and $j_3$ are not all the same, then we have the following two different solutions
\begin{eqnarray}
\Psi^A=\left(
                 \begin{array}{ccc}
                  \Zero & \Zero & \Zero \\ \hline
                  b_{1\pm} & 0 & 0 \\
                  0 & b_{2\pm} & 0 \\
                  0 & 0 & b_{3\pm} \\
                  \Zero & \Zero & \Zero \\
                 \end{array}
               \right),\label{Ad003n}
\end{eqnarray}
where $b_{k\pm}$, $k=1, 2, 3$ are from Case (v) of Lemma \ref{thGenSol}, with
\begin{eqnarray}
F^{p+1\, p+2}_{\pm}=-b_{1\pm}b_{2\pm}\tau^3,\, F^{p+2\, p+3}_{\pm}=-b_{2\pm}b_{3\pm}\tau^1,\, F^{p+3\,p+1}_{\pm}=-b_{3\pm}b_{1\pm}\tau^2,\label{Fd003n}\\
F^2=\frac{-1}{2}((b_{1\pm}b_{2\pm})^2+(b_{2\pm}b_{3\pm})^2+(b_{3\pm}b_{1\pm})^2) I_2\neq 0.\label{Fqd003n}
\end{eqnarray}
The explicit expressions for $F^2$ (\ref{Fqd003n}) are (\ref{F21}), (\ref{F22}), (\ref{F23}) respectively for 3 subcases of Case (v) in Lemma \ref{thGenSol}.

In the case $d_J=0$, $x_J=2$, $y_J=1$, we have
$$\Sigma^J=\left(
                 \begin{array}{ccc}
                  j_1 & 0 & 0 \\
                  0 & j_2 & 0 \\
                  \Zero & \Zero & \Zero \\ \hline
                      0 & 0 & j_3 \\
                  \Zero & \Zero & \Zero \\
                 \end{array}
               \right),\qquad j_1, j_2, j_3\neq 0.
$$
In this case, we have $d_A=0$, $x_A=2$, $y_A=1$, and we are looking for a solution in the form
$$
\Psi^A=\left(                 \begin{array}{ccc}
                  a_1 & 0 & 0 \\
                  0 & a_2 & 0 \\
                  \Zero & \Zero & \Zero \\ \hline
                  0 & 0 & a_3 \\
                  \Zero & \Zero & \Zero \\
                 \end{array}
               \right),\qquad a_1, a_2, a_3\neq 0.
$$
We obtain the following system of three cubic equations
\begin{eqnarray}
-a_1((a_2)^2-(a_3)^2)&=&j_1,\nonumber\\
-a_2((a_1)^2-(a_3)^2)&=&j_2,\nonumber\\
-a_3((a_1)^2+(a_2)^2)&=&j_3.\nonumber
\end{eqnarray}
This system is reduced to the system (\ref{sp44}) using the change of variables $b_k=-a_k$, $k=1, 2, 3$.
We have 2, 4, or 6 solutions:
\begin{eqnarray}
\Psi^A=\left(                 \begin{array}{ccc}
                  -b_1 & 0 & 0 \\
                  0 & -b_2 & 0 \\
                  \Zero & \Zero & \Zero \\ \hline
                  0 & 0 & -b_3 \\
                  \Zero & \Zero & \Zero \\
                 \end{array}
               \right),\label{Ad0212}
\end{eqnarray}
where $b_{k}$, $k=1, 2, 3$ are from Case (vi) of Lemma \ref{thSh2}.

In all these cases, we have
\begin{eqnarray}
F^{12}=-b_1 b_2 \tau^3,\qquad F^{2\, p+1}=-b_2b_3\tau^1,\qquad F^{ p+1\, 1}=-b_3b_1 \tau^2,\label{Fd021}\\
F^2=\frac{-1}{2}((b_1b_2)^2-(b_2b_3)^2-(b_3b_1)^2)I_2.\label{Fqd021}
\end{eqnarray}
We give the explicit expressions for $F^2$ in different cases in Lemma \ref{lemmaF222}.

\begin{lemma}\label{lemmaF222} In the case of all nonzero $j_1>0$, $j_2>0$, $j_3>0$, the expression (\ref{Fqd021}) takes the form:
\begin{enumerate}
        \item in the case $j_1= j_2<\frac{j_3}{2\sqrt{2}}$
\begin{eqnarray}
\!\!\!\!\!\!\!\!\!\!\!\!\!\!\!\!\!\!\!\!\!\!\!\!\!\!\!\!\!\!\!\!\!\!\mbox{for $b_\pm$:}\quad F^2_{\pm}=\frac{K^2(2z_{\pm}^2-1)}{2 z_\pm^{\frac{4}{3}}}I_2\neq 0,\qquad F^2_+\neq F^2_-,\label{F2ww1}\\
\mbox{where $z_{\pm}=\frac{- j_1\pm\sqrt{ j_1^2+ j_3^2}}{ j_3}$,\quad $K=(\frac{ j_3}{2})^\frac{2}{3}$;}\nonumber\\
\!\!\!\!\!\!\!\!\!\!\!\!\!\!\!\!\!\!\!\!\!\!\!\!\!\!\!\!\!\!\!\!\!\!\mbox{for $c^\pm_\pm$:}\quad (F_{\pm}^{\pm})^2=\frac{K^2(1+(s^\pm)^2)}{2}I_2>0,\quad (F_+^+)^2=(F_-^+)^2\neq(F_+^-)^2=(F_-^-)^2,\nonumber\\
\mbox{where $s^{\pm}=\frac{ j_3\pm\sqrt{ j_3^2-8 j_1^2}}{2 j_1}$,\quad $K=(\frac{ j_1}{s^\pm})^\frac{2}{3}$;}\nonumber
\end{eqnarray}
The expression $F^2$ for $c^\pm_\pm$ coincides with $F^2$ for $b_\pm$ only in the case:
$$s^*:=\sqrt{13+\sqrt{193-6^\frac{4}{3}}+\sqrt{386+6^\frac{4}{3}+\frac{5362}{193-6^\frac{4}{3}}}}\approx 7.39438,$$
$$B^*:=(\frac{j_3}{j_1})^*=\frac{(s^*)^2+2}{s^*}\approx 7.66486>2\sqrt{2},$$
$$z_+^*:=\frac{-1+\sqrt{1+(B^*)^2}}{B^*}\approx 0.878009.$$
This means that if $B=B^*$, then $(F^+_{\pm})^2$ for $(s^+)^*=\frac{B^*+\sqrt{(B^*)^2-8}}{2}$ coincides with $F_+^2$ for $z_+^*=\frac{-1+\sqrt{1+(B^*)^2}}{B^*}$.
\item in the case $ j_1= j_2=\frac{j_3}{2\sqrt{2}}$,
\begin{eqnarray}
\mbox{for $b_\pm$:}\quad F^2_{+}=0,\quad F^2_-=\frac{3j_3^{\frac{4}{3}}}{8}I_2>0,\label{F2ww3}\\
\mbox{for $c_\pm$:}\quad (F_{\pm})^2=\frac{3}{2}(\frac{j_1}{\sqrt{2}})^\frac{4}{3}I_2=\frac{3 j_3^{\frac{4}{3}}}{2^{\frac{11}{3}}}I_2> 0,\nonumber
\end{eqnarray}
\item in the case $ j_1= j_2>\frac{j_3}{2\sqrt{2}}$,
\begin{eqnarray}
\!\!\!\!\!\!\!\!\!\!\!\!\!\!\!\!\!\!\!\!\!\!\!\!\!\!\!\!\!\!\!\!\!\!\mbox{for $b_\pm$:}\quad F^2_{\pm}=\frac{K^2(2z_{\pm}^2-1)}{2 z_\pm^{\frac{4}{3}}}I_2\neq 0,\qquad F^2_+\neq F^2_-,\label{F2ww5}\\
\mbox{where $z_{\pm}=\frac{- j_1\pm\sqrt{ j_1^2+ j_3^2}}{ j_3}$,\quad $K=(\frac{ j_3}{2})^\frac{2}{3}$;}\nonumber
\end{eqnarray}
\item in the case $ j_1\neq j_2$, $j_3^{\frac{2}{3}}>j_1^{\frac{2}{3}}+j_2^{\frac{2}{3}}$,
\begin{eqnarray}
\!\!\!\!\!\!\!\!\!\!\!\!\!\!\!\!\!\!\!\!\!\!\!\!\!\!\!\!\!\!\!\!\!\!\mbox{for $d_{\pm k}$:}\quad F^2_{\pm k}=\frac{K_k^2(y_{\pm k}^2z^2_{\pm k}-y_{\pm k}^2+z_{\pm k}^2)}{2(y_{\pm k}z_{\pm k })^{\frac{4}{3}}}I_2\neq 0,\quad k=1, 2, 3,\quad F_{+k}^2 \neq F_{-k}^2,\label{F2ww6}\\
\!\!\!\!\!\!\!\!\!\!\!\!\!\!\!\!\!\!\!\!\!\!\!\!\!\!\!\!\!\!\!\!\!\!\mbox{where $K_k=(\frac{j_3}{t_k})^{\frac{2}{3}}$, and $y_{\pm k}$, $z_{\pm k}$, $t_k$ are from Case (vi) - (d) of Lemma \ref{thSh2}.}\nonumber
\end{eqnarray}
\item in the case $ j_1\neq j_2$, $j_3^{\frac{2}{3}}=j_1^{\frac{2}{3}}+j_2^{\frac{2}{3}}$,
\begin{eqnarray}
F^2_{+1}=0,\quad F^2_{-1}=\frac{j_1^\frac{4}{3}+j_1^\frac{2}{3}j_2^\frac{2}{3}+j_2^\frac{4}{3}}{2}I_2> 0,\label{F2ww7}\\
F^2_{\pm 2}=\frac{K_2^2(y_{\pm 2}^2z^2_{\pm 2}-y_{\pm 2}^2+z_{\pm 2}^2)}{2(y_{\pm 2}z_{\pm 2 })^{\frac{4}{3}}}I_2\neq 0,\qquad t_2=-2(\sqrt[3]{\frac{j_2}{j_1}}+\sqrt[3]{\frac{j_1}{j_2}}),\nonumber\\
F^2_{+2}\neq F^2_{-2},\qquad \mbox{$F^2_{\pm 2}$ do not coincide with $F^2_{-1}$}.\nonumber
\end{eqnarray}
\item in the case $ j_1\neq j_2$, $j_3^{\frac{2}{3}}<j_1^{\frac{2}{3}}+j_2^{\frac{2}{3}}$,
\begin{eqnarray}
F^2_{\pm 1}=\frac{K_1^2(y_{\pm 1}^2z^2_{\pm 1}-y_{\pm 1}^2+z_{\pm 1}^2)}{2(y_{\pm 1}z_{\pm 1 })^{\frac{4}{3}}}I_2\neq 0,\qquad F^2_{+1}\neq F^2_{-1}.\label{F2ww8}
\end{eqnarray}
\end{enumerate}
In all cases of Lemma, the expression $K$ is the invariant for each pair of solutions (see Lemmas \ref{thSym0} and \ref{thSh2}).
\end{lemma}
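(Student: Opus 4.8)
The plan is a direct, case-by-case substitution of the solution families of the system (\ref{sp44}) listed in Case (vi) of Lemma \ref{thSh2} into the closed formula (\ref{Fqd021}),
$$F^2=-\frac{1}{2}\big((b_1b_2)^2-(b_2b_3)^2-(b_3b_1)^2\big)I_2,$$
followed by simplification that uses the structural form of each family together with the algebraic identities already recorded in Lemma \ref{thSh2}: the products $z_+z_-$, $w_+w_-$, $s^+s^-$, the conserved quantities $K$, $K_k$, and the cubic relations that express $b_1^2$, $c_3^2$, $d_1^2$ through those conserved quantities.

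First I would treat the ``$b$--type'' families $(b_{1\pm},b_{2\pm},b_{3\pm})$ of (\ref{solB}) and (\ref{solB2}), for which $b_{1\pm}=b_{2\pm}$ and $b_{3\pm}=z_\pm b_{1\pm}$. Then $b_1b_2=b_{1\pm}^2$ and $b_2b_3=b_3b_1=z_\pm b_{1\pm}^2$, so $F^2=\frac{1}{2}b_{1\pm}^4(2z_\pm^2-1)I_2$, and inserting $b_{1\pm}^2=(j_3/(2z_\pm))^{2/3}=Kz_\pm^{-2/3}$ with $K=(j_3/2)^{2/3}$ gives $F^2_\pm=\frac{K^2(2z_\pm^2-1)}{2z_\pm^{4/3}}I_2$, which is the expression in (\ref{F2ww1}), (\ref{F2ww5}) and in the $b_\pm$ line of (\ref{F2ww3}). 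The inequality $F^2_+\neq F^2_-$ amounts to showing that the one-variable function $z\mapsto(2z^2-1)z^{-4/3}$ separates the two admissible values, which are linked by $z_+z_-=-1$ with $z_+>0>z_-$: setting $F^2_+=F^2_-$ and substituting $u=z_+^{2/3}>0$ reduces it to $(u-1)(u+1)^3=0$, whose only real positive root $u=1$, i.e. $z_+=1$, is not attained in case (a) or (c). At the borderline $j_1=j_2=j_3/(2\sqrt2)$ one has $z_+=1/\sqrt2$, hence $2z_+^2-1=0$ and $F^2_+=0$, which produces the $b_\pm$ line of (\ref{F2ww3}).

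Next the ``$c$--type'' families of (\ref{solC}) and (\ref{solB2}): there $c_1=c_3/w$, $c_2=-wc_3$, $c_{3\pm}=c_3$ for the relevant $w\in\{w^\pm_\pm\}$, so $b_1b_2=-c_3^2$, $b_2b_3=-wc_3^2$, $b_3b_1=w^{-1}c_3^2$ and $F^2=-\frac{1}{2}c_3^4\big(1-w^2-w^{-2}\big)I_2$. Since each $w^\pm_\pm$ is a root of $w^2-s^\pm w-1=0$, one has $w-w^{-1}=s^\pm$, hence $w^2+w^{-2}=(s^\pm)^2+2$, and $F^2=\frac{1}{2}c_3^4(1+(s^\pm)^2)I_2$; together with $c_3^2=(j_1/s^\pm)^{2/3}=K$ this is $(F^\pm_\pm)^2=\frac{K^2(1+(s^\pm)^2)}{2}I_2>0$, as in (\ref{F2ww1}) and (\ref{F2ww3}). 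Independence of the subscript sign is immediate, and the two superscript values differ because $s^+s^-=2$ with $s^+\ne s^-$. Finally, for the general ``$d$--type'' families (\ref{solD}), with $d_2=yd_1$ and $d_3=zd_1$, the same computation gives $F^2_{\pm k}=\frac{K_k^2(y_{\pm k}^2z_{\pm k}^2-y_{\pm k}^2+z_{\pm k}^2)}{2(y_{\pm k}z_{\pm k})^{4/3}}I_2$ with $K_k=(j_3/t_k)^{2/3}$, yielding (\ref{F2ww6}), (\ref{F2ww8}) and the generic part of (\ref{F2ww7}); the degenerate branch in (\ref{F2ww7}) occurs because for $t_1=\sqrt[3]{j_2/j_1}+\sqrt[3]{j_1/j_2}$ the corresponding $y,z$ satisfy $y^2z^2-y^2+z^2=0$, which is exactly the defining relation $j_3^{2/3}=j_1^{2/3}+j_2^{2/3}$, while the surviving branch evaluates to $\frac{j_1^{4/3}+j_1^{2/3}j_2^{2/3}+j_2^{4/3}}{2}I_2$ after substituting $y$ and $z$. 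In each pair of solutions the conserved $K$ (or $K_k$) of Lemmas \ref{thSym0} and \ref{thSh2} is precisely what survives, which explains its appearance throughout; the remaining non-coincidence statements ($F^2_{+k}\ne F^2_{-k}$ and $F^2_{\pm2}$ not equal to $F^2_{-1}$) follow from $y_{+k}y_{-k}=1$, $z_{+k}z_{-k}=-1$ by inspecting the same one-variable functions.

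The main obstacle is the coincidence analysis inside case (a): deciding for which ratio $B=j_3/j_1>2\sqrt2$ a $c$--value $(F^+_\pm)^2$ can equal a $b$--value $F^2_+$. Writing both through $B$ via $z_+=(-1+\sqrt{1+B^2})/B$ and $s^+=(B+\sqrt{B^2-8})/2$, and using the quadratic relations $Bz_+^2+2z_+-B=0$ and $(s^+)^2-Bs^++2=0$ to clear the square roots, the equality $(j_3/2)^{4/3}(2z_+^2-1)z_+^{-4/3}=(j_1/s^+)^{4/3}(1+(s^+)^2)$ reduces, after cancelling the common power of $j_1$ and raising to the cube to remove the remaining fractional powers, to a single polynomial condition whose only admissible positive root gives the stated $s^*$; from it $B^*=((s^*)^2+2)/s^*$ and $z_+^*=(-1+\sqrt{1+(B^*)^2})/B^*$ are recovered, and one checks $B^*>2\sqrt2$ (so the configuration really lies in case (a)) and that no other pairing of a $b$--value with a $c$--value coincides for any admissible $B$. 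I expect this elimination and root-isolation to require the most careful bookkeeping, while everything else is mechanical substitution; the complete computations are the content of the relevant part of Appendix A. $\blacksquare$
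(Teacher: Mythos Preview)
Your overall approach is the same as the paper's (the proof is in Appendix~B, not~A): substitute each solution family from Lemma~\ref{thSh2}(vi) into (\ref{Fqd021}) and simplify using the structural relations $b_{1\pm}=b_{2\pm}$, $b_{3\pm}=z_\pm b_{1\pm}$, $c_{1}=c_3/w$, $c_{2}=-wc_3$, $d_2=yd_1$, $d_3=zd_1$, together with $w-w^{-1}=s$, $z_+z_-=-1$, $y_+y_-=1$, $s^+s^-=2$, and the conserved quantities $K$, $K_k$. Your derivations of the $F^2$ formulas for the $b$-, $c$-, and $d$-families are correct and match the paper line for line, and your factorization $(u-1)(u+1)^3=0$ with $u=z_+^{2/3}$ is a clean variant of the paper's argument for $F^2_+\ne F^2_-$ in the $b$-case.

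Where you genuinely underestimate the work is in the last sentence. The claim $F^2_{+k}\ne F^2_{-k}$ for the $d$-family does \emph{not} follow ``by inspecting the same one-variable functions'': there are now two coupled variables $y,z$, and the paper's argument eliminates $z$ via $z^2=\frac{y(Ay-1)}{A-y}$, reduces $F^2_+=F^2_-$ to a degree-$10$ polynomial in $y_+$, passes to $t=y_++y_+^{-1}$ and then to $d=t-2/t$, factors to $d\in\{A,A^{-1}\}$, and finally shows that each of these two values contradicts the cubic $At^3+(B^2-A^2-1)t^2-4B^2=0$. This chain genuinely uses the cubic for $t$, which your sketch does not invoke. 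Similarly, the assertion that $F^2_{\pm2}\ne F^2_{-1}$ in case~(v) is not obtained by inspection: the paper explicitly defers this check, as well as the $b/c$ coincidence analysis in case~(i), to Wolfram Mathematica, so your plan to reduce to a single polynomial is correct in spirit, but you should expect computer-assisted verification rather than a tidy hand factorization for those two items.
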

\begin{proof}
We give the proof of this Lemma in Appendix B. $\blacksquare$
\end{proof}

In the case $d_J=0$, $x_J=1$, $y_J=2$, we have
$$\Sigma^J=\left(
                 \begin{array}{ccc}
                  j_1 & 0 & 0 \\
                  \Zero & \Zero & \Zero \\ \hline
             0 & j_2 & 0 \\
                      0 & 0 & j_3 \\
                  \Zero & \Zero & \Zero \\
                 \end{array}
               \right),\qquad j_1, j_2, j_3\neq 0.
$$
In this case, we have $d_A=0$, $x_A=1$, $y_A=2$, and we are looking for a solution in the form
$$
\Psi^A=\left(                 \begin{array}{ccc}
                  a_1 & 0 & 0 \\
                  \Zero & \Zero & \Zero \\ \hline
                  0 & a_2 & 0 \\
                  0 & 0 & a_3 \\
                  \Zero & \Zero & \Zero \\
                 \end{array}
               \right),\qquad a_1, a_2, a_3\neq 0.
$$
We obtain the following system of three cubic equations
\begin{eqnarray}
-a_1(-(a_2)^2-(a_3)^2)&=&j_1,\nonumber\\
-a_2((a_1)^2-(a_3)^2)&=&j_2,\nonumber\\
-a_3((a_1)^2-(a_2)^2)&=&j_3.\nonumber
\end{eqnarray}
This system is reduced again to the system (\ref{sp44}) using the change of variables $b_1=a_2$, $b_2=a_3$, $b_3=a_1$ and the change $j_1 \to j_3$, $j_2\to j_1$, $j_3\to j_2$. We have 2, 4, or 6 solutions:
\begin{eqnarray}
\Psi^A=\left(                 \begin{array}{ccc}
                  b_3 & 0 & 0 \\
                  \Zero & \Zero & \Zero \\ \hline
                  0 & b_1 & 0 \\
                  0 & 0 & b_2 \\
                  \Zero & \Zero & \Zero \\
                 \end{array}
               \right),\label{Ad0122}
\end{eqnarray}
where $b_k$, $k=1, 2, 3$ are from Lemma \ref{thSh2} with the change $j_1 \to j_3$, $j_2\to j_1$, $j_3\to j_2$.

In all these cases, we have
\begin{eqnarray}
F^{1\, p+1}=-b_3 b_1 \tau^3,\qquad F^{p+1\, p+2}=-b_1 b_2\tau^1,\qquad F^{p+2\, 1}=-b_2 b_3 \tau^2,\label{Fd012}\\
F^2=\frac{-1}{2}(-(b_3 b_1)^2+(b_1 b_2)^2-(b_2 b_3)^2)I_2.\label{Fqd012}
\end{eqnarray}

The explicit form of $F^2$ (\ref{Fqd012}) is given in Lemma \ref{lemmaF222} with the change $j_1 \to j_3$, $j_2\to j_1$, $j_3\to j_2$. For the convenience of the reader, we present it here:
\begin{enumerate}
        \item in the case $j_3= j_1<\frac{j_2}{2\sqrt{2}}$,
\begin{eqnarray}
\!\!\!\!\!\!\!\!\!\!\!\!\!\!\!\!\!\!\!\!\!\!\!\!\!\!\!\!\!\!\!\!\!\!\mbox{for $b_\pm$:}\quad F^2_{\pm}=\frac{K^2(2z_{\pm}^2-1)}{2 z_\pm^{\frac{4}{3}}}I_2\neq 0,\qquad F^2_+\neq F^2_-,\label{F2www1}\\
\mbox{where $z_{\pm}=\frac{- j_3\pm\sqrt{ j_3^2+ j_2^2}}{ j_2}$,\quad $K=(\frac{ j_2}{2})^\frac{2}{3}$;}\nonumber\\
\!\!\!\!\!\!\!\!\!\!\!\!\!\!\!\!\!\!\!\!\!\!\!\!\!\!\!\!\!\!\!\!\!\!\mbox{for $c^\pm_\pm$:}\quad (F_{\pm}^{\pm})^2=\frac{K^2(1+(s^\pm)^2)}{2}I_2>0,\quad (F_+^+)^2=(F_-^+)^2\neq(F_+^-)^2=(F_-^-)^2,\nonumber\\
\mbox{where $s^{\pm}=\frac{ j_2\pm\sqrt{ j_2^2-8 j_3^2}}{2 j_3}$,\quad $K=(\frac{ j_3}{s^\pm})^\frac{2}{3}$.}\nonumber
\end{eqnarray}
The expression $F^2$ for $c^\pm_\pm$ coincides with $F^2$ for $b_\pm$ only in the case:
$$s^*:=\sqrt{13+\sqrt{193-6^\frac{4}{3}}+\sqrt{386+6^\frac{4}{3}+\frac{5362}{193-6^\frac{4}{3}}}}\approx 7.39438,$$
$$B^*:=(\frac{j_2}{j_3})^*=\frac{(s^*)^2+2}{s^*}\approx 7.66486>2\sqrt{2},$$
$$z_+^*:=\frac{-1+\sqrt{1+(B^*)^2}}{B^*}\approx 0.878009.$$
This means that if $B=B^*$, then $(F^+_{\pm})^2$ for $(s^+)^*=\frac{B^*+\sqrt{(B^*)^2-8}}{2}$ coincides with $F_+^2$ for $z_+^*=\frac{-1+\sqrt{1+(B^*)^2}}{B^*}$.
\item in the case $ j_3= j_1=\frac{j_2}{2\sqrt{2}}$,
\begin{eqnarray}
\mbox{for $b_\pm$:}\quad F^2_{+}=0,\quad F^2_-=\frac{3j_2^{\frac{4}{3}}}{8}I_2>0,\label{F2www2}\\
\mbox{for $c_\pm$:}\quad (F_{\pm})^2=\frac{3}{2}(\frac{j_3}{\sqrt{2}})^\frac{4}{3}I_2=\frac{3 j_2^{\frac{4}{3}}}{2^{\frac{11}{3}}}I_2> 0,\nonumber
\end{eqnarray}
\item in the case $ j_3= j_1>\frac{j_2}{2\sqrt{2}}$,
\begin{eqnarray}
\!\!\!\!\!\!\!\!\!\!\!\!\!\!\!\!\!\!\!\!\!\!\!\!\!\!\!\!\!\!\!\!\!\!\mbox{for $b_\pm$:}\quad F^2_{\pm}=\frac{K^2(2z_{\pm}^2-1)}{2 z_\pm^{\frac{4}{3}}}I_2\neq 0,\qquad F^2_+\neq F^2_-,\label{F2www3}\\
\mbox{where $z_{\pm}=\frac{- j_3\pm\sqrt{ j_3^2+ j_2^2}}{ j_2}$,\quad $K=(\frac{ j_2}{2})^\frac{2}{3}$;}\nonumber
\end{eqnarray}
\item in the case $ j_3\neq j_1$, $j_2^{\frac{2}{3}}>j_3^{\frac{2}{3}}+j_1^{\frac{2}{3}}$,
\begin{eqnarray}
\!\!\!\!\!\!\!\!\!\!\!\!\!\!\!\!\!\!\!\!\!\!\!\!\!\!\!\!\!\!\!\!\!\!\mbox{for $d_{\pm k}$:}\quad F^2_{\pm k}=\frac{K_k^2(y_{\pm k}^2z^2_{\pm k}-y_{\pm k}^2+z_{\pm k}^2)}{2(y_{\pm k}z_{\pm k })^{\frac{4}{3}}}I_2\neq 0,\quad k=1, 2, 3,\quad F_{+k}^2 \neq F_{-k}^2,\label{F2www4}\\
\!\!\!\!\!\!\!\!\!\!\!\!\!\!\!\!\!\!\!\!\!\!\!\!\!\!\!\!\!\!\!\!\!\!\mbox{where $K_k=(\frac{j_2}{t_k})^{\frac{2}{3}}$, and $y_{\pm k}$, $z_{\pm k}$, $t_k$ are from Case (vi) - (d) of Lemma \ref{thSh2}.}\nonumber
\end{eqnarray}
\item in the case $ j_3\neq j_1$, $j_2^{\frac{2}{3}}=j_3^{\frac{2}{3}}+j_1^{\frac{2}{3}}$,
\begin{eqnarray}
F^2_{+1}=0,\quad F^2_{-1}=\frac{j_3^\frac{4}{3}+j_3^\frac{2}{3}j_1^\frac{2}{3}+j_1^\frac{4}{3}}{2}I_2> 0,\label{F2www5}\\
F^2_{\pm 2}=\frac{K_2^2(y_{\pm 2}^2z^2_{\pm 2}-y_{\pm 2}^2+z_{\pm 2}^2)}{2(y_{\pm 2}z_{\pm 2 })^{\frac{4}{3}}}I_2\neq 0,\qquad t_2=-2(\sqrt[3]{\frac{j_1}{j_3}}+\sqrt[3]{\frac{j_3}{j_1}}),\nonumber\\
F^2_{+2}\neq F^2_{-2},\qquad \mbox{$F^2_{\pm 2}$ do not coincide with $F^2_{-1}$}.\nonumber
\end{eqnarray}
\item in the case $ j_3\neq j_1$, $j_2^{\frac{2}{3}}<j_3^{\frac{2}{3}}+j_1^{\frac{2}{3}}$,
\begin{eqnarray}
F^2_{\pm 1}=\frac{K_1^2(y_{\pm 1}^2z^2_{\pm 1}-y_{\pm 1}^2+z_{\pm 1}^2)}{2(y_{\pm 1}z_{\pm 1 })^{\frac{4}{3}}}I_2\neq 0,\qquad F^2_{+1}\neq F^2_{-1}.\label{F2www6}
\end{eqnarray}
\end{enumerate}
In all cases of Lemma, the expression $K$ is the invariant for each pair of solutions (see Lemmas \ref{thSym0} and \ref{thSh2}).

\subsubsection{The case \texorpdfstring{$d_J=0$}{dJ=0}, \texorpdfstring{$\rank (J)=2$}{rank J=2}.}

In the case $d_J=0$, $x_J=2$, $y_J=0$, we have
$$\Sigma^J=\left(
                 \begin{array}{ccc}
                  j_1 & 0 & 0 \\
                  0 & j_2 & 0 \\
                  \Zero & \Zero & \Zero \\ \hline
                  \Zero & \Zero & \Zero \\
                 \end{array}
               \right),\qquad j_1, j_2\neq 0.
$$
We have the following possible cases: 1) $d_A=0$, $x_A=2$, $y_1=0$; 2) $d_A=1$, $x_A=2$, $y_A=0$; 3) $d_A=0$, $x_A=3$, $y_A=0$; 4) $d_A=0$, $x_A=2$, $y_A=1$, and we are looking for a solution in the form
$$
\Psi^A=\left(
                 \begin{array}{ccc}
                  a_1 & 0 & 0 \\
                  0 & a_2 & 0 \\
                  \Zero & \Zero & \Zero \\ \hline
                  \Zero & \Zero & \Zero \\
                 \end{array}
               \right),\,
               \left(
                 \begin{array}{ccc}
                  a_1 & 0 & 0 \\
                  0 & a_2 & 0 \\
                  0 & 0& 1\\
                  \Zero & \Zero & \Zero \\ \hline
                  0 & 0 & 1\\
                  \Zero & \Zero & \Zero \\
                 \end{array}
               \right),\,
               \left(
                 \begin{array}{ccc}
                  a_1 & 0 & 0 \\
                  0 & a_2 & 0 \\
                  0 & 0 & a_3 \\
                  \Zero & \Zero & \Zero \\ \hline
                  \Zero & \Zero & \Zero \\
                 \end{array}
               \right),\,
               \left(
                 \begin{array}{ccc}
                  a_1 & 0 & 0 \\
                  0 & a_2 & 0 \\
                  \Zero & \Zero & \Zero \\ \hline
                  0 & 0 & a_3\\
                  \Zero & \Zero & \Zero \\
                 \end{array}
               \right),
$$
where $a_1, a_2, a_3\neq 0$, respectively. In the first case, we have
$$-a_1 a_2^2=j_1,\qquad -a_2 a_1^2=j_2,$$
and obtain the unique solution
\begin{eqnarray}
\Psi^A=\left(
                 \begin{array}{ccc}
                  a_1 & 0 & 0 \\
                  0 & a_2 & 0 \\
                  \Zero & \Zero & \Zero \\ \hline
                  \Zero & \Zero & \Zero \\
                 \end{array}
               \right),\qquad
               a_1=-\sqrt[3]{\frac{j_2^2}{j_1}},\quad a_2=-\sqrt[3]{\frac{j_1^2}{j_2}}, \label{Ad020}\\
F^{12}=-\sqrt[3]{j_1 j_2}\tau^3,\label{Fd020}\\
F^2=\frac{-1}{2}\sqrt[3]{(j_1 j_2)^2}I_2\neq 0\label{Fqd020}.
\end{eqnarray}
In the second case, one of the equations is $-(a_1^2+a_2^2)=0$. In the third and fourth cases, one of the equations is
$-a_3(a_1^2+a_2^2)=0$. Thus we have no solutions in these cases.

In the case $d_J=0$, $x_J=0$, $y_J=2$, we have
$$\Sigma^J=\left(
                 \begin{array}{ccc}
                  \Zero & \Zero & \Zero \\ \hline
                  j_1 & 0 & 0 \\
                  0 & j_2 & 0 \\
                 \Zero & \Zero & \Zero \\
                 \end{array}
               \right),\qquad j_1, j_2\neq 0.
$$
We have the following possible cases: 1) $d_A=0$, $x_A=0$, $y_1=2$; 2) $d_A=1$, $x_A=0$, $y_A=2$; 3) $d_A=0$, $x_A=0$, $y_A=3$; 4) $d_A=0$, $x_A=1$, $y_A=2$, and we are looking for a solution in the form
$$
\Psi^A=\left(
                 \begin{array}{ccc}
                  \Zero & \Zero & \Zero \\ \hline
                  a_1 & 0 & 0 \\
                  0 & a_2 & 0 \\
                 \Zero & \Zero & \Zero \\
                 \end{array}
               \right),\,
               \left(
                 \begin{array}{ccc}
                  0 & 0& 1\\
                  \Zero & \Zero & \Zero \\ \hline
                  a_1 & 0 & 0 \\
                  0 & a_2 & 0 \\
                  0 & 0 & 1\\
                  \Zero & \Zero & \Zero \\
                 \end{array}
               \right),\,
               \left(
                 \begin{array}{ccc}
                  \Zero & \Zero & \Zero \\ \hline
                  a_1 & 0 & 0 \\
                  0 & a_2 & 0 \\
                  0 & 0 & a_3 \\
                 \Zero & \Zero & \Zero \\
                 \end{array}
               \right),\,
               \left(
                 \begin{array}{ccc}
                0 & 0 & a_3\\
                  \Zero & \Zero & \Zero \\ \hline
                 a_1 & 0 & 0 \\
                  0 & a_2 & 0 \\
                  \Zero & \Zero & \Zero \\
                 \end{array}
               \right),
$$
where $a_1, a_2, a_3\neq 0$, respectively. In the first case, we have
$$a_1 a_2^2=j_1,\qquad a_2 a_1^2=j_2,$$
and obtain the unique solution
\begin{eqnarray}
\Psi^A=\left(
                 \begin{array}{ccc}
                  \Zero & \Zero & \Zero \\ \hline
                  a_1 & 0 & 0 \\
                  0 & a_2 & 0 \\
                 \Zero & \Zero & \Zero \\
                 \end{array}
               \right),\qquad
a_1=\sqrt[3]{\frac{j_2^2}{j_1}},\quad a_2=\sqrt[3]{\frac{j_1^2}{j_2}}, \label{Ad002}\\
F^{12}=-\sqrt[3]{j_1 j_2}\tau^3,\label{Fd002}\\
F^2=\frac{-1}{2}\sqrt[3]{(j_1 j_2)^2}I_2\neq 0.\label{Fqd002}
\end{eqnarray}
In the second case, one of the equations is $(a_1^2+a_2^2)=0$. In the third and fourth cases, one of the equations is
$a_3(a_1^2+a_2^2)=0$. Thus we have no solutions in these cases.

In the case $d_J=0$, $x_J=1$, $y_J=1$, we have
$$\Sigma^J=\left(
                 \begin{array}{ccc}
                  j_1 & 0 & 0 \\
                  \Zero & \Zero & \Zero \\ \hline
                 0 & j_2 & 0 \\
                  \Zero & \Zero & \Zero \\
                 \end{array}
               \right),\qquad j_1, j_2\neq 0.
$$
We have the following possible cases: 1) $d_A=0$, $x_A=1$, $y_1=1$; 2) $d_A=1$, $x_A=1$, $y_A=1$; 3) $d_A=0$, $x_A=2$, $y_A=1$; 4) $d_A=0$, $x_A=1$, $y_A=2$, and we are looking for a solution in the form
$$
\Psi^A=\left(
                 \begin{array}{ccc}
                  a_1 & 0 & 0 \\
                  \Zero & \Zero & \Zero \\ \hline
                  0 & a_2 & 0 \\
                  \Zero & \Zero & \Zero \\
                 \end{array}
               \right),\,
               \left(
                 \begin{array}{ccc}
                  a_1 & 0 & 0 \\
                  0 & 0& 1\\
                  \Zero & \Zero & \Zero \\ \hline
                 0 & a_2 & 0 \\
                  0 & 0 & 1\\
                  \Zero & \Zero & \Zero \\
                 \end{array}
               \right),\,
               \left(
                 \begin{array}{ccc}
                  a_1 & 0 & 0 \\
                  0 & 0 & a_3 \\
                  \Zero & \Zero & \Zero \\ \hline
                  0 & a_2 & 0 \\
                  \Zero & \Zero & \Zero \\
                 \end{array}
               \right),\,
               \left(
                 \begin{array}{ccc}
                  a_1 & 0 & 0 \\
                  \Zero & \Zero & \Zero \\ \hline
                 0 & a_2 & 0 \\
                  0 & 0 & a_3\\
                  \Zero & \Zero & \Zero \\
                 \end{array}
               \right),
$$
where $a_1, a_2, a_3\neq 0$, respectively. In the first case, we have
$$a_1 a_2^2=j_1,\qquad -a_2 a_1^2=j_2,$$
and obtain the solution
\begin{eqnarray}
\Psi^A=\left(
                 \begin{array}{ccc}
                  a_1 & 0 & 0 \\
                  \Zero & \Zero & \Zero \\ \hline
                  0 & a_2 & 0 \\
                  \Zero & \Zero & \Zero \\
                 \end{array}
               \right),\qquad a_1=\sqrt[3]{\frac{j_2^2}{j_1}},\quad a_2=-\sqrt[3]{\frac{j_1^2}{j_2}}, \label{Ad011}\\
F^{12}=\sqrt[3]{j_1 j_2}\tau^3,\label{Fd011}\\
F^2=\frac{1}{2}\sqrt[3]{(j_1 j_2)^2}I_2\neq 0.\label{Fqd011}
\end{eqnarray}
In the second case, we have
$$a_1 a_2^2=j_1,\qquad -a_2 a_1^2=j_2,\qquad -(a_1^2-a_2^2)=0.$$
In the subcase $j_1=j_2$ (we consider only positive $j_1$, $j_2$), we obtain the following solution
\begin{eqnarray}
\Psi^A=\left(
                 \begin{array}{ccc}
                  a_1 & 0 & 0 \\
                  0 & 0& 1\\
                  \Zero & \Zero & \Zero \\ \hline
                 0 & a_2 & 0 \\
                  0 & 0 & 1\\
                  \Zero & \Zero & \Zero \\
                 \end{array}
               \right),\qquad a_1=\sqrt[3]{j_1},\quad a_2=-\sqrt[3]{j_2},\label{Ad011a}\\
F^{1\,p+1}=\sqrt[3]{j_1j_2}\tau^3,\label{Fd011a}\\
F^{12}=F^{1\,p+2}=\sqrt[3]{j_1}\tau^2,\quad F^{p+1\, 2}=F^{p+1\,p+2}=\sqrt[3]{j_2}\tau^1,\nonumber\\
F^2=\frac{1}{2}\sqrt[3]{(j_1j_2)^2}I_2\neq 0.\label{Fqd011a}
\end{eqnarray}
In the third case, we have
$$-a_1 (-a_2^2+a_3^2)=j_1,\qquad -a_2 (a_1^2+a_3^2)=j_2,\qquad -a_3(a_1^2-a_2^2)=0.$$
In the subcase $j_2>j_1$, we have the following 4 solutions (see Case (v) of Lemma \ref{thSh2})
\begin{eqnarray}
\Psi^A=\left(
                 \begin{array}{ccc}
                  a_1 & 0 & 0 \\
                  0 & 0 & a_3 \\
                  \Zero & \Zero & \Zero \\ \hline
                  0 & a_2 & 0 \\
                  \Zero & \Zero & \Zero \\
                 \end{array}
               \right),\label{Ad011b}\\
a_3=\mp\sqrt{\frac{j_2+j_1}{-2a_2}}=\mp\frac{\sqrt{j_2+j_1}}{\sqrt[6]{4(j_2-j_1)}},\qquad a_2=a_1=-\sqrt[3]{\frac{j_2-j_1}{2}};\nonumber\\
a_3=\mp\sqrt{\frac{j_2-j_1}{-2a_2}}=\mp\frac{\sqrt{j_2-j_1}}{\sqrt[6]{4(j_2+j_1)}},\qquad a_2=-a_1=-\sqrt[3]{\frac{j_2+j_1}{2}},\nonumber\\
F^{1\,p+1}=-a_1a_2\tau^3,\, F^{p+1\, 2}=-a_2 a_3\tau^1,\, F^{21}=-a_3 a_1\tau^2,\label{Fd011b}\\
F^2=\frac{a_1^4}{2}=\frac{1}{2}(\frac{j_2-j_1}{2})^{\frac{4}{3}}I_2\quad \mbox{or} \quad \frac{1}{2}(\frac{j_2+j_1}{2})^{\frac{4}{3}}I_2 \neq 0.\label{Fqd011b}
\end{eqnarray}
In the fourth case, we have
$$a_1(a_2^2+a_3^2)=j_1,\qquad -a_2 (a_1^2-a_3^2)=j_2,\qquad -a_3(a_1^2-a_2^2)=0.$$
In the subcase $j_1>j_2$, we have the following 4 solutions (see Case (v) of Lemma \ref{thSh2})
\begin{eqnarray}
\left(
                 \begin{array}{ccc}
                  a_1 & 0 & 0 \\
                  \Zero & \Zero & \Zero \\ \hline
                 0 & a_2 & 0 \\
                  0 & 0 & a_3\\
                  \Zero & \Zero & \Zero \\
                 \end{array}
               \right)\label{Ad011d}\\
a_3=\pm\sqrt{\frac{j_1+j_2}{2a_1}}=\pm\frac{\sqrt{j_1+j_2}}{\sqrt[6]{4(j_1-j_2)}},\qquad a_1=a_2=\sqrt[3]{\frac{j_1-j_2}{2}};\nonumber\\
a_3=\pm\sqrt{\frac{j_1-j_2}{2a_1}}=\pm\frac{\sqrt{j_1-j_2}}{\sqrt[6]{4(j_1+j_2)}},\qquad a_1=-a_2=\sqrt[3]{\frac{j_1+j_2}{2}},\nonumber\\
F^{1\,p+1}=-a_1a_2\tau^3,\, F^{p+1\, p+2}=-a_2 a_3\tau^1,\, F^{p+2\, 1}=-a_3 a_1\tau^2,\label{Fd011d}\\
F^2=\frac{a_1^4}{2}=\frac{1}{2}(\frac{j_1-j_2}{2})^{\frac{4}{3}}I_2\quad \mbox{or} \quad \frac{1}{2}(\frac{j_1+j_2}{2})^{\frac{4}{3}}I_2 \neq 0.\label{Fqd011d}
\end{eqnarray}

\subsubsection{The case \texorpdfstring{$d_J=0$}{dJ=0}, \texorpdfstring{$\rank (J)=1$}{rank J=1}.}

In the case $d_J=0$, $x_J=1$, $y_J=0$, we have
$$\Sigma^J=\left(
                 \begin{array}{ccc}
                  j_1 & 0 & 0 \\
                  \Zero & \Zero & \Zero \\ \hline
                  \Zero & \Zero & \Zero \\
                 \end{array}
               \right),\qquad j_1\neq 0.
$$

We have the following 10 possible cases: $(d_A, x_A, y_A)=(0, 1, 0)$, $(1, 1, 0)$, $(0, 2, 0)$, $(0, 1, 1)$, $(2, 1, 0)$, $(1, 2, 0)$, $(1, 1, 1)$, $(0, 3, 0)$, $(0, 2, 1)$, $(0, 1, 2)$, and we are looking for a solution in the following form $\Psi^A$:
$$
\left(
                 \begin{array}{ccc}
                  a_1 & 0 & 0 \\
                  \Zero & \Zero & \Zero \\ \hline
                  \Zero & \Zero & \Zero \\
                 \end{array}
               \right),
               \left(
                 \begin{array}{ccc}
                  a_1 & 0 & 0 \\
                  0 & 1& 0\\
                  \Zero & \Zero & \Zero \\ \hline
                  0 & 1 & 0\\
                  \Zero & \Zero & \Zero \\
                 \end{array}
               \right),
               \left(
                 \begin{array}{ccc}
                  a_1 & 0 & 0 \\
                  0 & a_2 & 0 \\
                  \Zero & \Zero & \Zero \\ \hline
                  \Zero & \Zero & \Zero \\
                 \end{array}
               \right),
               \left(
                 \begin{array}{ccc}
                  a_1 & 0 & 0 \\
                  \Zero & \Zero & \Zero \\ \hline
                  0 & a_2 & 0 \\
                  \Zero & \Zero & \Zero \\
                 \end{array}
               \right),
                  \left(
                 \begin{array}{ccc}
                  a_1 & 0 & 0 \\
                  0 & 1 & 0\\
                  0 & 0 & 1\\
                  \Zero & \Zero & \Zero \\ \hline
                  0 & 1 & 0 \\
                  0 & 0 & 1\\
                  \Zero & \Zero & \Zero \\
                 \end{array}
               \right),
$$
$$\!\!
\left(        \begin{array}{ccc}
                  a_1 & 0 & 0 \\
                  0 & a_2 & 0\\
                  0 & 0 & 1\\
                  \Zero & \Zero & \Zero \\ \hline
                  0 & 0 & 1\\
                \Zero & \Zero & \Zero \\
                 \end{array}
               \right),
               \left(
                 \begin{array}{ccc}
                  a_1 & 0 & 0 \\
                  0 & 0& 1\\
                  \Zero & \Zero & \Zero \\ \hline
                  0 & a_2 & 0\\
                  0 & 0 & 1\\
                  \Zero & \Zero & \Zero \\
                 \end{array}
               \right),
               \left(
                 \begin{array}{ccc}
                  a_1 & 0 & 0 \\
                  0 & a_2 & 0 \\
                  0 & 0 & a_3\\
                  \Zero & \Zero & \Zero \\ \hline
                  \Zero & \Zero & \Zero \\
                 \end{array}
               \right),
               \left(
                 \begin{array}{ccc}
                  a_1 & 0 & 0 \\
                  0 & a_2 & 0\\
                  \Zero & \Zero & \Zero \\ \hline
                  0 & 0 & a_3 \\
                  \Zero & \Zero & \Zero \\
                 \end{array}
               \right),
                \left(
                 \begin{array}{ccc}
                  a_1 & 0 & 0 \\
                  \Zero & \Zero & \Zero \\ \hline
                  0 & a_2 & 0\\
                  0 & 0 & a_3 \\
                  \Zero & \Zero & \Zero \\
                 \end{array}
               \right),
$$
where $a_1, a_2, a_3 \neq 0$, respectively. In the first nine cases, we have no solutions. In the tenth case, we have
$$-a_1(-a_2^2-a_3^2)=j_1,\qquad -a_2(a_1^2-a_3^2)=0,\qquad -a_3(a_1^2-a_2^2)=0,$$
and obtain the following 4 solutions (see Case (ii) of Lemma \ref{thSh2}):
\begin{eqnarray}
\!\!\!\!\!\!\!\!\!\!\!\!\!\!\!\!\!\!\!\!\Psi^A=\left(
                 \begin{array}{ccc}
                  a_1 & 0 & 0 \\
                  \Zero & \Zero & \Zero \\ \hline
                  0 & a_2 & 0\\
                  0 & 0 & a_3 \\
                  \Zero & \Zero & \Zero \\
                 \end{array}
               \right),\quad
a_1=\sqrt[3]{\frac{j_1}{2}},\, a_2=\pm a_1,\, a_3= \pm a_1,\quad\mbox{and}\quad a_3=\mp a_1\label{Ad010}
\end{eqnarray}
with
\begin{eqnarray}
F^{1\, p+1}=-a_1a_2\tau^3,\quad F^{p+1\, p+2}=-a_2a_3\tau^1,\quad F^{p+2\, 1}=-a_3a_1\tau^2,\label{Fd010}\\
F^2=\frac{a_1^4}{2}=\frac{1}{2}(\frac{j_1}{2})^{\frac{4}{3}}I_2\neq 0\label{Fqd010}.
\end{eqnarray}

In the case $d_J=0$, $x_J=0$, $y_J=1$,  the situation is similar. We have
$$\Sigma^J=\left(
                 \begin{array}{ccc}
                  \Zero & \Zero & \Zero \\ \hline
                  j_1 & 0 & 0 \\
                  \Zero & \Zero & \Zero \\
                 \end{array}
               \right),\qquad j_1\neq 0.
$$

We have 10 possible cases for $d_A$, $x_A$, and $y_A$, but we have solutions only in the case $d_A=0$, $x_A=2$, $y_A=1$.
We get the system
$$-a_1(a_2^2+a_3^2)=j_1,\quad -a_2(a_3^2-a_1^2)=0,\quad -a_3(a_2^2-a_1^2)=0,$$
and 4 solutions
\begin{eqnarray}
\!\!\!\!\!\!\!\!\!\!\!\!\!\!\!\!\!\!\!\!\!\!\!\!\!\!\!\!\Psi^A=\left(
                 \begin{array}{ccc}
                 0 & a_2 & 0\\
                 0 & 0 & a_3\\
                  \Zero & \Zero & \Zero \\ \hline
                  a_1 & 0 & 0 \\
                  \Zero & \Zero & \Zero \\
                 \end{array}
               \right),\quad a_1=-\sqrt[3]{\frac{j_1}{2}},\, a_2=\pm a_1,\, a_3= \pm a_1,\quad\mbox{and}\quad a_3=\mp a_1\label{Ad001}
\end{eqnarray}
with
\begin{eqnarray}
F^{1\, p+1}=-a_1a_2\tau^3,\quad F^{p+1\, p+2}=-a_2a_3\tau^1,\quad F^{p+2\, 1}=-a_3a_1\tau^2,\label{Fd001}\\
F^2=\frac{a_1^4}{2}=\frac{1}{2}(\frac{j_1}{2})^{\frac{4}{3}}I_2\neq 0\label{Fqd001}.
\end{eqnarray}

\subsubsection{The case \texorpdfstring{$d_J=0$}{dJ=0}, \texorpdfstring{$\rank (J)=0$}{rank J=0}.}

Let us consider the case of zero matrix $\Sigma^J=\Zero$. We must verify all 20 cases for $(d_A, x_A, y_A)$, $ 0\leq d_A, x_A, y_A \leq 3$, $d_A+x_A+y_A \leq 3$. We have no solutions in 14 cases, and we have solutions in 6 cases. Below we present these 6 cases.

In the case $d_A=x_A=y_A=0$, we have zero matrix $\Psi^A=\Zero$. So, we have trivial solution $A=0$, $F=0$, $F^2=0$.

In the case $d_A=0$, $x_A=1$, $y_A=0$, we have the solutions
\begin{eqnarray}
\Psi^A=\left(
                 \begin{array}{ccc}
                 a_1 & 0 & 0\\
                  \Zero & \Zero & \Zero \\ \hline
                  \Zero & \Zero & \Zero \\
                 \end{array}
               \right),\qquad a_1\in\R\setminus\{0\}.\label{Ad000c}
\end{eqnarray}

In the case $d_A=0$, $x_A=0$, $y_A=1$, we have the solutions
\begin{eqnarray}
\Psi^A=\left(
                 \begin{array}{ccc}
                  \Zero & \Zero & \Zero \\ \hline
                  a_1 & 0 & 0\\
                  \Zero & \Zero & \Zero \\
                 \end{array}
               \right),\quad a_1\in\R\setminus\{0\}.\label{Ad000d}
\end{eqnarray}
In these two cases, we have $F=0$, $F^2=0$.

In the case $d_A=1$, $x_A=y_A=0$, we have the solution
\begin{eqnarray}
\Psi^A=\left(
                 \begin{array}{ccc}
                 1 & 0 & 0\\
                  \Zero & \Zero & \Zero \\ \hline
                  1 & 0 & 0 \\
                  \Zero & \Zero & \Zero \\
                 \end{array}
               \right)\label{Ad000}
\end{eqnarray}
with $F=0$, $F^2=0$.

In the case $d_A=2$, $x_A=y_A=0$, we have the solution
\begin{eqnarray}
\Psi^A=\left(
                 \begin{array}{ccc}
                 1 & 0 & 0\\
                 0 & 1 & 0\\
                  \Zero & \Zero & \Zero \\ \hline
                  1 & 0 & 0 \\
                  0 & 1 & 0\\
                  \Zero & \Zero & \Zero \\
                 \end{array}
               \right)\label{Ad000a}
\end{eqnarray}
with
\begin{eqnarray}
F^{12}=F^{1\,p+2}=F^{p+1\,2}=F^{p+1\,p+2}=-1 \tau^3\label{Fd000a}
\end{eqnarray}
and $F^2=0$.

In the case $d_A=3$, $x_A=y_A=0$, we have the solution
\begin{eqnarray}
\Psi^A=\left(
                 \begin{array}{ccc}
                 1 & 0 & 0\\
                 0 & 1 & 0\\
                 0 & 0 & 1\\
                  \Zero & \Zero & \Zero \\ \hline
                  1 & 0 & 0 \\
                  0 & 1 & 0 \\
                  0 & 0 & 1\\
                  \Zero & \Zero & \Zero \\
                 \end{array}
               \right)\label{Ad000b}
\end{eqnarray}
with
\begin{eqnarray}
F^{12}=F^{1\,p+2}=F^{p+1\,2}=F^{p+1\,p+2}=-1 \tau^3,\label{Fd000b}\\
F^{23}=F^{2\,p+3}=F^{p+2\,3}=F^{p+2\,p+3}=-1 \tau^1,\nonumber\\
F^{31}=F^{3\,p+1}=F^{p+3\,1}=F^{p+3\,p+1}=-1 \tau^2,\nonumber
\end{eqnarray}
and $F^2=0$.

\subsection{The case \texorpdfstring{$d_J=1$}{dJ=1}.}

Let us consider the case $d_J=1$. We have $d_A=1$. The problem actually is reduced to solving the system of two cubic equations (\ref{casen22}).

If 1) $d_J=1$, $x_J=2$, $y_J=0$, or 2) $d_J=1$, $x_J=0$, $y_J=2$, or 3) $d_J=1$, $x_J=1$, $y_J=1$, then we have
$$\Sigma^J=\left(
                 \begin{array}{ccc}
                  j_1 & 0 & 0 \\
                  0 & j_2 & 0 \\
                  0 & 0 & 1 \\
                  \Zero & \Zero & \Zero \\ \hline
                  0 & 0 & 1 \\
                  \Zero & \Zero & \Zero \\
                 \end{array}
               \right),\quad\mbox{or}\,
\left(
                 \begin{array}{ccc}
                  0 & 0 & 1 \\
                  \Zero & \Zero & \Zero \\ \hline
                  j_1 & 0 & 0 \\
                  0 & j_2 & 0 \\
                  0 & 0 & 1 \\
                  \Zero & \Zero & \Zero \\
                 \end{array}
               \right),\quad\mbox{or}\,
\left(
                 \begin{array}{ccc}
                  j_1 & 0 & 0 \\
                  0 & 0 & 1 \\
                  \Zero & \Zero & \Zero \\ \hline
                  0 & j_2 & 0 \\
                  0 & 0 & 1 \\
                  \Zero & \Zero & \Zero \\
                 \end{array}
               \right),\qquad j_1, j_2\neq 0.
$$
We have 1) $d_A=1$, $x_A=2$, $y_A=0$, or 2) $d_A=1$, $x_A=0$, $y_A=2$, or 3) $d_A=1$, $x_A=y_A=1$ (respectively), and we are looking for a solution in the form
$$
\Psi^A=\left(
                 \begin{array}{ccc}
                  a_1 & 0 & 0 \\
                  0 & a_2 & 0 \\
                  0 & 0 & \beta \\
                  \Zero & \Zero & \Zero \\ \hline
                  0 & 0 & \beta \\
                  \Zero & \Zero & \Zero \\
                 \end{array}
               \right),\quad\mbox{or}\,
\left(
                 \begin{array}{ccc}
                  0 & 0 & \beta \\
                  \Zero & \Zero & \Zero \\ \hline
                  a_1 & 0 & 0 \\
                  0 & a_2 & 0 \\
                  0 & 0 & \beta \\
                  \Zero & \Zero & \Zero \\
                 \end{array}
               \right),\quad\mbox{or}\,
\left(
                 \begin{array}{ccc}
                  a_1 & 0 & 0 \\
                  0 & 0 & \beta \\
                  \Zero & \Zero & \Zero \\ \hline
                  0 & a_2 & 0 \\
                  0 & 0 & \beta \\
                  \Zero & \Zero & \Zero \\
                 \end{array}
               \right),\quad a_1, a_2, \beta\neq 0.
$$
respectively. In the first case, we obtain
$$-a_1 a_2^2=j_1,\quad -a_2 a_1^2=j_2,\quad -\beta(a_1^2+a_2^2)=1,$$
and the solution
\begin{eqnarray}
\!\!\!\!\!\!\!\!\!\!\Psi^A=\left(
                 \begin{array}{ccc}
                  a_1 & 0 & 0 \\
                  0 & a_2 & 0 \\
                  0 & 0 & \beta \\
                  \Zero & \Zero & \Zero \\ \hline
                  0 & 0 & \beta \\
                  \Zero & \Zero & \Zero \\
                 \end{array}
               \right),\quad a_1=-\sqrt[3]{\frac{j_2^2}{j_1}},\quad a_2=-\sqrt[3]{\frac{j_1^2}{j_2}},\quad \beta=\frac{-1}{a_1^2+a_2^2},\label{Ad11}\\
F^{12}=-a_1a_2\tau^3=-\sqrt[3]{j_1 j_2}\tau^3,\label{Fd11}\\
F^{13}=F^{1\, p+1}=a_1\beta \tau^2,\quad F^{23}=F^{2\,p+1}=-a_2\beta\tau^1,\nonumber\\
F^2=\frac{-1}{2}(a_1 a_2)^2 I_2=\frac{-1}{2}\sqrt[3]{(j_1j_2)^2} I_2\neq 0.\label{Fqd11}
\end{eqnarray}
In the second case, we obtain
$$a_1 a_2^2=j_1,\quad a_2 a_1^2=j_2,\quad \beta(a_1^2+a_2^2)=1,$$
and the solution
\begin{eqnarray}
\Psi^A=\left(
                 \begin{array}{ccc}
                  0 & 0 & \beta \\
                  \Zero & \Zero & \Zero \\ \hline
                  a_1 & 0 & 0 \\
                  0 & a_2 & 0 \\
                  0 & 0 & \beta \\
                  \Zero & \Zero & \Zero \\
                 \end{array}
               \right),\quad a_1=\sqrt[3]{\frac{j_2^2}{j_1}},\quad a_2=\sqrt[3]{\frac{j_1^2}{j_2}},\quad \beta=\frac{1}{a_1^2+a_2^2},\label{Ad12}\\
F^{12}=-a_1a_2\tau^3=-\sqrt[3]{j_1 j_2}\tau^3,\label{Fd12}\\
F^{1\,p+1}=F^{p+3\, p+1}=-a_1\beta \tau^2,\quad F^{1\,p+2}=F^{p+3\,p+2}=a_2\beta\tau^1,\nonumber\\
F^2=\frac{-1}{2}(a_1 a_2)^2 I_2=\frac{-1}{2}\sqrt[3]{(j_1j_2)^2} I_2\neq 0.\label{Fqd12}
\end{eqnarray}
In the third case, we obtain
$$a_1 a_2^2=j_1,\quad -a_2 a_1^2=j_2,\quad -\beta(a_1^2-a_2^2)=1.$$
This system has a solution only in the case $j_1\neq j_2$ (we consider only positive $j_1$, $j_2$):
\begin{eqnarray}
\Psi^A=\left(
                 \begin{array}{ccc}
                  a_1 & 0 & 0 \\
                  0 & 0 & \beta \\
                  \Zero & \Zero & \Zero \\ \hline
                  0 & a_2 & 0 \\
                  0 & 0 & \beta \\
                  \Zero & \Zero & \Zero \\
                 \end{array}
               \right),\quad a_1=\sqrt[3]{\frac{j_2^2}{j_1}},\quad a_2=-\sqrt[3]{\frac{j_1^2}{j_2}},\quad \beta=\frac{-1}{a_1^2-a_2^2},\label{Ad13}\\
F^{1\, p+1}=-a_1a_2\tau^3=\sqrt[3]{j_1 j_2}\tau^3,\label{Fd13}\\
F^{12}=F^{1\, p+2}=a_1\beta \tau^2,\quad F^{p+1\,2}=F^{p+1\,p+2}=-a_2\beta\tau^1,\nonumber\\
F^2=\frac{1}{2}(a_1 a_2)^2 I_2=\frac{1}{2}\sqrt[3]{(j_1j_2)^2} I_2\neq 0.\label{Fqd13}
\end{eqnarray}

If $d_J=1$, $x_J=1$, $y_J=0$, then we have
$$\Sigma^J=\left(
                 \begin{array}{ccc}
                  j_1 & 0 & 0 \\
                  0 & 1 & 0 \\
                  \Zero & \Zero & \Zero \\ \hline
                  0 & 1 & 0 \\
                  \Zero & \Zero & \Zero \\
                 \end{array}
               \right),\qquad j_1\neq 0.
$$
We have 1) $d_A=1$, $x_A=1$, $y_A=1$, or 2) $d_A=1$, $x_A=2$, $y_A=0$, or 3) $d_A=1$, $x_A=1$, $y_A=0$, and we are looking for a solution in the form
$$
\Psi^A=\left(
                 \begin{array}{ccc}
                  a_1 & 0 & 0 \\
                  0 & \beta & 0 \\
                  \Zero & \Zero & \Zero \\ \hline
                  0 & \beta & 0 \\
                  0 & 0 & a_2\\
                  \Zero & \Zero & \Zero \\
                 \end{array}
               \right),\quad\mbox{or}\quad
\left(
                 \begin{array}{ccc}
                  a_1 & 0 & 0\\
                  0 & \beta & 0\\
                  0 & 0 & a_2 \\
                  \Zero & \Zero & \Zero \\ \hline
                  0 & \beta & 0 \\
                  \Zero & \Zero & \Zero \\
                 \end{array}
               \right),\quad\mbox{or}\quad
\left(
                 \begin{array}{ccc}
                  a_1 & 0 & 0 \\
                  0 & \beta & 0 \\
                  \Zero & \Zero & \Zero \\ \hline
                  0 & \beta & 0 \\
                  \Zero & \Zero & \Zero \\
                 \end{array}
               \right),\quad a_1, a_2,\beta\neq 0,
$$
respectively. In the first case, one of the equations is $-a_2 a_1^2=0$. In the second case, one of the equations is
$-a_2 a_1^2=0$. In the third case, one of the equations is $-a_1 0= j_1$. In all these three cases, we have no solutions.

If $d_J=1$, $x_J=0$, $y_J=1$, then we have
$$\Sigma^J=\left(
                 \begin{array}{ccc}
                  0 & 1 & 0 \\
                  \Zero & \Zero & \Zero \\ \hline
                  j_1 & 0 & 0\\
                  0 & 1 & 0 \\
                  \Zero & \Zero & \Zero \\
                 \end{array}
               \right),\qquad j_1\neq 0.
$$
We have 1) $d_A=1$, $x_A=1$, $y_A=1$, or 2) $d_A=1$, $x_A=0$, $y_A=2$, or 3) $d_A=1$, $x_A=0$, $y_A=1$, and we are looking for a solution in the form
$$
\Psi^A=\left(
                 \begin{array}{ccc}
                  0 & \beta & 0 \\
                  0 & 0 & a_2\\
                  \Zero & \Zero & \Zero \\ \hline
                  a_1 & 0 & 0\\
                  0 & \beta & 0 \\
                  \Zero & \Zero & \Zero \\
                 \end{array}
               \right),\quad\mbox{or}\quad
\left(
                 \begin{array}{ccc}
                 0 & \beta & 0 \\
                  \Zero & \Zero & \Zero \\ \hline
                 a_1 & 0 & 0\\
                  0 & \beta & 0\\
                  0 & 0 & a_2 \\
                  \Zero & \Zero & \Zero \\
                 \end{array}
               \right),\quad\mbox{or}\quad
\left(
                 \begin{array}{ccc}
                  0 & \beta & 0 \\
                  \Zero & \Zero & \Zero \\ \hline
                  a_1 & 0 & 0 \\
                  0 & \beta & 0 \\
                  \Zero & \Zero & \Zero \\
                 \end{array}
               \right),\quad a_1, a_2,\beta\neq 0
$$
respectively. In the first case, one of the equations is $a_2 a_1^2=0$. In the second case, one of the equations is
$a_2 a_1^2=0$. In the third case, one of the equations is $-a_1 0= j_1$. In all these three cases, we have no solutions.

If $d_J=1$, $x_J=0$, $y_J=0$, then we have
$$\Sigma^J=\left(
                 \begin{array}{ccc}
                  1 & 0 & 0 \\
                  \Zero & \Zero & \Zero \\ \hline
                  1 & 0 & 0\\
                  \Zero & \Zero & \Zero \\
                 \end{array}
               \right).
$$
We have 1) $d_A=1$, $x_A=2$, $y_A=0$, or 2) $d_A=1$, $x_A=0$, $y_A=2$, or 3) $d_A=1$, $x_A=1$, $y_A=1$, or 4) $d_A=1$, $x_A=1$, $y_A=0$, or 5) $d_A=1$, $x_A=0$, $y_A=1$, or 6) $d_A=1$, $x_A=0$, $y_A=0$, and we are looking for a solution in the form
$$
\Psi^A=\left(
                 \begin{array}{ccc}
                 \beta & 0 & 0\\
                  0 & a_1 & 0 \\
                  0 & 0 & a_2\\
                  \Zero & \Zero & \Zero \\ \hline
                  \beta & 0 & 0\\
                  \Zero & \Zero & \Zero \\
                 \end{array}
               \right),\quad\mbox{or}\quad
\left(
                 \begin{array}{ccc}
                 \beta & 0 & 0 \\
                  \Zero & \Zero & \Zero \\ \hline
                 \beta & 0 & 0\\
                  0 & a_1 & 0\\
                  0 & 0 & a_2 \\
                  \Zero & \Zero & \Zero \\
                 \end{array}
               \right),\quad\mbox{or}\quad
\left(
                 \begin{array}{ccc}
                  \beta & 0 & 0 \\
                  0 & a_1 & 0\\
                  \Zero & \Zero & \Zero \\ \hline
                  \beta & 0 & 0 \\
                  0 & 0 & a_2 \\
                  \Zero & \Zero & \Zero \\
                 \end{array}
               \right),
$$
$$
\mbox{or}\quad\left(
                 \begin{array}{ccc}
                 \beta & 0 & 0\\
                  0 & a_1 & 0 \\
                  \Zero & \Zero & \Zero \\ \hline
                  \beta & 0 & 0\\
                  \Zero & \Zero & \Zero \\
                 \end{array}
               \right),\quad\mbox{or}\quad
\left(
                 \begin{array}{ccc}
                 \beta & 0 & 0 \\
                  \Zero & \Zero & \Zero \\ \hline
                 \beta & 0 & 0\\
                  0 & a_1 & 0\\
                  \Zero & \Zero & \Zero \\
                 \end{array}
               \right),\quad\mbox{or}\quad
\left(
                 \begin{array}{ccc}
                  \beta & 0 & 0 \\
                  \Zero & \Zero & \Zero \\ \hline
                  \beta & 0 & 0 \\
                  \Zero & \Zero & \Zero \\
                 \end{array}
               \right),\quad a_1, a_2,\beta\neq 0,
$$
respectively. In the first case, one of the equations is $-a_1 a_2^2=0$. In the second case, one of the equations is
$a_1 a_2^2=0$. In the third case, one of the equations is $a_1 a_2^2= 0$. In all these three cases, we have no solutions.
In the fourth case, we have
$$-\beta a_1^2=1,\qquad -a_1 0=0,$$
and the solution is
\begin{eqnarray}
\Psi^A=\left(
                 \begin{array}{ccc}
                 \beta & 0 & 0\\
                  0 & a_1 & 0 \\
                  \Zero & \Zero & \Zero \\ \hline
                  \beta & 0 & 0\\
                  \Zero & \Zero & \Zero \\
                 \end{array}
               \right),\qquad a_1\in\R\setminus\{0\},\qquad \beta=\frac{-1}{a_1^2},\label{Ad14}\\
F^{12}=F^{p+1\, 2}=-\beta a_1 \tau^3=\frac{1}{a_1}\tau^3.\label{Fd14}
\end{eqnarray}
In the fifth case, we obtain
$$\beta a_1^2=1,\qquad -a_1 0=0,$$
and the solution is
\begin{eqnarray}
\Psi^A=\left(
                 \begin{array}{ccc}
                 \beta & 0 & 0 \\
                  \Zero & \Zero & \Zero \\ \hline
                 \beta & 0 & 0\\
                  0 & a_1 & 0\\
                  \Zero & \Zero & \Zero \\
                 \end{array}
               \right),\qquad a_1\in\R\setminus\{0\},\qquad \beta=\frac{1}{a_1^2},\label{Ad15}\\
F^{1\,p+2}=F^{p+1\,p+2}=-\beta a_1 \tau^3=\frac{-1}{a_1}\tau^3.\label{Fd15}
\end{eqnarray}
In the sixth case, we obtain $-\beta \, 0=1$, i.e. there is no solution. In the cases (\ref{Fd14}) and (\ref{Fd15}), we have $F^2=0$.

\subsection{The case \texorpdfstring{$d_J=2$}{dJ=2}.}

Let us consider the case $d_J=2$. In this case, we have $d_A=2$.

If $d_J=2$, $x_J=1$, $y_J=0$ or $d_J=2$, $x_J=0$, $y_J=1$, then we have
$$\Sigma^J=\left(
                 \begin{array}{ccc}
                  j_1 & 0 & 0 \\
                  0 & 1 & 0 \\
                  0 & 0 & 1 \\
                  \Zero & \Zero & \Zero \\ \hline
                  0 & 1 & 0 \\
                  0 & 0 & 1 \\
                  \Zero & \Zero & \Zero \\
                 \end{array}
               \right)\qquad \mbox{or} \qquad
\left(
                 \begin{array}{ccc}
                  0 & 1 & 0 \\
                  0 & 0 & 1 \\
                  \Zero & \Zero & \Zero \\ \hline
                  j_1 & 0 & 0\\
                  0 & 1 & 0 \\
                  0 & 0 & 1 \\
                  \Zero & \Zero & \Zero \\
                 \end{array}
               \right),\qquad j_1\neq 0,
$$
respectively. We have $d_A=2$, $x_A=1$, $y_A=0$ or $d_A=2$, $x_A=0$, $y_A=1$ (respectively) and we are looking for a solution in the form
$$
\Psi^A=\left(
                 \begin{array}{ccc}
                  a_1 & 0 & 0 \\
                  0 & \beta & 0 \\
                  0 & 0 & \beta \\
                  \Zero & \Zero & \Zero \\ \hline
                  0 & \beta & 0 \\
                  0 & 0 & \beta \\
                  \Zero & \Zero & \Zero \\
                 \end{array}
               \right)\qquad \mbox{or}\qquad
             \left(
                 \begin{array}{ccc}
                  0 & \beta & 0 \\
                  0 & 0 & \beta \\
                  \Zero & \Zero & \Zero \\ \hline
                  a_1 & 0 & 0 \\
                  0 & \beta & 0 \\
                  0 & 0 & \beta \\
                  \Zero & \Zero & \Zero \\
                 \end{array}
               \right),\qquad
               a_1, \beta\neq 0\nonumber
$$
respectively. In both cases, one of the equations will be $a_1 0 =j_1\neq 0$. We have no solutions in these cases.

If $d_J=2$, $x_J=y_J=0$, then we have
$$\Sigma^J=\left(
                 \begin{array}{ccc}
                  1 & 0 & 0 \\
                  0 & 1 & 0 \\
                  \Zero & \Zero & \Zero \\ \hline
                  1 & 0 & 0 \\
                  0 & 1 & 0 \\
                  \Zero & \Zero & \Zero \\
                 \end{array}
               \right).
$$
We have 1) $d_A=2$, $x_A=1$, $y_A=0$, or 2) $d_A=2$, $x_A=0$, $y_A=1$, or 3) $d_A=2$, $x_A=y_A=0$. We are looking for a solution in the form
$$
\Psi^A=\left(
                 \begin{array}{ccc}
                  \beta & 0 & 0 \\
                  0 & \beta & 0 \\
                  0 & 0 & a_1 \\
                  \Zero & \Zero & \Zero \\ \hline
                  \beta & 0 & 0 \\
                  0 & \beta & 0 \\
                  \Zero & \Zero & \Zero \\
                 \end{array}
               \right),\quad\mbox{or}\quad
  \left(
                 \begin{array}{ccc}
                  \beta & 0 & 0 \\
                  0 & \beta & 0 \\
                  \Zero & \Zero & \Zero \\ \hline
                  \beta & 0 & 0 \\
                  0 & \beta & 0 \\
                  0 & 0 & a_1 \\
                  \Zero & \Zero & \Zero \\
                 \end{array}
               \right),\quad \mbox{or} \quad
\left(
                 \begin{array}{ccc}
                  \beta & 0 & 0 \\
                  0 & \beta & 0 \\
                  \Zero & \Zero & \Zero \\ \hline
                  \beta & 0 & 0 \\
                  0 & \beta & 0 \\
                  \Zero & \Zero & \Zero \\
                 \end{array}
               \right),
               \qquad \beta, a_1 \neq 0,
$$
respectively. In the first case, we obtain the equation $-\beta a_1^2=1$, i.e. the solution
\begin{eqnarray}
\Psi^A=\left(
                 \begin{array}{ccc}
                  \beta & 0 & 0 \\
                  0 & \beta & 0 \\
                  0 & 0 & a_1 \\
                  \Zero & \Zero & \Zero \\ \hline
                  \beta & 0 & 0 \\
                  0 & \beta & 0 \\
                  \Zero & \Zero & \Zero \\
                 \end{array}
               \right),\qquad a_1\in\R\setminus\{0\},\quad \beta=-\frac{1}{a_1^2},\label{Ad21}
\end{eqnarray}
\begin{eqnarray}
&&F^{12}=F^{1\,p+2}=F^{p+1\, 2}=F^{p+1\, p+2}=\frac{-1}{a_1^4}\tau^3,\label{Fd21}\\
&&F^{13}=F^{p+1\, 3}=\frac{-1}{a_1}\tau^2,\qquad F^{23}=F^{p+2\, 3}=\frac{1}{a_1}\tau^1.\nonumber
\end{eqnarray}
In the second case, we obtain the equation $\beta a_1^2=1$, i.e. the solution
\begin{eqnarray}
\left(
                 \begin{array}{ccc}
                  \beta & 0 & 0 \\
                  0 & \beta & 0 \\
                  \Zero & \Zero & \Zero \\ \hline
                  \beta & 0 & 0 \\
                  0 & \beta & 0 \\
                  0 & 0 & a_1 \\
                  \Zero & \Zero & \Zero \\
                 \end{array}
               \right),\qquad a_1\in\R\setminus\{0\},\quad \beta=\frac{1}{a_1^2},\label{Ad22}
\end{eqnarray}
\begin{eqnarray}
&&F^{12}=F^{1\,p+2}=F^{p+1\, 2}=F^{p+1\, p+2}=\frac{-1}{a_1^4}\tau^3,\label{Fd22}\\
&&F^{1\, p+3}=F^{p+1\, p+3}=\frac{1}{a_1}\tau^2,\qquad F^{2\, p+3}=F^{p+2\, p+3}=\frac{-1}{a_1}\tau^1.\nonumber
\end{eqnarray}
In the third case, we obtain the equation $\beta \, 0 = 1$, i.e. there is no solution.

In the first and second cases, we have $F^2=F_{\mu\nu}F^{\mu\nu}=0$.

\subsection{The case \texorpdfstring{$d_J=3$}{dJ=3}.}

Let us consider the case $d_J=3$, $x_J=y_J=0$. We have the following matrix $\Sigma^J$, and we are looking for a solution in the following form $\Psi^A$ (because $d_A=3$, $x_A=y_A=0$):
$$\Sigma^J=\left(
                 \begin{array}{ccc}
                  1 & 0 & 0 \\
                  0 & 1 & 0 \\
                  0 & 0 & 1 \\
                  \Zero & \Zero & \Zero \\ \hline
                  1 & 0 & 0 \\
                  0 & 1 & 0 \\
                  0 & 0 & 1 \\
                  \Zero & \Zero & \Zero \\
                 \end{array}
               \right),\qquad
\Psi^A=\left(
                 \begin{array}{ccc}
                  \beta & 0 & 0 \\
                  0 & \beta & 0 \\
                  0 & 0 & \beta \\
                  \Zero & \Zero & \Zero \\ \hline
                  \beta & 0 & 0 \\
                  0 & \beta & 0 \\
                  0 & 0 & \beta \\
                  \Zero & \Zero & \Zero \\
                 \end{array}
               \right),\qquad \beta\neq 0.
$$
We obtain the equation $\beta \, 0 = 1$, i.e. there is no solution.

\section{Summary for the case of arbitrary \texorpdfstring{$\R^{p,q}$}{Rpq}, \texorpdfstring{$p\geq 1$}{p>1}, \texorpdfstring{$q\geq 1$}{q>1}.}
\label{sect5}

Let we have the Yang--Mills equations (\ref{eq}) in pseudo-Euclidean space $\R^{p,q}$, $p\geq 1$, $q\geq 1$, with the known constant current $J^\mu=J^\mu_{\,\, a}\tau^a$, the unknown constant potential $A^\mu=A^\mu_{\,\, a}\tau^a$, and the corresponding unknown strength $F^{\mu\nu}$ (\ref{str}).
\begin{enumerate}
  \item For the matrix $J=(J^\mu_{\,\, a})$, we calculate three parameters $d_J$, $x_J$, $y_J$, which are uniquely determined, $d_J+x_J+y_J=\rank (J)$, $x_A$ is the number of positive eigenvalues of the matrix $J^\T \eta J$, $y_A$ is the number of negative eigenvalues of the matrix $J^\T\eta J$, $d_J=\rank (J)-\rank(J^\T \eta J)$ (see Theorem \ref{thHSVD}).
  \item We calculate the hyperbolic singular values $j_1$, $j_2$, $j_3$ of the matrix $J$, which are uniquely determined. The corresponding matrices $Q\in\OO(p,q)$, $P\in\SO(3)$ related to the HSVD are not uniquely determined.
  \item For the corresponding $p$, $q$, $d_J$, $x_J$, $y_J$, $j_1$, $j_2$, $j_3$, we get all solutions of (\ref{eq} and they are represented in Tables 2, 3, 4.
      \begin{enumerate}
      \item For each current $J$, the explicit form of these solutions in terms of the potential $A$ and the strength $F$ is given in specific coordinate system (which is determined by the matrix $Q\in\OO(p,q)$ related to the HSVD) and with specific gauge fixing (which is determined by the matrix $P\in\SO(3)$ related to the HSVD). The connection $S\in\SU(2)$, which we use in the gauge fixing, is the two-sheeted covering of the matrix $P\in\SO(3)$.
      \item We can obtain the corresponding solutions in the original system of coordinates and with the original gauge fixing (see Remarks 1 and 2), using $P$ and $Q$.
      \item We calculate the invariant $F^2$ for all constant solutions, which is important from a physical point of view. The expression $F^2$ is gauge invariant and is invariant under pseudo-orthogonal transformations of coordinates. It is present in the Lagrangian of the Yang--Mills field.
      \end{enumerate}
\end{enumerate}

We summarize the results for the case of arbitrary pseudo-Euclidean space $\R^{p,q}$ in Table 2 (the nondegenerate case with $d_J=0$ and $\rank (J)=3$), Table 3 (the degenerate case with $d_J=0$ and $\rank (J)<3$), and Table 4 (the degenerate cases with $d_J \neq 0$). We remind that we consider only the cases with positive numbers (hyperbolic singular values of $J$) $j_1$, $j_2$, $j_3$. In the columns ``$A$'', ``$F$'', and ``$F^2$'', we indicate the number of nonzero solutions in terms of $A$, $F$, and $F^2$, with references to the corresponding explicit formulas. In the column ``add.cond.'', the additional conditions on the hyperbolic singular values $j_1$, $j_2$, and $j_3$ are indicated.

In the particular case of $n$-dimensional Minkowski space $\R^{1,n-1}$, which is the most important case from a physical point of view, not all types of solutions are realized. The conditions on $p$ and $q$ for each solution are indicated in the first column of Tables 2, 3, and 4.

In Table 2, we use the notation 
\begin{eqnarray}
B^*:=\frac{(s^*)^2+2}{s^*}\approx 7.66486>2\sqrt{2},
\end{eqnarray}
where 
\begin{eqnarray}
s^*:=\sqrt{13+\sqrt{193-6^\frac{4}{3}}+\sqrt{386+6^\frac{4}{3}+\frac{5362}{193-6^\frac{4}{3}}}}\approx 7.39438.
\end{eqnarray}

\begin{table}[ht]\caption{ The case $\R^{p,q}$, $p\geq 1$, $q\geq 1$, $d_J=0$, $\rank(J)=3$.}
\begin{center}
  { \scriptsize \begin{tabular}{|c|c|c|c|c|c|c|c|c|c|c|}
  \hline
  % after \\: \hline or \cline{col1-col2} \cline{col3-col4} ...
$p, q$ & $d_J$ & $x_J$ & $y_J$ & add.cond. & $d_A$ & $x_A$ & $y_A$ & $A$ & $F$ & $F^2$ \\ \hline 
$p\geq 3, q\geq 1$ & 0 & 3 & 0 & $j_1=j_2=j_3$ & 0 & 3 & 0 &  1:(\ref{Ad030}) &  1:(\ref{Fd030}) &  1:(\ref{Fqd030})  \\
$p\geq 3, q\geq 1$ & 0 & 3 & 0 & $j_1=j_2>j_3$ & 0 & 3 & 0 &  2:(\ref{Ad030n}) &  2:(\ref{Fd030n}) & 2:(\ref{F21}) \\
$p\geq 3, q\geq 1$ & 0 & 3 & 0 & $j_3>j_1=j_2$ & 0 & 3 & 0 &  2:(\ref{Ad030n}) &  2:(\ref{Fd030n}) & 1:(\ref{F22}) \\
$p\geq 3, q\geq 1$ & 0 & 3 & 0 & all different $j_1, j_2, j_3$ & 0 & 3 & 0 &  2:(\ref{Ad030n}) & 2:(\ref{Fd030n}) & 2:(\ref{F23}) \\ \hline
$p\geq 1, q\geq 3$ & 0 & 0 & 3 & $j_1=j_2=j_3$ & 0 & 0 & 3 &  1:(\ref{Ad003}) &  1:(\ref{Fd003}) &  1:(\ref{Fqd003}) \\
$p\geq 1, q\geq 3$ & 0 & 0 & 3 & $j_1=j_2>j_3$ & 0 & 0 & 3 &  2:(\ref{Ad003n}) &  2:(\ref{Fd003n}) & 2:(\ref{F21}) \\
$p\geq 1, q\geq 3$ & 0 & 0 & 3 & $j_3>j_1=j_2$ & 0 & 0 & 3 &  2:(\ref{Ad003n}) &  2:(\ref{Fd003n}) & 1:(\ref{F22}) \\
$p\geq 1, q\geq 3$ & 0 & 0 & 3 & all different $j_1, j_2, j_3$ & 0 & 0 & 3 &  2:(\ref{Ad003n}) & 2:(\ref{Fd003n}) & 2:(\ref{F23}) \\ \hline
$p\geq 2, q\geq 1$ & 0 & 2 & 1 &$j_1=j_2<\frac{j_3}{2\sqrt{2}}, \frac{j_3}{j_1}=B^*$ & 0 & 2 & 1 &  6:(\ref{Ad0212}) & 6:(\ref{Fd021})&  3:(\ref{F2ww1}) \\
$p\geq 2, q\geq 1$ & 0 & 2 & 1 &$j_1=j_2<\frac{j_3}{2\sqrt{2}}, \frac{j_3}{j_1}\neq B^*$ & 0 & 2 & 1 &  6:(\ref{Ad0212}) & 6:(\ref{Fd021})&  4:(\ref{F2ww1}) \\
$p\geq 2, q\geq 1$ & 0 & 2 & 1 & $j_1=j_2=\frac{j_3}{2\sqrt{2}}$ & 0 & 2 & 1 &  4:(\ref{Ad0212}) & 4:(\ref{Fd021})& $F^2=0$ and 2:(\ref{F2ww3})\\
$p\geq 2, q\geq 1$ & 0 & 2 & 1 & $j_1=j_2>\frac{j_3}{2\sqrt{2}}$ & 0 & 2 & 1 &  2:(\ref{Ad0212}) & 2:(\ref{Fd021})& 2:(\ref{F2ww5}) \\
$p\geq 2, q\geq 1$ & 0 & 2 & 1 & $j_1\neq j_2$, $j_3^{\frac{2}{3}}>j_2^{\frac{2}{3}}+j_1^{\frac{2}{3}}$ & 0 & 2 & 1 & 6:(\ref{Ad0212}) &  6:(\ref{Fd021})& 2-6:(\ref{F2ww6}) \\
$p\geq 2, q\geq 1$ & 0 & 2 & 1 & $j_1\neq j_2$, $j_3^{\frac{2}{3}}=j_2^{\frac{2}{3}}+j_1^{\frac{2}{3}}$ & 0 & 2 & 1 &  4:(\ref{Ad0212}) & 4:(\ref{Fd021})& $F^2=0$ and 3:(\ref{F2ww7}) \\
$p\geq 2, q\geq 1$ & 0 & 2 & 1 & $j_1\neq j_2$, $j_3^{\frac{2}{3}}<j_2^{\frac{2}{3}}+j_1^{\frac{2}{3}}$ & 0 & 2 & 1 &  2:(\ref{Ad0212}) & 2:(\ref{Fd021})& 2:(\ref{F2ww8}) \\ \hline
$p\geq 1, q\geq 2$ & 0 & 1 & 2 &$j_3=j_1<\frac{j_2}{2\sqrt{2}}, \frac{j_2}{j_3}=B^*$ & 0 & 1 & 2 &  6:(\ref{Ad0122}) & 6:(\ref{Fd012})& 3:(\ref{F2www1}) \\
$p\geq 1, q\geq 2$ & 0 & 1 & 2 &$j_3=j_1<\frac{j_2}{2\sqrt{2}}, \frac{j_2}{j_3}\neq B^*$ & 0 & 1 & 2 &  6:(\ref{Ad0122}) & 6:(\ref{Fd012})& 4:(\ref{F2www1}) \\
$p\geq 1, q\geq 2$ & 0 & 1 & 2 & $j_3=j_1=\frac{j_2}{2\sqrt{2}}$ & 0 & 1 & 2 &  4:(\ref{Ad0122}) & 4:(\ref{Fd012})& $F^2=0$ and 2:(\ref{F2www2})\\
$p\geq 1, q\geq 2$ & 0 & 1 & 2 & $j_3=j_1>\frac{j_2}{2\sqrt{2}}$ & 0 & 1 & 2 &  2:(\ref{Ad0122}) & 2:(\ref{Fd012})& 2:(\ref{F2www3}) \\
$p\geq 1, q\geq 2$ & 0 & 1 & 2 & $j_3\neq j_1$, $j_2^{\frac{2}{3}}>j_3^{\frac{2}{3}}+j_1^{\frac{2}{3}}$ & 0 & 1 & 2 & 6:(\ref{Ad0122}) &  6:(\ref{Fd012})& 2-6:(\ref{F2www4}) \\
$p\geq 1, q\geq 2$ & 0 & 1 & 2 & $j_3\neq j_1$, $j_2^{\frac{2}{3}}=j_3^{\frac{2}{3}}+j_1^{\frac{2}{3}}$ & 0 & 1 & 2 & 4:(\ref{Ad0122}) &  4:(\ref{Fd012})& $F^2=0$ and 3:(\ref{F2www5})\\
$p\geq 1, q\geq 2$ & 0 & 1 & 2 & $j_3\neq j_1$, $j_2^{\frac{2}{3}}<j_3^{\frac{2}{3}}+j_1^{\frac{2}{3}}$ & 0 & 1 & 2 & 2:(\ref{Ad0122}) &  2:(\ref{Fd012})& 2:(\ref{F2www6}) \\ \hline
\end{tabular}}
\end{center}
\end{table}

\begin{table}[ht]\caption{ The case $\R^{p,q}$, $p\geq 1$, $q\geq 1$, $d_J=0$, $\rank(J)<3$.}
\begin{center}
  \begin{tabular}{|c|c|c|c|c|c|c|c|c|c|c|c|}
  \hline
  % after \\: \hline or \cline{col1-col2} \cline{col3-col4} ...
$p, q$ & $d_J$ & $x_J$ & $y_J$ & add.cond. & $d_A$ & $x_A$ & $y_A$ & $A$ & $F$ & $F^2$ \\ \hline 
$p\geq 2, q\geq 1$ & 0 & 2 & 0 & & 0 & 2 & 0 & 1:(\ref{Ad020}) & 1:(\ref{Fd020}) &1:(\ref{Fqd020}) \\ \hline
$p\geq 1, q\geq 2$ & 0 & 0 & 2 & & 0 & 0 & 2 & 1:(\ref{Ad002}) & 1:(\ref{Fd002}) & 1:(\ref{Fqd002})\\ \hline
$p\geq 1, q\geq 1$ & 0 & 1 & 1 & & 0 &1  &1  & 1:(\ref{Ad011}) & 1:(\ref{Fd011}) & 1:(\ref{Fqd011}) \\
$p\geq 2, q\geq 2$ & 0 & 1 & 1 & $j_1 = j_2$ & 0 & 2 & 2 & 1:(\ref{Ad011a})  & 1:(\ref{Fd011a})& 1:(\ref{Fqd011a})\\
$p\geq 2, q\geq 1$ & 0 & 1 & 1 & $j_2>j_1$ & 0 & 2 & 1 & 4:(\ref{Ad011b}) & 4:(\ref{Fd011b})& 2:(\ref{Fqd011b})\\
$p\geq 1, q\geq 2$ & 0 & 1 & 1 & $j_1>j_2$ & 0 & 1 & 2 & 4:(\ref{Ad011d}) & 4:(\ref{Fd011d}) & 2:(\ref{Fqd011d})\\ \hline
$p\geq 1, q=1$ & 0 & 1 & 0 & & & & & $\varnothing$ & $\varnothing$ & $\varnothing$ \\
$p\geq 1, q\geq 2$ & 0 & 1 & 0 & & 0  & 1 & 2 & 4:(\ref{Ad010}) &4:(\ref{Fd010}) & 1:(\ref{Fqd010}) \\ \hline
$p= 1, q\geq 1$ & 0 & 0 & 1 & & & & & $\varnothing$ & $\varnothing$ & $\varnothing$  \\
$p\geq 2, q\geq 1$ & 0 & 0 & 1 & & 0 & 2 & 1 & 4:(\ref{Ad001}) & 4:(\ref{Fd001})& 1:(\ref{Fqd001}) \\ \hline
$p\geq 1, q\geq 1$ & 0 & 0 & 0 &  & 0 & 0 & 0 & $A=0$ & $F=0$ & $F^2=0$  \\
$p\geq 1, q\geq 1$ & 0 & 0 & 0 &  & 0 & 1 & 0 & $\infty$:(\ref{Ad000c}) & $F=0$ & $F^2=0$  \\
$p\geq 1, q\geq 1$ & 0 & 0 & 0 &  & 0 & 0 & 1 & $\infty$:(\ref{Ad000d}) & $F=0$ & $F^2=0$  \\
$p\geq 1, q\geq 1$ & 0 & 0 & 0 &  & 1 & 0 & 0 & $1$:(\ref{Ad000}) & $F=0$ & $F^2=0$  \\
$p\geq 2, q\geq 2$ & 0 & 0 & 0 &  & 2 & 0 & 0 & $1$:(\ref{Ad000a}) & $1$:(\ref{Fd000a}) & $F^2=0$  \\
$p\geq 3, q\geq 3$ & 0 & 0 & 0 &  & 3 & 0 & 0 & $1$:(\ref{Ad000b}) & $1$:(\ref{Fd000b}) & $F^2=0$  \\  \hline
\end{tabular}
\end{center}
\end{table}

\begin{table}[ht]\caption{ The case $\R^{p,q}$, $p\geq 1$, $q\geq 1$, $d_J\neq 0$.}
\begin{center}
  \begin{tabular}{|c|c|c|c|c|c|c|c|c|c|c|}
  \hline
  % after \\: \hline or \cline{col1-col2} \cline{col3-col4} ...
$p, q$ & $d_J$ & $x_J$ & $y_J$ & add.cond. & $d_A$ & $x_A$ & $y_A$ & $A$ & $F$ & $F^2$ \\ \hline
$p\geq 3, q\geq 1$ & 1 & 2 & 0 &  & 1 & 2 & 0 &  1:(\ref{Ad11}) &  1:(\ref{Fd11}) &  1:(\ref{Fqd11}) \\ \hline
$p\geq 1, q\geq 3$ & 1 & 0 & 2 &  & 1 & 0 & 2 &  1:(\ref{Ad12}) &  1:(\ref{Fd12}) &  1:(\ref{Fqd12}) \\ \hline
$p\geq 2, q\geq 2$ & 1 & 1 & 1 & $j_1=j_2$ & & &  &$\varnothing$ & $\varnothing$ & $\varnothing$ \\
$p\geq 2, q\geq 2$ & 1 & 1 & 1 & $j_1\neq j_2$ & 1 & 1 & 1 &  1:(\ref{Ad13}) &  1:(\ref{Fd13}) &  1:(\ref{Fqd13}) \\ \hline
$p\geq 2, q\geq 1$ & 1 & 1 & 0 & &  &  &  & $\varnothing$ & $\varnothing$ & $\varnothing$  \\ \hline
$p\geq 1, q\geq 2$ & 1 & 0 & 1 & &  &  &  & $\varnothing$ &$\varnothing$ &$\varnothing$ \\ \hline
$p=1, q=1$         & 1 & 0 & 0 &  &  &  &  &  $\varnothing$ & $\varnothing$ & $\varnothing$   \\
$p\geq 2, q\geq 1$ & 1 & 0 & 0 &  & 1 & 1 & 0 &  $\infty$:(\ref{Ad14}) & $\infty$:(\ref{Fd14}) & $F^2=0$   \\
$p\geq 1, q\geq 2$ & 1 & 0 & 0 &  & 1 & 0 & 1 &  $\infty$:(\ref{Ad15}) & $\infty$:(\ref{Fd15}) & $F^2=0$   \\ \hline
$p\geq 3, q\geq 2$ & 2 & 1 & 0 & &  &  &  & $\varnothing$ & $\varnothing$ & $\varnothing$ \\ \hline
$p\geq 2, q\geq 3$ & 2 & 0 & 1 & &  &  &  & $\varnothing$  &$\varnothing$ & $\varnothing$ \\ \hline
$p=2, q=2$         & 2 & 0 & 0 & &  &  &  & $\varnothing$ & $\varnothing$ & $\varnothing$  \\
$p\geq 3, q\geq 2$ & 2 & 0 & 0 & & 2 & 1 & 0 & $\infty$:(\ref{Ad21})& $\infty$:(\ref{Fd21})& $F^2=0$  \\
$p\geq 2, q\geq 3$ & 2 & 0 & 0 &  & 2 & 0 & 1 & $\infty$:(\ref{Ad22})& $\infty$:(\ref{Fd22})& $F^2=0$  \\ \hline
$p\geq 3, q\geq 3$ & 3 & 0 & 0 & &  &  &  & $\varnothing$ & $\varnothing$ & $\varnothing$  \\ \hline
\end{tabular}
\end{center}
\end{table}

\newpage
\section{Conclusions.}
\label{sect6}

The main result of this paper is the presentation of all constant solutions of the Yang--Mills equations with $\SU(2)$ gauge symmetry for an arbitrary constant current in pseudo-Euclidean space of arbitrary finite dimension and signature. All solutions in terms of the potential $A$, the strength $F$, and the invariant $F^2$ with references to the corresponding explicit formulas are presented in Tables 2, 3, and 4. Using the invariance of the Yang--Mills equations under the pseudo-orthogonal transformations of coordinates and gauge invariance, we choose a specific system of coordinates and a specific gauge fixing for each constant current and obtain all constant solutions of the Yang--Mills equations in this system of coordinates with this gauge fixing, and then in the original system of coordinates with the original gauge fixing (see Remarks 1 and 2 and the discussion in Section \ref{sect5}). We prove that the number of constant solutions of the Yang--Mills equations in terms of the strength $F$ depends on the parameters $d_J$, $x_J$, $y_J$ and the hyperbolic singular values of the matrix of current $J$. The explicit form of all solutions
and the invariant $F^2$ can always be written using the hyperbolic singular values of the matrix~$J$. Note that sometimes we have solutions with different values of the strength $F$, but the same invariant $F^2$ (compare the columns ``$F$'' and ``$F^2$'' in Tables 2, 3, and 4). These and other facts require further study and comparison of the constant solutions presented in this paper with other known solutions.

%!!!!! We have solution with $A\neq 0$, $F=F^2=0$, solutions with $A\neq 0$, $F\neq 0$, $F^2=0$, solutions $A_1\neq A_2$ for one $J$ with different $F_1\neq F_2$, and different $F^2_1=F_2^2$ or the same $F_1^2\neq F^2_2$. The situation is more complicated than in the case of Euclidean space $\R^n$ (see \cite{Shirokov1}).

We can consider nonconstant solutions of Yang--Mills equations in the form of series of perturbation theory using all constant solutions as a zeroth approximation. For the first approximation, the problem reduces to solving the system of linear partial differential equations with constant coefficients. For the second and subsequent approximations, the problem reduces to solving the systems of linear partial differential equations with variable coefficients. These facts can help us to give a local classification of all (nonconstant) solutions of the classical $\SU(2)$ Yang--Mills equations with an arbitrary non-Abelian current. The results of this paper are new and can be used to solve some problems in particle physics, in particular, to describe physical vacuum \cite{Actor, Gr, Ja, nian}. In this paper, we discuss mathematical structures and constructions. Relating the proposed mathematical constructions to real world objects goes beyond the scope of this investigation. %We plan to obtain classification and explicit form of all constant solutions of the Yang--Mills--Dirac equations using the results of this paper for the case of Minkowski space. Another task is to obtain results on all constant solutions of the Yang--Mills equations in the case of the Lie group $\SU(3)$ or, more generally, the Lie group $\SU(N)$. The same problems can be considered on curved manifolds. 
An interesting task is to generalize the results of this paper to the case of the Lie group $\SU(3)$ or, more generally, the Lie group $\SU(N)$. The same problems can be considered on curved manifolds.

\section*{Acknowledgements.}

The author is grateful to M.~O.~Katanaev, N.~G.~Marchuk, G.~Raikov, V.~Rosenhaus, D.~Vassiliev, and A.~V.~Zotov for useful comments and discussions.

This work is supported by the Russian Science Foundation (project 21-71-00043), https://rscf.ru/en/project/21-71-00043/.

\section*{Conflict of interest}
This work does not have any conflicts of interest.

\section*{Data availability statement}

Data sharing not applicable to this article as no datasets were generated or analysed during the current study.

%\newpage

\section*{Appendix A: Proof of Lemma \ref{thSh2}.}

The first four cases of Lemma \ref{thSh2} are easily verified.

(v) Let us consider the case $j_1=0$, $j_2\neq 0$, $j_3 \neq 0$. We have
$$b_1 ((b_2)^2-(b_3)^2)=0,\quad b_2((b_1)^2-(b_3)^2)=j_2,\quad b_3 ((b_1)^2+(b_2)^2)=j_3.$$
We obtain $b_1=0$ or $b_2=\pm b_3$. In the case $b_1=0$, we get the unique solution
\begin{eqnarray}
b_1=0,\qquad b_2=-\sqrt[3]{\frac{j_3^2}{j_2}},\qquad b_3=\sqrt[3]{\frac{j_2^2}{j_3}}.\label{tt1}
\end{eqnarray}
In the case $b_2=b_3$, we get
$$b_3((b_1)^2-(b_3)^2)=j_2,\qquad b_3((b_1)^2+(b_3)^2)=j_3.$$
Adding and subtracting both sides of these equations, we obtain
$$2 b_3 (b_1)^2=j_2+j_3,\qquad 2 (b_3)^3=j_3-j_2.$$
If $j_2=j_3$, then $b_3=0$, $j_2=-j_3$, $j_2=j_3=0$, and we obtain a contradiction. If $j_2=-j_3$, then $b_1=0$ or $b_3=0$. If $b_1=0$, then $b_2=b_3=\sqrt[3]{j_3}$. We already have this solution (\ref{tt1}). If $b_3=0$, then $j_3-j_2=0$, $j_2=j_3=0$, and we obtain a contradiction. In the case
\begin{eqnarray}
j_2\neq \pm j_3,\quad \frac{j_2+j_3}{b_3}\geq 0,\label{tt2}
\end{eqnarray}
we get the solutions
$$b_3=b_2=\sqrt[3]{\frac{j_3-j_2}{2}},\qquad b_1=\pm\sqrt[2]{\frac{j_3+j_2}{2b_3}}.$$
Note that (\ref{tt2}) is equivalent to $j_3>j_2$.

In the case $b_2=-b_3$, we get
$$-b_3((b_1)^2-(b_3)^2)=j_2,\qquad b_3((b_1)^2+(b_3)^2)=j_3.$$
Adding and subtracting both sides of these equations, we obtain
$$2 (b_3)^3=j_2+j_3,\qquad 2 b_3 (b_1)^2=j_3-j_2.$$
If $j_2=-j_3$, then $b_3=0$, $j_2=j_3$, $j_2=j_3=0$, and we obtain a contradiction. If $j_2=j_3$, then $b_1=0$ or $b_3=0$. If $b_1=0$, then $b_2=-b_3=-\sqrt[3]{j_2}$. We already have this solution (\ref{tt1}). If $b_3=0$, then $j_2+j_3=0$, $j_2=j_3=0$, and we obtain a contradiction. In the case
\begin{eqnarray}
j_2\neq \pm j_3,\quad \frac{j_3-j_2}{b_3}\geq 0,\label{tt3}
\end{eqnarray}
we get the solutions
$$b_3=-b_2=\sqrt[3]{\frac{j_3+j_2}{2}},\qquad b_1=\pm\sqrt{\frac{j_3-j_2}{2b_3}}.$$
Note that (\ref{tt3}) is equivalent to $j_3>j_2$.

We calculate $K$ for these 4 solutions:
$$K=(b_1b_2b_3)^{\frac{2}{3}}=(\sqrt{\frac{j_3 \pm j_2}{2b_3}}b_3^2)^{\frac{2}{3}}=(\sqrt{\frac{(j_3 \pm j_2)b_3^3}{2}})^{\frac{2}{3}}=\sqrt[3]{\frac{(j_3+j_2)(j_3-j_2)}{4}}=\sqrt[3]{\frac{j_3^2-j_2^2}{4}}.$$

(vi) Let us consider the case of all positive $j_1>0$, $j_2>0$, $j_3>0$. We use the following change of variables:
$$x=b_1\neq 0,\qquad y=\frac{b_2}{b_1}\neq 0,\qquad z=\frac{b_3}{b_1}\neq 0.$$
We obtain
\begin{eqnarray}
 j_1=x^3(y^2-z^2),\quad
 j_2=y x^3(1-z^2),\quad
 j_3=z x^3(1+y^2).
\end{eqnarray}
Using notation
$$A=\frac{ j_2}{ j_1}>0,\qquad B=\frac{ j_3}{ j_1}>0,$$
we get the system for two variables $y$ and $z$:
\begin{eqnarray}y(1-z^2)=A(y^2-z^2),\qquad z(1+y^2)=B(y^2-z^2).\label{2ur0}\end{eqnarray}
For the variable $x=b_1$, we have
\begin{eqnarray}
b_1=x=\sqrt[3]{\frac{ j_1}{(y^2-z^2)}}.\label{x0}
\end{eqnarray}
From the first equation (\ref{2ur0}), we obtain
\begin{eqnarray}
(A-y)z^2=y(Ay-1).\label{670}
\end{eqnarray}
Let us consider two cases: $A=1$ and $A\neq 1$.

In the case $A=1$, we can rewrite $(1-y)z^2=y(y-1)$ in the form $(1-y)(y+z^2)=0$. If $y=1$, then using the second equation (\ref{2ur0}), we get $2z=B(1-z^2)$, i.e. $z^2+\frac{2}{B}z-1=0$. We have $z=\frac{-1\pm\sqrt{1+B^2}}{B}$. If $y=-z^2$, then we substitute this expression into the second equation (\ref{2ur0}) and obtain $z(1+z^4)=B(z^4-z^2)$, $z^5-Bz^4+Bz^2+z=0$. Dividing both sides by $z^3$ and using the notation $s=z-\frac{1}{z}$, we get $s^2-Bs+2=0$, $s=\frac{B\pm\sqrt{B^2-8}}{2}$.  We have $z^2-sz-1=0$ and obtain $z=\frac{s\pm\sqrt{s^2+4}}{2}$ for each $s$. Thus in the case $B\geq 2\sqrt{2}$, we have 2 or 4 additional solutions.

The results for the case $A=1$ can be summarized as follows.

If $j_1= j_2>0$, $ j_3>0$, then we have 2 solutions
$$b_{1\pm}=b_{2\pm}=\sqrt[3]{\frac{ j_1}{(1-z_{\pm}^2)}},\quad b_{3\pm}=z_{\pm} b_{1\pm},\quad z_{\pm}=\frac{-1\pm\sqrt{1+B^2}}{B},\quad B=\frac{ j_3}{ j_1},$$
and in the case $B=2\sqrt{2}$ we have 2 additional solutions ($s^\pm=s$ in this case), and in the case $B>2\sqrt{2}$ we have 4 additional solutions
\begin{eqnarray}
b_{1\pm}^\pm=\sqrt[3]{\frac{ j_1}{(w_{\pm}^\pm)^4-(w_{\pm}^\pm)^2}},\quad b_{2\pm}^\pm=-(w_{\pm}^\pm)^2b_{1\pm}^\pm,\quad b_{3\pm}^\pm=w_{\pm}^\pm b_{1\pm}^\pm,\nonumber\\
w_{\pm}^\pm=\frac{s^{\pm}\pm\sqrt{(s^{\pm})^2+4}}{2},\quad s^\pm=\frac{B\pm\sqrt{B^2-8}}{2},\quad B=\frac{j_3}{j_1}.\nonumber
\end{eqnarray}

Using $z_+ z_-=-1$, $\frac{z_\pm}{1-z_\pm^2}=\frac{B}{2}$, $w_+^\pm w_-^\pm=-1$, $\frac{w_{\pm}^\pm}{(w_{\pm}^\pm)^2-1}=\frac{1}{s^\pm}$, we get the following conserved quantity for the first pair of solutions
$$K:=-b_{1+}b_{1-}=-b_{2+}b_{2-}=b_{3+}b_{3-}= (b_{1+}b_{2+}b_{3+})^{\frac{2}{3}}=(b_{1-}b_{2-}b_{3-})^{\frac{2}{3}}= (\frac{B j_1}{2})^\frac{2}{3},$$
the following conserved quantity for the other two pairs of solutions
$$K:=-b_{1+}^\pm b_{1-}^\pm=-b_{2+}^\pm b_{2-}^\pm=b_{3+}^\pm b_{3-}^\pm= (b_{1+}^\pm b_{2+}^\pm b_{3+}^\pm)^{\frac{2}{3}}=(b_{1-}^\pm b_{2-}^\pm b_{3-}^\pm)^{\frac{2}{3}}= (\frac{ j_1}{s^\pm})^\frac{2}{3}.$$
We can rewrite solutions in the following form
$$b_{1\pm}=b_{2\pm}=\sqrt[3]{\frac{ j_3}{2z_{\pm}}},\qquad b_{3\pm}=z_{\pm} b_{1\pm},\qquad z_{\pm}=\frac{- j_1\pm\sqrt{ j_1^2+ j_3^2}}{ j_3},$$
\begin{eqnarray}
b_{1\pm}^\pm=\frac{1}{w_{\pm}^\pm}b_{3}^\pm,\qquad b_{2\pm}^\pm=-w_{\pm}^\pm b_{3}^\pm,\qquad b_{3}^\pm=b_{\pm}^\pm=\sqrt[3]{\frac{ j_1}{s^\pm}},\nonumber\\
w_{\pm}^\pm=\frac{s^{\pm}\pm\sqrt{(s^{\pm})^2+4}}{2},\qquad s^{\pm}=\frac{ j_3\pm\sqrt{ j_3^2-8 j_1^2}}{2 j_1}.\nonumber
\end{eqnarray}
In the second case, we have also $b_{3+}^\pm=b_{3-}^\pm$, $b_{1\pm}^\pm=b_{2\mp}^\pm$.

Let us consider the case $A\neq 1$. We have $A\neq y$. Really, suppose $A=y$. Since (\ref{670}), we get $A=1$, i.e. a contradiction. Using (\ref{670}), we get
\begin{eqnarray}
z^2=\frac{y(Ay-1)}{A-y}\label{y20}
\end{eqnarray}
Since (\ref{2ur0}), we get
\begin{eqnarray}
z^2(1+y^2)^2=B^2(y^2-z^2)^2.\label{rr4}
\end{eqnarray}
We require that expressions $z$ and $y^2-z^2$ are of the same sign to obtain equivalent transformation. We have $y^2-z^2=y^2-\frac{y(Ay-1)}{A-y}=\frac{y(1-y^2)}{A-y}$, i.e. $z=\lambda \sqrt{\frac{y(Ay-1)}{A-y}}$, where $\lambda=\sign(\frac{y(1-y^2)}{A-y})$. We also need the following condition
$\frac{y(Ay-1)}{A-y}\geq 0$. This condition is satisfied automatically if we take $y$ from (\ref{345}).
We substitute (\ref{y20}) into (\ref{rr4}) and get
\begin{eqnarray}
\frac{y(Ay-1)}{A-y}(1+y^2)^2=B^2(y^2-\frac{y(Ay-1)}{A-y})^2.\label{345}
\end{eqnarray}
We have
$$\frac{y(Ay-1)(1+y^2)^2}{A-y}=\frac{B^2(y-y^3)^2}{(A-y)^2},\quad (Ay-1)(1+y^2)^2(A-y)=B^2y(1-y^2)^2,$$
$$Ay^6+(B^2-A^2-1)y^5+3Ay^4-2(A^2+B^2+1)y^3+3Ay^2+(B^2-A^2-1)y+A=0.$$
Dividing both sides by $y^3$ and using the notation $t=y+\frac{1}{y}$ ($t^2=y^2+\frac{1}{y^2}+2$, $t^3=y^3+3y+3\frac{1}{y}+\frac{1}{y^3}$), we get
\begin{eqnarray}
At^3+(B^2-A^2-1)(t^2-2)-2(A^2+B^2+1)&=&0,\nonumber\\
At^3+(B^2-A^2-1)t^2-4B^2&=&0.\nonumber
\end{eqnarray}
We obtain the following cubic equation
\begin{eqnarray}
f(t):=t^3+(\frac{B^2}{A}-A-\frac{1}{A})t^2-4\frac{B^2}{A}=0.\label{kub0}
\end{eqnarray}
Since $t=y+\frac{1}{y}$, we are only interested in the solutions $t\geq 2$ or $t\leq -2$ of this equation.

Let us find extrema of the function $f(t)$: $f'(t)=3t^2+2t(\frac{B^2}{A}-A-\frac{1}{A})=0$. The function has one extremum at the point $t=0$, which is $f(0)=-\frac{4B^2}{A}<0$, and another extrema at the point $T=-\frac{2(B^2-A^2-1)}{3A}$, which is $f(T)=\frac{4(B^2-A^2-1)^3}{27A^3}-\frac{4B^2}{A}$.

Also we have $f(-\infty)=-\infty$, $f(-2)=\frac{-4(A+1)^2}{A}<0$, $f(2)=\frac{-4(A-1)^2}{A}<0$, $f(A+\frac{1}{A})=\frac{B^2(A^2-1)^2}{A^3}>0$, $f(+\infty)=+\infty$. This means that the equation (\ref{kub0}) always has one solution $2<t_1<A+\frac{1}{A}$.

Suppose $-2<T$ and $f(T)>0$. These conditions take the form $(B^2-A^2-1)<3A$ and $(B^2-A^2-1)^3>27A^2 B^2$. Raising the first inequality in the third degree and comparing with the second one, we get $27A^2 B^2<27A^3$, i.e. $B^2<A$. On the other hand, the second inequality implies $B^2>A^2+1$. We obtain $A^2+1<A$, $A^2-A+1<0$, i.e. a contradiction.

This means that if the cubic equation has another solution except the first one $t_1>2$, then it is less than $-2$. Thus all solutions of (\ref{kub0}) are suitable for us. The number of these solutions is from $1$ to $3$ and it depends on $A$ and $B$. We have two solutions in the case $f(T)=0$, i.e. $(B^2-A^2-1)^3=27A^2 B^2$. We have one solution in the case $(B^2-A^2-1)^3<27A^2 B^2$, we have three solutions in the case $(B^2-A^2-1)^3>27A^2 B^2$. We can get the explicit form of these solutions using the Cardano formulas.

Let us prove that $(B^2-A^2-1)^3>27A^2 B^2$ is equivalent to $B^{\frac{2}{3}}>A^{\frac{2}{3}}+1$. Considering the equation $(B^2-A^2-1)^3-27A^2 B^2=0$ as a cubic equation for $B^2$, we can find one of the roots $B^2=(A^{\frac{2}{3}}+1)^3$. We get
$$
(B^2-A^2-1)^3-27A^2 B^2=(B^2-(a+1)^3)(B^4+B^2((a+1)^3-3(1+a^3))+(a^2-a+1)^3),
$$
where $a:=A^{\frac{2}{3}}$. This equation has no other roots because the discriminant for the corresponding quadratic equation is negative: $D=-27a^2(a-1)^2<0$ (we use $0<a=A^{\frac{2}{3}}\neq 1$).

Finally, in the case  $A\neq 1$, we have the solutions
\begin{eqnarray}
b_{1\pm}=\sqrt[3]{\frac{ j_1}{(y_{\pm}^2-z_{\pm}^2)}},\qquad b_{2\pm}=y_{\pm} b_{1\pm},\qquad b_{3\pm}=z_{\pm} b_{1\pm},\nonumber\\ z_{\pm}=\lambda_{\pm}\sqrt{\frac{y_\pm(Ay_\pm-1)}{A-y_\pm}},\qquad y_{\pm}=\frac{t_k\pm\sqrt{t_k^2-4}}{2},\nonumber
\end{eqnarray}
where $\lambda_\pm=\sign(\frac{y_\pm(1-y_\pm^2)}{A-y_\pm})$ and $t_k=t_k(A,B)$, $k=1, 2, 3$, are solutions of (\ref{kub0}).
Using $\frac{y_{\pm}z_{\pm}}{y_{\pm}^2-z_{\pm}^2}=\frac{B y_\pm}{1+y_\pm^2}=\frac{B}{t_k}$, $y_+ y_-=1$, $\lambda_+ \lambda_-=-1$, $z_+ z_-=-1$, we obtain the following conserved quantity
$$K:=-b_{1+}b_{1-}=-b_{2+}b_{2-}=b_{3+}b_{3-}= (b_{1+}b_{2+}b_{3+})^{\frac{2}{3}}=(b_{1-}b_{2-}b_{3-})^{\frac{2}{3}}= (\frac{ j_3}{t_k})^\frac{2}{3},$$
and can rewrite the solutions in the form
\begin{eqnarray}
b_{1\pm}=\sqrt[3]{\frac{ j_3}{t_k y_\pm z_\pm}},\qquad b_{2\pm}=y_{\pm} b_{1\pm},\qquad b_{3\pm}=z_{\pm} b_{1\pm},\nonumber\\ z_{\pm}=\lambda_\pm\sqrt{\frac{y_\pm(Ay_\pm-1)}{A-y_\pm}},\qquad y_{\pm}=\frac{t_k\pm\sqrt{t_k^2-4}}{2},\nonumber
\end{eqnarray}
where $\lambda_\pm=\sign(\frac{y_\pm(1-y_\pm^2)}{A-y_\pm})$, and $t_k=t_k(A,B)$ are solutions of (\ref{kub0}). We have from $1$ to $3$ such numbers $t_k$ and from $2$ to $6$ solutions of the considered system of equations.

We verify $y_+ y_-=1$ by direct calculation. Let us verify $\lambda_+ \lambda_-=-1$ and $z_+ z_-=-1$. We have
\begin{eqnarray}
\!\!\!\lambda_+ \lambda_-&=&\sign(\frac{y_+(1-y_+^2)}{A-y_+})\sign(\frac{y_-(1-y_-^2)}{A-y_-})=\sign(\frac{y_+y_-(1-y_+^2)(1-y_-^2)} {(A-y_+)(A-y_-)})\nonumber\\
&=&\sign(\frac{1-y_-^2-y_+^2+(y_+y_-)^2}{A^2-A(y_-+y_+)+y_+y_-})=\sign(\frac{2-(y_-^2+y_+^2)} {A^2+1-A(y_++y_-)})\nonumber\\
&=&\sign(\frac{4-t_k^2}{A^2+1-At_k})=-1,\nonumber
\end{eqnarray}
because the numerator is negative ($|t_k|>2$) and the denominator  is positive ($t_k<A+\frac{1}{A}$). We use $f(A+\frac{1}{A})=\frac{B^2(A^2-1)^2}{A^3}>0$, and this means that the largest solution of the cubic equation is between $2$ and $A+\frac{1}{A}$.

We have
$$z_+ z_-=\lambda_+ \lambda_- \sqrt{\frac{y_+y_-(Ay_+-1)(Ay_--1)}{(A-y_+)(A-y_-)}}=-\sqrt{\frac{(A^2-A(y_++y_-)+1)}{A^2-A(y_++y_-)+1}}=-1.$$

The lemma is proved.

\section*{Appendix B: Proof of Lemma \ref{lemmaF222}.}

1) For the first type of solutions in Cases (vi) - (a), (b), (c) of Lemma \ref{thSh2}, using
$$b_{1\pm}=b_{2\pm},\qquad b_{3\pm}=z_{\pm}b_{1\pm},\qquad b_{1\pm}=\sqrt[3]{\frac{j_3}{2z_{\pm}}},\qquad K=(\frac{j_3}{2})^\frac{2}{3},$$
we obtain
\begin{eqnarray}
F_{\pm}^2&=&-\frac{1}{2}((b_{1\pm}b_{2\pm})^2-(b_{2\pm}b_{3\pm})^2-(b_{3\pm}b_{1\pm})^2)I_2= -\frac{1}{2}b_{1\pm}^4(1-2z_{\pm}^2)I_2\nonumber\\
&=&-\frac{1}{2}(\frac{j_3}{2z_{\pm}})^{\frac{4}{3}}(1-2z_{\pm}^2)I_2=\frac{K^2 (2z_{\pm}^2)-1}{2 z_{\pm}^{\frac{4}{3}}}I_2.\nonumber
\end{eqnarray}
Let us prove that $F^2_+ \neq F^2_-$ in this case. Suppose we have $F^2_+=F^2_-$, i.e.
$$
\frac{1-2z_+^2}{z_+^{\frac{4}{3}}}=\frac{1-2z_-^2}{z_-^{\frac{4}{3}}},\qquad
z_-^{\frac{4}{3}}-2z_+^2z_-^{\frac{4}{3}}=z_+^{\frac{4}{3}}-2z_-^2z_+^{\frac{4}{3}}.
$$
Using $z_+ z_-=-1$, we get
$$
z_-^{\frac{4}{3}}-2z_+^{\frac{2}{3}}=z_+^{\frac{4}{3}}-2z_-^{\frac{2}{3}},\qquad (z_-^{\frac{2}{3}}+1)^2=(z_+^{\frac{2}{3}}+1)^2,\qquad
(z_-^{\frac{2}{3}}+z_+^{\frac{2}{3}}+2)(z_-^{\frac{2}{3}}-z_+^{\frac{2}{3}})=0,
$$
which is not possible, because $z_+$, $z_-$ do not equal $\pm 1$.

We see that $F^2_{\pm}=0$ if and only if $z_{\pm}=\pm\frac{1}{\sqrt{2}}$. Using $2z=B(1-z^2)$, we get $B=\frac{j_3}{j_1}=2\sqrt{2}$, i.e. $F^2$ is zero only in Case (b) for this type of solutions.

If $\frac{j_3}{j_1}=2\sqrt{2}$, then $z_+=\frac{1}{\sqrt{2}}$, $z_-=-\sqrt{2}$. For $z_+=\frac{1}{\sqrt{2}}$, we obtain $F^2_+=0$. For $z_-=-\sqrt{2}$, we obtain
\begin{eqnarray}
F^2_-=\frac{3 K^2 }{2^{\frac{2}{3}} 2}I_2=\frac{3j_3^{\frac{4}{3}}}{8}I_2.\label{rr3}
\end{eqnarray}
2) For the second type of solutions in Cases (vi) - (a), (b), (c), using
$$c^\pm_{1\pm}=\frac{1}{w^\pm_{\pm}}c_3^\pm,\qquad c^\pm_{2\pm}=-w^\pm_{\pm}c^\pm_{3},\qquad c^\pm_{3\pm}=c^\pm_{3}=\sqrt[3]{\frac{j_1}{s^\pm}},\qquad K=(\frac{j_1}{s^\pm})^{\frac{2}{3}},$$
we obtain
$$
(F^\pm_{\pm})^2=-\frac{1}{2}((c^\pm_{1\pm}c^\pm_{2\pm})^2-(c^\pm_{2\pm}c^\pm_{3\pm})^2-(c^\pm_{3\pm}c^\pm_{1\pm})^2)I_2= -\frac{1}{2}(c^\pm_{3\pm})^4(1-(w^\pm_{\pm})^2-\frac{1}{(w^\pm_{\pm})^2})I_2$$
$$=-\frac{1}{2}(\frac{j_1}{s^\pm})^{\frac{4}{3}}(1-(w^\pm_{\pm})^2-\frac{1}{(w^\pm_{\pm})^2})I_2=-\frac{K^2}{2} (1-(w^\pm_{\pm})^2-\frac{1}{(w^\pm_{\pm})^2})I_2=\frac{K^2}{2} ((s^\pm)^2+1)I_2>0.$$
In the last equality, we use $w_{\pm}-\frac{1}{w_{\pm}}=s$, i.e. $w_{\pm}^2+\frac{1}{w_{\pm}^2}=s^2+2$.

We have $(F^\pm_+)^2=(F^\pm_-)^2$, because $(F^\pm_{\pm})^2$ does not depend on $w_{\pm}$.

Let us prove that $(F_+^+)^2=(F_-^+)^2\neq(F_+^-)^2=(F_-^-)^2$. Suppose we have $(F^+)^2=(F^-)^2$, i.e.
$$
\frac{1+(s^+)^2}{(s^+)^{\frac{4}{3}}}=\frac{1+(s^-)^2}{(s^-)^{\frac{4}{3}}},\qquad
(s^-)^{\frac{4}{3}}+(s^+)^2 (s^-)^{\frac{4}{3}}=(s^+)^{\frac{4}{3}}+(s^-)^2 (s^+)^{\frac{4}{3}}.
$$
Using $s^+ s^-=2$ (because $s^2-Bs+2=0$), we get
$$
(s^-)^{\frac{4}{3}}+2^{\frac{4}{3}}(s^+)^{\frac{2}{3}}=(s^+)^{\frac{4}{3}}+2^{\frac{4}{3}}(s^-)^{\frac{2}{3}},\qquad ((s^-)^{\frac{2}{3}}-2^{\frac{1}{3}})^2=((s^+)^{\frac{2}{3}}-2^{\frac{1}{3}})^2,$$
$$
((s^+)^{\frac{2}{3}}+(s^-)^{\frac{2}{3}}-2^{\frac{4}{3}})((s^+)^{\frac{2}{3}}-(s^-)^{\frac{2}{3}})=0,
$$
which is not possible, because $(s^+)^{\frac{2}{3}}+(s^-)^{\frac{2}{3}}\geq 2\sqrt{(s^+ s^-)^\frac{2}{3}}=2^{\frac{4}{3}}$ and
if $s^+=s^-$, then $s^+=s^-=\sqrt{2}$, $B=2\sqrt{2}$, and we obtain a contradiction.

In the particular case $B=\frac{j_3}{j_1}$, we obtain $s=s^\pm=\sqrt{2}$, $w_\pm=w^\pm_\pm=\frac{\sqrt{2}\pm\sqrt{6}}{2}$, and
$$(F_{\pm})^2=\frac{3}{2}(\frac{j_1}{\sqrt{2}})^\frac{4}{3}I_2=\frac{3 j_3^{\frac{4}{3}}}{2^{\frac{11}{3}}}I_2,$$
which does not coincide with (\ref{rr3}).

3) Let us show that $F^2$ can be the same for the first type of solutions $b_k$ and for the second type of solutions $c_k$ in Case (vi) - (a). Suppose
$$
\frac{j_3^{\frac{4}{3}}(2z^2-1)}{2^{\frac{4}{3}}z^{\frac{4}{3}}}=\frac{j_1^\frac{4}{3}(1+s^2)}{s^{\frac{4}{3}}}.
$$
Using Wolfram Mathematica 11.1, we solved the system
$$(2z^2-1)^3(Bs)^4=(1+s^2)^3(2z)^4,\quad Bz^2+2z-B=0,\quad s^2-Bs+2=0,\quad B>0.$$
This system of equation has the following unique solution
$$s^*=\sqrt{13+\sqrt{193-6^\frac{4}{3}}+\sqrt{386+6^\frac{4}{3}+\frac{5362}{193-6^\frac{4}{3}}}}\approx 7.39438,$$
$$B^*=\frac{(s^*)^2+2}{s^*}\approx 7.66486,\qquad z^*=\frac{-1+\sqrt{1+(B^*)^2}}{B^*}\approx 0.878009.$$
This means that if $B=B^*$, then $(F^+_{\pm})^2$ for $(s^+)^*=\frac{B^*+\sqrt{(B^*)^2-8}}{2}$ coincides with $F_+^2$ for $z_+^*=\frac{-1+\sqrt{1+(B^*)^2}}{B^*}$.

4) In Cases (vi) - (d), (e), (f), using
$$d_{2\pm}=y_{\pm}d_{1\pm},\qquad d_{3\pm}=z_{\pm}d_{1\pm},\qquad d_{1\pm}=\sqrt[3]{\frac{j_3}{t_k y_{\pm}z_{\pm}}},\qquad K=(\frac{j_3}{t_k})^{\frac{2}{3}},$$
we obtain
$$F^2_{\pm}=-\frac{1}{2}((d_{1\pm}d_{2\pm})^2-(d_{2\pm}d_{3\pm})^2-(d_{3\pm}d_{1\pm})^2)I_2= -\frac{1}{2}d_{1\pm}^4(y_{\pm}^2-z_{\pm}^2 - y_{\pm}^2 z_{\pm}^2)I_2
$$
\begin{eqnarray}
=-\frac{1}{2}(\frac{j_3}{t_k y_{\pm}z_{\pm}})^{\frac{4}{3}}(y_{\pm}^2-z_{\pm}^2 -y_{\pm}^2 z_{\pm}^2)I_2=
\frac{K^2( y_{\pm}^2 z_{\pm}^2-y_{\pm}^2+z_{\pm}^2)}{2(y_{\pm}z_{\pm})^{\frac{4}{3}}}I_2.\label{ty2}
\end{eqnarray}
Using $z_\pm^2=\frac{y_{\pm}(Ay_{\pm}-1)}{A-y_{\pm}}$, we get
\begin{eqnarray}
F^2_{\pm}&=&\frac{K^2( y_{\pm}^2 z_{\pm}^2-y_{\pm}^2+z_{\pm}^2)}{2(y_{\pm}z_{\pm})^{\frac{4}{3}}}I_2=
\frac{K^2((1+ y_{\pm}^2)\frac{y_{\pm}(Ay_{\pm}-1)}{A-y_{\pm}}-y_{\pm}^2)}{2y_{\pm}^{\frac{4}{3}}(\frac{y_{\pm}(Ay_{\pm}-1)}{A-y_{\pm}})^{\frac{2}{3}}} I_2\nonumber\\
&=&\frac{K^2(Ay_{\pm}^3-1)}{2y_{\pm}(A-y_{\pm})^{\frac{1}{3}}(Ay_{\pm}-1)^{\frac{2}{3}}}I_2.\label{ty1}
\end{eqnarray}
Let us prove that $F^2_+ \neq F^2_-$ in this case. Suppose we have $F^2_+=F^2_-$, i.e.
$$
\frac{(1-Ay_{+}^3)^3}{y^3_{+}(A-y_{+})(1-Ay_{+})^2}=\frac{(1-Ay_{-}^3)^3}{y^3_{-}(A-y_{-})(1-Ay_{-})^2}.
$$
Using $y_-=y_+^{-1}$, we get
$$y_+^3(A-y_+)(1-Ay_+)^2(1-\frac{A}{y_+^3})^3=(1-Ay_+^3)^3\frac{1}{y_+^3}(A-\frac{1}{y_+})(1-\frac{A}{y_+})^2,$$
$$(y_+^3-A)^3(1-Ay_+)=(1-Ay_+^3)^3(y_+-A),$$
$$(A^3-A)y_+^{10}+(1-A^4)y_+^9+3(A^3-A)y_+^6+3(A-A^3)y_+^4+(A^4-1)y_++(A-A^3)=0.$$
Dividing both sides of the equation by $y_+^5 (A^2-1)\neq 0$, we obtain
$$A(y_+^5-\frac{1}{y_+^5})-(1+A^2)(y_+^4-\frac{1}{y_+^4})+3A(y_+ -\frac{1}{y_+})=0.$$
Dividing both sides of the equation by $(y_+-\frac{1}{y_+})\neq 0$, we get
$$A(y_+^4+y_+^2+1+\frac{1}{y_+^2}+\frac{1}{y_+^4})-(1+A^2)(y_+^3+y_++\frac{1}{y_+}+\frac{1}{y_+^3})+3A=0.$$
Using $t=y_+ +\frac{1}{y_+}=y_+ + y_-$, we have
$$y_+^2+\frac{1}{y_+^2}=t^2-2,\quad y_+^3+\frac{1}{y_+^3}=t^3-3t,\quad y_+^4+\frac{1}{y_+^4}=t^4-4t^2+2,$$
and obtain
$$At^4-(1+A^3)t^3-3At^2+2(1+A^2)t+4A=0.$$
Dividing by $t\neq 0$, we get
$$A(t^2+\frac{4}{t^2})-(1+A^2)(t-\frac{2}{t})-3A=0.$$
Using $d:=t-\frac{2}{t}$, we have $t^2+\frac{4}{t^2}=d^2+4$ and obtain
$$Ad^2-(1+A^2)d+A=0,\qquad \mbox{i.e. $d=A$, $d=\frac{1}{A}$.}$$
If $d=A$, then
\begin{eqnarray}
t^2-At-2=0.\label{QQ1}
\end{eqnarray}
But it is in a contradiction with
\begin{eqnarray}
At^3+(B^2-A^2-1)t^2-4B^2=0.\label{QQ2}
\end{eqnarray}
Really, multiplying both sides of (\ref{QQ1}) by $At$, we get
\begin{eqnarray}
At^3-A^2t^2-2At=0.\label{QQ3}
\end{eqnarray}
From (\ref{QQ2}) and (\ref{QQ3}), we obtain
\begin{eqnarray}
(B^2-1)t^2+2At-4B^2=0.\label{QQ4}
\end{eqnarray}
From (\ref{QQ4}) and (\ref{QQ1}), we get
$$\frac{4B^2-2At}{B^2-1}=At+2,\quad tA(1+B^2)=2(1+B^2),\quad t=\frac{2}{A}.$$
Substituting $t=\frac{2}{A}$ into (\ref{QQ1}), we get $\frac{4}{A^2}=4$, i.e. a contradiction, because $A\neq \pm 1$.

If $d=A^{-1}$, then
\begin{eqnarray}
At^2-t-2A=0.\label{Q1}
\end{eqnarray}
But it is in a contradiction with
\begin{eqnarray}
At^3+(B^2-A^2-1)t^2-4B^2=0.\label{Q2}
\end{eqnarray}
Really, multiplying both sides of (\ref{Q1}) by $t$, we get
\begin{eqnarray}
At^3-t^2-2At=0.\label{Q3}
\end{eqnarray}
From (\ref{Q2}) and (\ref{Q3}), we obtain
\begin{eqnarray}
(B^2-A^2)t^2+2At-4B^2=0.\label{Q4}
\end{eqnarray}
From (\ref{Q4}) and (\ref{Q1}), we get
$$\frac{4B^2-2At}{B^2-A^2}=\frac{t+2A}{A},\quad t(A^2+B^2)=2A(A^2+B^2),\quad t=2A.$$
Substituting $t=2A$ into (\ref{Q1}), we get $4A(A^2-1)=0$, i.e. a contradiction, because $A\neq 0$, $A\neq \pm 1$.

5) Using (\ref{ty1}), we see that we have $F^2=0$ only if
\begin{eqnarray}
y=\sqrt[3]{\frac{1}{A}}.\label{y7}
\end{eqnarray}
Using (\ref{ty2}), we see that it is equivalent to
\begin{eqnarray}
y^2-z^2=y^2z^2.\label{70}
\end{eqnarray}
Substituting this expression into the second equation (\ref{2ur0}), we get $z(1+y^2)=By^2z^2$. Comparing this equation and $z^2(1+y^2)=y^2$, we obtain
\begin{eqnarray}
z=\sqrt[3]{\frac{1}{B}}.\label{z7}
\end{eqnarray}
Finally, substituting (\ref{y7}) and (\ref{z7}) into (\ref{70}), we get
$$\sqrt[3]{\frac{1}{A^2}}-\sqrt[3]{\frac{1}{B^2}}=\sqrt[3]{\frac{1}{A^2B^2}},$$
which is equivalent to $B^{\frac{2}{3}}=1+A^{\frac{2}{3}}$, i.e. $j_3^\frac{2}{3}=j_1^\frac{2}{3}+j_2^\frac{2}{3}$.

If $B^{\frac{2}{3}}=1+A^{\frac{2}{3}}$ (which is equivalent to $(B^2-A^2-1)=3A^\frac{2}{3}B^\frac{2}{3}$), then the cubic equation $At^3+(B^2-A^2-1)t^2-4B^2=0$ takes the form
$$At^3+3A^\frac{2}{3}B^\frac{2}{3}t^2-4B^2=0.$$
Dividing both sides of the equation by $A\neq 0$ and using $\beta:=\frac{B^\frac{2}{3}}{A^\frac{1}{3}}$, we get $t^3+3\beta t^2-4\beta^3=0$, which is equivalent to
$(t-\beta)(t+2\beta)^2=0$. We obtain $t=\beta=\frac{B^\frac{2}{3}}{A^\frac{1}{3}}=A^{\frac{1}{3}}+A^{-\frac{1}{3}}$ and $t=-2\beta=-2\frac{B^\frac{2}{3}}{A^\frac{1}{3}}=-2(A^{\frac{1}{3}}+A^{-\frac{1}{3}})$.

In the first case $t_1=A^{\frac{1}{3}}+A^{-\frac{1}{3}}$, using $t=y+y^{-1}$, we conclude that $y=A^\frac{1}{3}$, $A^{-\frac{1}{3}}$. If $y=A^\frac{1}{3}$, then $$z^2=\frac{y(Ay-1)}{A-y}=\frac{A^\frac{4}{3}-1}{A^\frac{2}{3}-1}=A^\frac{2}{3}+1=B^\frac{2}{3},$$
\begin{eqnarray}
\!\!\!\!\!\!\!\!\!\!\!\!\!\!\!\!\!\!F^2=\frac{j_3^\frac{4}{3}(y^2z^2+z^2-y^2)}{2 t^\frac{4}{3}y^\frac{4}{3}z^\frac{4}{3}}I_2=\frac{j_3^\frac{4}{3}(A^\frac{4}{3}+A^\frac{2}{3}+1)}{2 B^{\frac{4}{3}}}I_2= \frac{j_1^\frac{4}{3}+j_1^\frac{2}{3}j_2^\frac{2}{3}+j_2^\frac{4}{3}}{2}I_2> 0.\label{io}\end{eqnarray}
If $y=A^{-\frac{1}{3}}$, then
$$z^2=\frac{y(Ay-1)}{A-y}=\frac{A^\frac{2}{3}-1}{A^\frac{4}{3}-1}=\frac{1}{A^\frac{2}{3}+1}=B^{-\frac{2}{3}},$$
$$y^2z^2+z^2-y^2=A^{-\frac{2}{3}}B^{-\frac{2}{3}}+B^{-\frac{2}{3}}-A^{-\frac{2}{3}}= \frac{B^\frac{2}{3}-A^\frac{2}{3}-1}{A^\frac{2}{3}B^\frac{2}{3}}=0,\qquad F^2=0.$$
In the second case $t_2=-2(A^{\frac{1}{3}}+A^{-\frac{1}{3}})$, we get $y\neq A^{-\frac{1}{3}}$ and
$F_{\pm 2}^2\neq 0.$ Using Wolfram Mathematica 11.1, we verified that each of two cumbersome expressions $F_{\pm 2}^{2}$ does not coincide with $F_{-1}^2$ (\ref{io}).

%6) We do not know whether $F_{\pm k}^2$ and $F_{\pm m}^2$, $k, m= 1, 2, 3$ can coincide with each other for some $A$ and $B$. So, we have from 2 to 6 different $F^2$ in this case.

The lemma is proved.

\section*{References.}

\end{document}